\documentclass[12pt]{article}

\usepackage{xr-hyper}
\usepackage{amsmath, amsfonts, amssymb, amsthm, fullpage, color, bbm, enumerate, enumitem, titlesec, graphicx, epstopdf, multirow, array, mathtools}
\usepackage{subcaption}
\usepackage[T1]{fontenc}
\usepackage{lmodern}
\usepackage{pdflscape}
\usepackage{booktabs}
\usepackage{longtable}
\usepackage[title]{appendix}

\usepackage[onehalfspacing]{setspace}

\definecolor{darkred}{rgb}{0.5,0,0}

\titleformat*{\paragraph}{\sc}

\usepackage{hyperref}
\hypersetup{allbordercolors={1 1 1},allcolors=darkred,colorlinks=true}
\usepackage[capitalize,nameinlink,noabbrev]{cleveref}

\usepackage[font={small}]{caption}

\usepackage[round]{natbib}
\usepackage[runin]{abstract}

\newcolumntype{C}[1]{>{\centering\arraybackslash}p{#1}}

\theoremstyle{plain}
\newtheorem{asn}{Assumption}
\crefname{asn}{Assumption}{Assumptions}
\newtheorem{lem}{Lemma}
\crefname{lem}{Lemma}{Lemmas}
\newtheorem{thm}{Theorem}
\crefname{thm}{Theorem}{Theorems}
\newtheorem{prop}{Proposition}
\crefname{prop}{Proposition}{Propositions}
\newtheorem{cor}{Corollary}
\crefname{cor}{Corollary}{Corollaries}

\crefname{conj}{Conjecture}{Conjectures}
\newtheorem{defn}{Definition}
\crefname{defn}{Definition}{Definitions}
\newtheorem*{claim*}{Claim}
\newtheorem*{cor*}{Corollary}

\theoremstyle{definition}
\newtheorem{exm}{Example}
\crefname{exm}{Example}{Examples}
\newtheorem{remark}{Remark}
\crefname{remark}{Remark}{Remarks}

\newlist{asnitem}{enumerate}{1}
\setlist[asnitem,1]{label=(\roman*), ref=\theasn(\roman*), leftmargin=2em}
\crefname{asnitemi}{Assumption}{Assumptions}
\crefalias{asnitemi}{asn}
\makeatletter
\let\cref@old@resetby\cref@resetby
\def\cref@resetby#1#2{%
\let#2\relax%
\cref@ifstreq{#1}{asnitemi}{\def#2{asn}}{}%
\ifx#2\relax%
\cref@old@resetby{#1}{#2}%
\fi
}
\DeclareRobustCommand\crefnosort[1]{\begingroup\@cref@sortfalse\cref{#1}\endgroup}
\makeatother

\AtBeginDocument{}

\DeclareMathOperator*{\argmin}{argmin}

\DeclareMathOperator*{\ve}{vec}

\DeclareMathOperator*{\var}{Var}
\DeclareMathOperator*{\cov}{Cov}

\newcommand{\E}{E}
\DeclareMathOperator{\cum}{cum}
\newcommand{\dto}{\overset{d}{\to}}
\newcommand{\pto}{\overset{p}{\to}}
\newcommand{\1}{\mathbbm{1}}

\allowdisplaybreaks

\crefname{sappsec}{Supplemental Appendix}{Supplemental Appendices}
\crefname{sappsubsec}{Supplemental Appendix}{Supplemental Appendices}
\crefname{sappsubsubsec}{Supplemental Appendix}{Supplemental Appendices}
\crefname{appsec}{Appendix}{Appendices}

\defcitealias{AuclertBardoczyRognlieStraub2021}{ABRS}

\begin{document}

\title{Limited-Information Estimation \texorpdfstring{\\}{}of Heterogeneous Agent Models\thanks{Email: {\tt laura.liu@pitt.edu}, {\tt mikkelpm@uchicago.edu}, {\tt nt1847@princeton.edu}. We are grateful for comments from St\'{e}phane Bonhomme, Mikhail Golosov, Greg Kaplan, Christian Wolf, and seminar participants at Chicago, FRB Richmond, and Princeton. Plagborg-M{\o}ller acknowledges that this material is based upon work supported by the NSF under Grant {\#}2238049 and by the Alfred P.\ Sloan Foundation.}}
\author{\begin{tabular}{ccccc}
Laura Liu && Mikkel Plagborg-M{\o}ller && Nelson Matthew P. Tan \\
{\small University of Pittsburgh} && {\small University of Chicago} && {\small Princeton University}
\end{tabular}}
\date{\texorpdfstring{\bigskip}{ }August 13, 2026}
\maketitle

\begin{abstract}
We develop a method for estimating and testing a single block of a macroeconomic model with heterogeneous agents, without placing assumptions on the structure of the rest of the economy. In a large class of models, individual agents' decisions depend on the macroeconomy only through their expectations of the evolution of a finite-dimensional vector of ``sufficient statistics'' (e.g., asset returns or aggregate earnings). Our estimator selects the structural parameters that provide the best model-consistent fit between empirical impulse responses with respect to identified macro shocks of (a) cross-sectional moments of agent choices (e.g., moments of consumption) and (b) the vector of sufficient statistics. In a simulation illustration, we estimate a two-asset heterogeneous household model block without restricting production, firm investment, financial intermediation, monetary policy, trade, etc.
\end{abstract}
\emph{Keywords:} heterogeneous agents, incomplete model, structural estimation

\clearpage

\section{Introduction}
\label{sec:intro}

In recent years, there has been extensive research on the consequences of household and firm heterogeneity for macroeconomic dynamics \citep{KaplanMollViolante2018,AuclertRognlieStraub2025}. Models with heterogeneous agents (HAs) promise not only to fit the micro data better but also to shed new light on transmission channels and policy trade-offs. Estimation of these models has proven to be challenging due to their computational complexity and the statistical difficulties involved in jointly analyzing micro and macro data. For this reason, much of the applied literature selects structural parameters by calibrating to a set of moments, typically chosen heuristically. Nevertheless, the last decade has seen continued progress in this area, and there now exist rigorous inference methods that efficiently exploit time series of macro variables and cross-sectional or panel moments, or indeed the full information content of repeated cross sections.\footnote{Examples of the former include \citet{MongeyWilliams2017,Winberry2018,PappReiter2020,AuclertBardoczyRognlieStraub2021,Acharya_et_al2023,BayerBornLuetticke2024}. Examples of the latter include \citet{FernandezVillaverdeHurtadoNuno2023,LiuPlagborgMoller2023}.}

Whether matching an \emph{ad hoc} set of moments or using full-information likelihood procedures, a familiar issue in structural estimation is that misspecification of one part of the model can contaminate the estimation of parameters belonging to other parts of the model. For example, even if a researcher is primarily interested in the consumption block of a Heterogeneous Agent New Keynesian (HANK) model, they still have to specify the full structure of the rest of the economy in order to compute arbitrary moments or the likelihood function implied by general equilibrium.

In representative agent models, there exists a well-known robust alternative to full-information inference: Generalized Method of Moments (GMM) estimation of the optimality conditions associated with a single model block, such as the consumption Euler equation \citep{HansenSingleton1982,Yogo2004} or the New Keynesian Phillips Curve \citep{GaliGertler1999}. However, it is difficult to derive empirically usable, model-consistent moment conditions in typical HA macro models, since agents' decision rules are subject to occasionally binding constraints and depend on a mix of micro and macro variables, some of which are latent.

In this paper we develop a general method for estimating and testing a single block of a HA model in isolation, without restricting other parts of the model. We exploit a well-known feature of conventional HA macro models: individual agents' decisions depend on the macroeconomy only through a finite-dimensional vector of aggregate ``sufficient statistics'', typically prices such as asset returns or wages.\footnote{This fact has also been instrumental to the work of \citet{AuclertBardoczyRognlieStraub2021,McKayWolf2023,Pallotti_et_al2024}.} The ``sequence-space Jacobians'' (SSJs), introduced by \citet{BoppartKrusellMitman2018} and \citet[henceforth abbreviated ``ABRS'']{AuclertBardoczyRognlieStraub2021}, are the theoretical objects that measure how aggregates of individual decisions depend on current and future expected values of the macro sufficient statistics. These SSJs are functions of the underlying structural parameters of the model block under estimation, but they do not depend on other unmodeled features of the general equilibrium. Our estimator selects the structural parameters of the model block of interest so as to provide the best fit between, on the one hand, the empirical impulse responses of aggregated individual outcomes (such as cross-sectional moments of consumption) and, on the other hand, the combination of theoretical SSJs and empirical impulse responses of the sufficient statistics. Our estimator can therefore be viewed as a nonlinear version of ``regression in impulse response space'' \citep{BarnichonMesters2020}, applicable to general HA models. Given the estimated parameters, we can also carry out an over-identification test of the validity of the assumptions in the single model block.

The conceptual novelty of our approach is that we do not impose a specific process for the evolution of the macro sufficient statistics. Existing full-information estimation procedures derive the functional form of this process from the structural assumptions placed on the rest of the economy, such as the firm, finance, international trade, and government blocks. If some of these extraneous assumptions are wrong, all elements of the full-information estimator could be biased. Adopting a reduced-form model for the dynamics of the sufficient statistics does not easily solve this problem, since such a model could theoretically depend on lags of the entire infinite-dimensional distribution of idiosyncratic state variables, in addition to (latent) macro state variables. Our procedure sidesteps the issue by merely estimating the impulse responses of the sufficient statistics with respect to a set of observed economic shocks, without requiring a full specification of the transmission of every possible shock. Focusing on these particular moments allows our procedure to be fully consistent with general equilibrium, without having to impose a full equilibrium model (in the spirit of \citealp{HansenSingleton1982}).

Our ability to relax the modeling assumptions required by full-information estimation procedures relies on having access to time series data on the entire vector of macro sufficient statistics for the model block of interest, as well as a set of identified economic shocks to serve as instruments (with respect to which the impulse responses are computed). It is straightforward to obtain data on the prices and other aggregates in the vector of sufficient statistics for many existing HA models. As for instruments, researchers can exploit the menu of existing identified shocks in the applied literature \citep{Ramey2016,BarnichonMesters2020}. The instruments must induce sufficient dynamic variation in the macro sufficient statistics to ensure identification of the structural parameters. Moreover, they must be actual ``shocks'' in the sense of being orthogonal to any lagged macro variables, and in particular to the pre-determined distribution of idiosyncratic state variables. Conveniently, we do \emph{not} need the identified shocks to have a unique economic interpretation, since any functions of multiple contemporaneous shocks can equally well serve as instruments for our moment conditions. For example, we can use high-frequency monetary shocks as instruments even if these are in fact an amalgam of true interest rate surprises and information shocks \citep{BauerSwanson2023}.

Our method allows researchers to freely choose which cross-sectional moments of agents' choices to fit. These moments could be simple, such as the mean, variance, or interquartile range of the consumption distribution, or more complicated, such as the share of illiquid savings held by households above the 90th percentile of the wealth distribution.

Of course, by focusing on a single model block, our method does not permit estimation of arbitrary general equilibrium counterfactuals; only those that are functions of parameters within the estimated block. This may suffice for some questions: in some HANK models the steady-state Lorenz curve for wealth, say, is a function only of parameters in the household block. Moreover, our method allows researchers to test the validity of individual blocks one by one. Finally, the limited-information parameter estimates can serve as a robust benchmark for subsequent full-model calibrations.

In addition to the assumptions already discussed, our method shares the following requirements with existing full-information estimation procedures: the model block of interest must be correctly specified (this assumption is testable within our framework); the linearization of micro aggregates with respect to macro variables must be accurate, as in \citet{Reiter2009} and \citetalias{AuclertBardoczyRognlieStraub2021}; and agents must have rational expectations. For computational feasibility, we rely on the rapid and numerically stable software for computing SSJs developed by \citetalias{AuclertBardoczyRognlieStraub2021}.\footnote{Specifically, we use their excellent {\tt sequence-jacobian} Python package off the shelf.}

We illustrate the empirical usefulness of our approach by applying it to data simulated from an empirically calibrated two-asset HANK model (\citealp{KaplanMollViolante2018}; \citetalias{AuclertBardoczyRognlieStraub2021}). The estimation procedure is fed time series data on portfolio returns, aggregate earnings, moments of the cross-sectional distributions of household consumption and savings, and monetary and fiscal shocks. The procedure exploits only the structure of the model's heterogeneous household block, up to several unknown parameters; it does not place any restrictions on production, corporate investment, financial intermediation, monetary policy, trade, etc. We estimate parameters related to household preferences and portfolio adjustment costs. Even in a moderate sample size of 120 quarters, the parameter estimators have only modest mean squared error, and the bootstrap confidence intervals and model specification test have accurate coverage/size.

Econometrically, we express our estimator as a minimum distance procedure in the frequency domain. This facilitates econometric analysis of the infinite decision horizon of the agents in the model, and it also suggests a natural and effective bootstrap procedure. We prove consistency and asymptotic normality for a general class of ``spectral GMM'' estimators under weak assumptions on the time series dependence of the data. Exploiting the bilinear form of our minimum distance moment function, we are able to relax the existing assumptions in the literature used to prove stochastic equicontinuity of the relevant empirical spectral means.

\paragraph{Literature.}
Our approach is most closely related to the moment-matching and likelihood estimation procedures for HA models cited earlier. All these procedures require a full specification of general equilibrium. We drop this requirement, at the expense of having to observe the macro sufficient statistics and a set of identified shocks. While superficially similar, impulse response matching in the tradition of \citet{RotembergWoodford1997} generally requires a full equilibrium specification (though non-matched shocks need not be specified); our strategy instead considers a \emph{combination} of impulse responses that is provably invariant with respect to the auxiliary model blocks.

Though the classic literature on estimating consumption Euler equations from micro data has wrestled with challenges arising from borrowing constraints \citep[e.g.,][]{Deaton1985,Hayashi1985,Carroll2001}, none of those estimation procedures appear to be valid under the kinds of data generating processes considered in the recent HANK literature. By contrast, our method can handle a general class of models with essential nonlinearities due to occasionally binding constraints, possibly latent micro and macro state variables, and portfolio adjustment costs. However, unlike the earlier literature, our method's ability to exploit micro panel data is limited (see \cref{sec:extensions}).

Our paper is concerned with structural estimation of a HA model block, unlike the semi-structural modeling approaches of \citet{ChangChenSchorfheide2024,EttmeierKimSchorfheide2024,AlmuzaraArellanoBlundellBonhomme2025,ChangKimPark2025}. In these papers, the economic structure is primarily used to identify shocks and characterize the income process, rather than to estimate the deep structural parameters. By contrast, we estimate these parameters directly. The benefits of structural estimation are well known: parameters have deep economic meaning, we can compute well-defined counterfactuals, and we can directly test the validity of concrete theoretical mechanisms. The downside is that our estimation results are sensitive to misspecification of the model block of interest (but not to misspecification of other aspects of the general equilibrium).

A few papers have taken application-specific approaches that are similar in spirit to ours. \citet{GagliardoneGertlerLenzuTielens2025} estimate the New Keynesian Phillips Curve from individual-firm optimality conditions, exploiting data on the ``sufficient statistics'' for the firm, namely marginal costs. \citet{DebortoliGali2025} estimate linear consumption Euler equations motivated by HA models with hand-to-mouth households. \citet{Tryphonides2023} partially identifies the preference parameters in a HA consumption block by deriving model-implied moment inequalities; these require particular micro variables to be observed by the econometrician. Unlike these papers, our framework covers a general class of HA models with potentially non-smooth individual decision rules, and all macro and micro variables in the decision problem are allowed to be latent except the vector of aggregate sufficient statistics.

\paragraph{Outline.}
\cref{sec:example} gives an overview of our procedures in the context of estimating a heterogeneous household model block. \cref{sec:framework} defines our general estimation and inference procedures, and \cref{sec:theory} states formal econometric results on their asymptotic properties. \cref{sec:simul} applies our methods to data simulated from a two-asset HANK model. \cref{sec:concl} concludes. Proofs and technical details are relegated to the appendix. %A code repository with general Python functions and replication details is available on GitHub.\footnote{URL XX}

\section{Motivating example}
\label{sec:example}

Here we describe the intuition behind our estimation approach in the context of the model that we will use in our empirically calibrated simulation study in \cref{sec:simul}: a two-asset heterogeneous household model block.

\subsection{Two-asset heterogeneous household model}
\label{sec:model}

We consider the household block of the two-asset HANK model of \citetalias{AuclertBardoczyRognlieStraub2021} (Appendix B.3), which in turn is closely related to the model of \citet{KaplanMollViolante2018}. There is a continuum of households $i \in [0,1]$, each of which allocates their savings at discrete time $t$ between liquid assets $b_{i,t}$ and illiquid assets $a_{i,t}$. Changes in illiquid asset holdings incur a convex portfolio cost $\Psi(a_{i,t},(1+r_t^a)a_{i,t-1})$. Hours worked $N_t$ is uniform across households and determined by a labor union that is outside this model block and thus unrestricted. The household solves the problem
\[\max_{\lbrace c_{i,t},b_{i,t},a_{i,t} \rbrace_{t=0}^\infty} E_0\sum_{t=0}^\infty \beta^t \frac{c_{i,t}^{1-1/\gamma}}{1-1/\gamma}\]
\vspace{-1.5\baselineskip}
\begin{align*}
\text{s.t.}\quad & c_{i,t}+a_{i,t}+b_{i,t} = (1-\tau_t)w_t N_t e_{i,t} + (1+r_t^b)b_{i,t-1} + (1+r_t^a)a_{i,t-1} - \Psi(a_{i,t},(1+r_t^a)a_{i,t-1}), \\
& a_{i,t} \geq 0,\quad b_{i,t} \geq \underline{b}.
\end{align*}
Here $E_0$ denotes the expectation at time 0, $\beta$ is the discount factor, $\gamma$ is the elasticity of intertemporal substitution (EIS), $\tau_t$ is the labor tax, $w_t$ is the aggregate real wage, $r_t^b$ and $r_t^a$ are the real returns on liquid and illiquid wealth, respectively, and $\underline{b}$ is the borrowing constraint for liquid assets. $e_{i,t}$ is an idiosyncratic household productivity level, which evolves according to a Markov process, independently of any macro shocks.

Note that in this model, households' optimal choices depend on the macroeconomy only through three aggregate variables: after-tax labor earnings $(1-\tau_t)w_tN_t$, and the two returns $r_t^b,r_t^a$. We collect these three variables in the vector $x_t \equiv ((1-\tau_t)w_tN_t,r_t^b,r_t^a)'$ of macro ``sufficient statistics''. The household's time-$t$ policy functions for consumption and asset allocations are functions of the current idiosyncratic states $(b_{i,t-1},a_{i,t-1},e_{i,t})$ as well as the household's beliefs about all the current and future values of the sufficient statistics: $(x_t,x_{t+1},x_{t+2},\dots)$. Crucially---and unlike existing full-information estimation procedures---we do not place any functional form restrictions on the dynamics of $x_t$.

We seek to estimate the vector $\theta$ of structural parameters for this model block. The vector includes the discount factor $\beta$, the EIS $\gamma$, any parameters that enter into the portfolio adjustment cost function $\Psi$, and any parameters that govern the dynamics of the idiosyncratic exogenous state $e_{i,t}$.

\subsection{Sequence-space representation}
\label{sec:ssj}

Having defined the household model block, we next describe the sequence-space representation of the aggregated household decisions. This is the most important step in deriving the moment conditions that we will take to the data.

Let $y_{i,t}$ be a vector of ``outputs'' of the household's decision problem, that is, functions of its choice variables $(c_{i,t},b_{i,t},a_{i,t})$, idiosyncratic states $(b_{i,t-1},a_{i,t-1},e_{i,t})$, and macro variables $x_t$. Suppose the econometrician observes the cross-sectional aggregate vector $y_t = \int_0^1 y_{i,t}\, di$. For example, if $y_{i,t} = (c_{i,t},c_{i,t}^2)'$, then $y_t$ equals the cross-sectional mean and second moment of consumption, which may be measured empirically using repeated cross sections from a household expenditure survey (measurement error is discussed below). Assume that all households share the same beliefs about the evolution of the macro sufficient statistics $x_t$. Then the aggregated outputs $y_t$ are functions of two objects: households' beliefs about $(x_t,x_{t+1},x_{t+2},\dots)$, and the pre-determined cross-sectional distribution $\mu_{t-1}$ of idiosyncratic states $(b_{i,t-1},a_{i,t-1},e_{i,t})$.\footnote{The distribution is pre-determined in the sense that it only depends on macro shocks up to date $t-1$.} Both these objects are infinite-dimensional.

While the fully nonlinear solution to the model block is computationally intractable, linearizing with respect to the macro variables restores tractability. This linearization approach, which preserves the full \emph{micro} heterogeneity, follows \citet{Reiter2009} and a large subsequent literature, and it has been the basis of most full-information estimation procedures for HA models. As in the general frameworks of \citet{BoppartKrusellMitman2018} and \citetalias{AuclertBardoczyRognlieStraub2021}, linearization of the aggregate household outputs $y_t$ with respect to $x_t$ around a deterministic steady state (denoted with superscript ``ss'') yields an approximately linear \emph{sequence-space} representation
\begin{equation} \label{eqn:ss_rep}
y_t-y^\text{ss} \approx E_t\left[\sum_{k=0}^\infty J_k(\theta) (x_{t+k}-x^\text{ss})\right] + \xi_{t-1}(\theta).
\end{equation}
Here $E_t$ is the households' expectation operator at time $t$, and $\xi_{t-1}(\theta)$ is a pre-determined term capturing the effects of the cross-sectional state distribution $\mu_{t-1}$ on current household choices. The sequence $\lbrace J_k(\theta) \rbrace_{k=0}^\infty$ of parameter-dependent matrices is called the \emph{sequence-space Jacobians} (SSJs). These are the central ingredients in our estimation procedure, as they measure how the future expected time paths of macro sufficient statistics translate into aggregated household choices.\footnote{Note that idiosyncratic risk and precautionary behavior are embedded in the steady-state distribution and SSJs. Aggregate risk is not captured directly at first order. Incorporating aggregate risk effects would generally require a higher-order expansion and is beyond the scope of this paper.} \citetalias{AuclertBardoczyRognlieStraub2021} developed a ``fake news algorithm'' which rapidly and reliably computes the whole sequence of SSJs for any given structural parameters $\theta$. In practice, their algorithm truncates the households' decision horizon at a finite number $H$, yielding a finite sum in \eqref{eqn:ss_rep}. $H$ can be taken to be on the order of 500 to minimize truncation bias.

While the representation \eqref{eqn:ss_rep} is merely a first-order Taylor expansion that leaves out terms of order 2 and higher in the macro variables, we follow most of the literature in ignoring the approximation error. As argued originally by \citet{Reiter2009}, the linearization exploits the fact that, whereas individual-household decision rules are non-smooth functions of their own idiosyncratic states as well as of the macro sufficient statistics, \emph{integrals} of these choices across agents are often smooth functions of the sufficient statistics. If the integrated choices are smooth functions of $\lbrace E_t[x_{t+k}] \rbrace_k$, and the underlying macro shocks driving $x_t$ are not very large, the first-order linearization error will be negligible. We therefore henceforth treat \eqref{eqn:ss_rep} as an exact equality, as in \citetalias{AuclertBardoczyRognlieStraub2021} as well as the voluminous literature on estimating representative agent models via the Kalman filter likelihood.

Despite having access to an efficient numerical algorithm for computing the SSJs, there are two challenges involved in taking representation \eqref{eqn:ss_rep} to the data. First, in an ideal setting, we would observe exogenous variation in the household expectations $E_t[x_{t+k}]$ at all future horizons $k \geq 1$, in which case we could estimate $\theta$ through nonlinear least squares. However, in practice we do not directly observe households' expectations at all forecast horizons.\footnote{In representative agent models, it is often possible to derive estimating equations that involve only the one-step-ahead expectation (e.g., the conventional consumption Euler equation or New Keynesian Phillips Curve). One could then directly measure this expectation using survey data \citep[e.g.,][]{Fuhrer2017}. However, we are not aware of a method that can be applied to general HA models that would eliminate expectations at intermediate and long horizons from the households' optimality conditions.} Second, the pre-determined term $\xi_{t-1}(\theta)$ is generally a complicated function of the cross-sectional state distribution $\mu_{t-1}$ and is therefore not directly observable either; yet it is likely correlated with the macro sufficient statistics $x_t$ and so cannot be treated as a harmless error term. Our approach addresses both challenges by using identified shocks as instruments.

The full-information estimation method of \citetalias{AuclertBardoczyRognlieStraub2021} goes on to specify a complete equilibrium structure and a full set of shock processes. This allows those authors to compute the SSJs of all endogenous macro variables with respect to all exogenous shocks, from which they can derive the likelihood function for any vector of observed macro time series in the model. The full-information approach, while highly informative when correctly specified, implicitly places strong functional form restrictions on the dynamics of the macro sufficient statistics $x_t$. If we only care about estimating and testing the household model block, we now show that we can avoid (a) placing any functional form restrictions on the dynamics of $x_t$ and (b) specifying the full list of shocks driving the data.

\subsection{Shocks as instruments}
\label{sec:iv}
We are able to derive empirically useful limited-information moment conditions if we observe an appropriate set of instruments and impose rational expectations. Suppose we have access to a vector $z_t$ of instruments satisfying the following two conditions:
\begin{enumerate}
	\item The instruments are orthogonal to past shocks, and therefore orthogonal to the pre-determined term:\footnote{In the first-order linearization, $\xi_{t-1}(\theta)$ is linear in past shocks.}
	\begin{equation} \label{eqn:iv_orthog}
	\cov(z_t, \xi_{t-1}(\theta)) = 0.
	\end{equation}
	\item The instruments $z_t$ do not predict households' forecast errors at any horizon:
	\begin{equation} \label{eqn:forec_err_orthog}
	\cov(x_{t+k} - E_t[x_{t+k}],z_t) = 0\quad \text{for}\quad k=0,1,2,\dots
	\end{equation}
	A set of sufficient conditions for this second assumption is that (i) households have rational expectations and (ii) $z_t$ is contained in the households' time-$t$ information set. These conditions are often imposed in full-information estimation.
\end{enumerate}
Combining assumptions \eqref{eqn:iv_orthog}--\eqref{eqn:forec_err_orthog} with representation \eqref{eqn:ss_rep} (viewed as an equality), we obtain a system of moment conditions:
\begin{equation} \label{eqn:moment_cond}
\cov(y_t,z_t) = \sum_{k=0}^\infty J_k(\theta) \cov(x_{t+k},z_t).
\end{equation}
Assuming that $(y_t,x_t,z_t)$ are all observed, this is a system of instrumental variable exogeneity conditions, but with two non-standard features: the infinite horizon of the sum, and the nonlinearity of the coefficient matrices $J_k(\theta)$ in the structural parameters $\theta$.

Identified economic shocks are prime candidates for instruments $z_t$. A wide range of shock series are available in the applied literature, such as monetary shocks, fiscal shocks, oil shocks, and technology shocks; see the lists compiled by \citet{Ramey2016} and \citet{BarnichonMesters2020}. By virtue of being ``shocks'', these series ought to be unpredictable and thus satisfy the first assumption \eqref{eqn:iv_orthog} on the instruments. Moreover, if the shocks are economically important, they are also likely to be observed by households, thus satisfying the second assumption \eqref{eqn:forec_err_orthog} on the instruments under rational expectations. Unlike in the case of full-information likelihood estimation, (a) we do not need to take a stand on which and how many shocks are driving the data, and (b) our moment conditions \eqref{eqn:moment_cond} allow the observed data to be contaminated by a general class of measurement error processes.\footnote{Specifically, the moment conditions continue to hold for the contaminated processes $y_t^\text{obs} = y_t + \epsilon_t^y$, $x_t^\text{obs} = x_t + \epsilon_t^x$, $z_t^\text{obs} = z_t + \epsilon_t^z$, provided that (a) $\lbrace z_t \rbrace$ is dynamically uncorrelated with $\lbrace \epsilon_t^y,\epsilon_t^x \rbrace$ and (b) $\lbrace \epsilon_t^z \rbrace$ is dynamically uncorrelated with $\lbrace y_t,x_t,\epsilon_t^y,\epsilon_t^x \rbrace$.}

Conveniently, the instruments continue to satisfy our assumptions if they are given by \emph{functions} of several true underlying economic shocks, i.e., $z_t = \mathrm{fct}(\varepsilon_{1t},\varepsilon_{2t},\dots)$. For example, even if a measured ``monetary shock'' instrument is actually an amalgam of a pure interest rate surprise with other kinds of shocks (such as information shocks), it will still satisfy our assumptions.

Assuming valid instruments, identification of the structural parameters requires the moment conditions \eqref{eqn:moment_cond} to have a unique solution $\theta$. It is necessary that the total number of moments $\dim(y_t) \times \dim(z_t)$ weakly exceed the number of parameters $\dim(\theta)$. Additionally, the instruments $z_t$ must induce sufficient variation in current and future sufficient statistics $x_{t+k}$ in order for the right-hand side of \eqref{eqn:moment_cond} to be a non-trivial function of the structural parameters $\theta$---at a minimum, $z_t$ must be correlated with $x_{t+k}$ at \emph{some} horizon $k$, an easily testable condition. As a heuristic example, if $\theta$ includes the household discount factor, $z_t$ must co-vary with $x_{t+k}$ at relatively large horizons $k$, so as to be able to distinguish the choices $y_t$ made by patient and impatient households. We give precise conditions for identification in \cref{sec:identification} below.

\subsection{Estimator and over-identification test}
\label{sec:estimator_intuit}
Given a data set $\lbrace y_t,x_t,z_t \rbrace_{t=1}^T$, our estimator $\widehat{\theta}$ of the structural parameters satisfies the sample analogue of the population moment conditions \eqref{eqn:moment_cond} as well as possible:
\begin{equation} \label{eqn:sample_moment_approx}
\widehat{\cov}(y_t,z_t) \approx \sum_{k=0}^{T-1} J_k(\widehat{\theta}) \widehat{\cov}(x_{t+k},z_t),
\end{equation}
where ``$\widehat{\cov}$'' denotes sample covariance. Notice that we truncate the sum at horizon $k=T-1$, since the data is uninformative about covariances at longer horizons.\footnote{As previously discussed, the SSJs $J_k(\theta)$ are also truncated at a large horizon $k \leq H$ for computational feasibility, but it is computationally feasible to choose $H \gg T$.} When the model is over-identified---i.e., $\dim(y_t) \times \dim(z_t) > \dim(\theta)$---it is typically impossible to satisfy the above system of equations with equality in a finite sample, so we define $\widehat{\theta}$ as a minimum distance estimator that minimizes a weighted quadratic form in the discrepancies between the left-hand and right-hand sides; see \cref{sec:estimator} for details.

The sample estimating equation \eqref{eqn:sample_moment_approx} shows that our estimation approach can be viewed as a nonlinear ``regression in impulse response space'', using the SSJs as the key link between data and model. The left-hand side of \eqref{eqn:sample_moment_approx} is the contemporaneous empirical impulse response of the aggregate household outputs $y_t$ with respect to the instruments $z_t$, while the right-hand side involves the empirical impulse responses $\widehat{\cov}(x_{t+k},z_t)$ of the macro sufficient statistics with respect to the instruments at all horizons $k=0,1,\dots,T-1$. This estimation approach, which is applicable to general nonlinear HA models, builds on the linear, representative agent approach of \citet{BarnichonMesters2020}.

In over-identified settings, we can test whether the household model block is correctly specified. This can be done by checking whether a single parameter vector $\widehat{\theta}$ can approximately satisfy the entire system of sample estimating equations \eqref{eqn:sample_moment_approx}. More specifically, we carry out a conventional over-identification test for minimum distance estimators; see \cref{sec:overid} below.

\subsection{Summary and outlook}
To recap, our estimation procedure is a nonlinear regression in impulse response function space that exploits SSJs as the key theoretical link between (a) the responses of macro sufficient statistics and (b) the responses of aggregated household choices. For computational feasibility, our method relies on existing software for computing SSJs \citepalias{AuclertBardoczyRognlieStraub2021}. For econometric validity, our method relies on five key assumptions. First, the household block is correctly specified (this is testable in the over-identified case). Second, the household decision problem depends on the macroeconomy only through a finite-dimensional vector of ``sufficient statistics''. Third, the approximation error from linearizing the aggregated optimal choices of households with respect to macro variables is negligible. Fourth, households have rational expectations. Fifth, we observe time series data on all the macro sufficient statistics as well as a vector of instruments, with the latter being functions of contemporaneous shocks. The first four of these assumptions are shared by most existing full-information estimation methods.

Our method gives the researcher flexibility to choose the moments $y_t$ of households' choices (e.g., cross-sectional moments of consumption and savings) and the instruments $z_t$ (e.g., identified shocks). It is not necessary to obtain data on any of the households' idiosyncratic state variables---if such data happened to be available, it could be incorporated in $y_t$.

In the following sections we will argue that the above estimation and testing approach is applicable to a large class of HA model blocks, including blocks with heterogeneous firms, financial intermediaries, etc. Moreover, we will provide general identification results, prove the consistency and asymptotic normality of the minimum distance estimator $\widehat{\theta}$ under weak conditions, and propose analytical and bootstrap procedures for computing standard errors and confidence intervals. The econometric analysis is non-trivial as we allow for general serial correlation patterns in the data and explicitly handle the error imparted by the truncation of the infinite-horizon sum in the population moment conditions \eqref{eqn:moment_cond}.

\section{General framework}
\label{sec:framework}

Having explained the intuition behind our approach, we now formally define our general framework and econometric procedures.

\subsection{Moment conditions}
We take the moment conditions \eqref{eqn:moment_cond} derived in the previous section as the starting point for our general analysis. We restate those conditions below, using the notation $\theta_0$ to denote the true value of the structural parameter vector.

\begin{asn}[Moment conditions] \label{asn:moment_cond}
There exist a sequence of deterministic, matrix-valued functions $J_k \colon \Theta \to \mathbb{R}^{d_y \times d_x}$ for $k=0,1,\dots$ and a parameter vector $\theta_0 \in \Theta \subset \mathbb{R}^{d_\theta}$ such that
\[\cov(y_t,z_t) = \sum_{k=0}^\infty J_k(\theta_0) \cov(x_{t+k},z_t),\]
where $y_t,x_t,z_t$ are jointly stationary random vectors of dimensions $d_y,d_x,d_z$, respectively.
\end{asn}
The theoretical analysis in \cref{sec:theory} will impose further assumptions on the SSJs and the data to ensure that both sides of the above equation are mathematically well-defined.

We derived moment conditions of the above form for a particular heterogeneous household model in \cref{sec:example}, but it is clear from our arguments that such conditions will equally well obtain for any other HA model block covered by the general framework of \citetalias{AuclertBardoczyRognlieStraub2021}. The only differences between models are (a) which variables enter into the macro sufficient statistics $x_t$, (b) the functional form of the SSJs $J_k(\theta)$, and (c) which aggregated agent choices $y_t$ are applicable; but given these, the system of equations in \cref{asn:moment_cond} will be satisfied (ignoring linearization error).

In particular, model blocks with heterogeneous firms or financial intermediaries also have a sequence-space representation which, when combined with valid instruments as discussed in \cref{sec:iv}, will result in moment conditions of the form in \cref{asn:moment_cond}, for particular model-specific choices of $x_t$ and $\lbrace J_k(\theta) \rbrace_k$. See for example \citet{WinberryAuclertRognlieStraub2025} for the case of heterogeneous firms.\footnote{While our focus here is on traditional applications in macroeconomics, one could imagine applying the sequence-space machinery to models from structural microeconomics where agents' choices depend on aggregate variables, such as models of schooling choice given time-varying skill premia.}

\subsection{Estimator}
\label{sec:estimator}

We now define the minimum distance estimator. Rather than formulating the sample moment conditions in the time domain as in \eqref{eqn:sample_moment_approx}, we define the estimator in the frequency domain. This will turn out to both facilitate the theoretical econometric analysis in \cref{sec:theory} and motivate an effective bootstrap strategy for inference. To that end, let
\[S_{ab}(\omega) \equiv \frac{1}{2\pi}\sum_{h=-\infty}^\infty e^{-\iota \omega h} \cov(a_{t+h},b_t)\]
denote the cross-spectral density matrix for arbitrary stationary time series $a_t,b_t$ (with absolutely summable autocovariance function), where $\iota =\sqrt{-1}$. Then the moment conditions in \cref{asn:moment_cond} can be equivalently stated as
\begin{eqnarray} \label{eqn:moment_cond_spec}
\int_0^{2\pi} \left\lbrace S_{yz}(\omega) - J(\omega;\theta_0) S_{xz}(\omega) \right\rbrace\,d\omega = 0_{d_y \times d_z},
\end{eqnarray}
where $J(\omega;\theta) \equiv \sum_{k=0}^\infty e^{\iota k \omega} J_k(\theta)$. The standard sample analogue of $S_{ab}(\omega)$ is the cross-periodogram $\widehat{S}_{ab}(\omega) \equiv \frac{1}{2\pi T}\lbrace\sum_{t=1}^T e^{-\iota\omega t}(a_t-\bar{a})\rbrace \lbrace \sum_{t=1}^T e^{\iota\omega t}(b_t-\bar{b})' \rbrace$, where $\bar{a}\equiv\frac{1}{T}\sum_{t=1}^T a_t$ and similarly for $\bar{b}$. The following minimum distance estimator simply plugs the periodogram into the population moment condition \eqref{eqn:moment_cond_spec} and approximates the integral with a sum. Let $d_g \equiv d_yd_z$ denote the total number of moments.
\begin{defn}[Estimator]
Given a symmetric positive definite weight matrix $\widehat{W} \in \mathbb{R}^{d_g \times d_g}$, the minimum distance estimator is defined as
\[\widehat{\theta} \equiv \argmin_{\theta \in \Theta}\; \widehat{g}(\theta)'\widehat{W}\widehat{g}(\theta),\]
with moment function
\[\widehat{g}(\theta) \equiv \frac{2\pi}{T}\sum_{j=0}^{T-1} \ve\left\lbrace \widehat{S}_{yz}(\omega_j) - J(\omega_j;\theta) \widehat{S}_{xz}(\omega_j) \right\rbrace,\quad \text{where}\quad \omega_j \equiv \frac{2\pi j}{T}.\]
\end{defn}
In practice, we use a gradient-based numerical optimization routine with multiple starting values to compute the minimum. As is well known, the cross-periodogram can be computed efficiently using the discrete Fourier transform. Similarly, for any $\theta$, given the SSJs $\lbrace J_k(\theta) \rbrace_k$ produced by the \citetalias{AuclertBardoczyRognlieStraub2021} algorithm, the spectral SSJ function $J(\omega;\theta)$ can be computed rapidly by another (inverse) Fourier transform.\footnote{As discussed in \cref{sec:ssj}, in practice SSJs are computed using a finite truncation horizon $H$. Here we abstract from this minor numerical issue and assume $H=\infty$.} As defined, the moment function $\widehat{g}(\theta)$ is theoretically guaranteed to be a real vector for any $\theta$; however, numerical errors can cause the computed $\widehat{g}(\theta)$ to have a very small but nonzero imaginary part, so in practice we just retain the real part.

\subsection{Inference}
\label{sec:inference}
In \cref{sec:consistency_AN} we will show that the minimum distance estimator is consistent and asymptotically normal under weak conditions. Its asymptotic variance-covariance matrix can be estimated consistently by
\[\widehat{\Sigma} \equiv (\widehat{G}'\widehat{W}\widehat{G})^{-1}\widehat{G}'\widehat{W}\widehat{\Omega}\widehat{W}\widehat{G}(\widehat{G}'\widehat{W}\widehat{G})^{-1},\]
where
\[\widehat{G} \equiv \frac{\partial \widehat{g}(\widehat{\theta})}{\partial \theta'} \in \mathbb{R}^{d_g \times d_\theta},\]
\[\widehat{\Omega} \equiv \sum_{|h| \leq L_T} K\left(\frac{h}{L_T}\right)\widehat{\Gamma}_\psi(h) \in \mathbb{R}^{d_g\times d_g},\quad \widehat{\Gamma}_\psi(h) \equiv \begin{cases}
	\frac{1}{T}\sum_{t=h+1}^T \left(\widehat{\psi}_t-\overline{\widehat{\psi}}\right)\left(\widehat{\psi}_{t-h}-\overline{\widehat{\psi}}\right)', & h \geq 0, \\
	\widehat{\Gamma}_\psi(-h)', & h<0,
\end{cases}\]
\[\widehat{\psi}_t \equiv (z_t - \bar{z}) \otimes \left(y_t-\bar{y} - \sum_{k=0}^{T-t} J_k(\widehat{\theta})(x_{t+k}-\bar{x})\right) \in \mathbb{R}^{d_g},\quad \overline{\widehat{\psi}} \equiv \frac{1}{T}\sum_{t=1}^T \widehat\psi_t,\]
$\bar{y} \equiv T^{-1}\sum_{t=1}^T y_t$, $\bar{x}$, and $\bar{z}$ are sample averages, and $K(\cdot)$ is a heteroskedasticity and autocorrelation consistent (HAC) kernel function satisfying the standard conditions in \cref{app:hac}.\footnote{$\widehat{\psi}_t$ is an approximation to the per-observation score process for the sample moment function \eqref{eqn:sample_moment_approx}, truncating the sum at the longest horizon available in the data.} The derivative matrix $\widehat{G}$ can be computed via a finite difference approximation. As usual, standard errors for the individual parameter estimates $\widehat{\theta}_j$ can be computed as $\sqrt{\widehat{\Sigma}_{jj}/T}$, $j=1,\dots,d_\theta$. In the simulation study in \cref{sec:simul}, we use the Newey-West HAC kernel $K(u) = \max\lbrace 1-|u|,0 \rbrace$ with bandwidth selection rule $L_T = \lceil 2.24 \times T^{1/3} \rceil$.\footnote{This bandwidth rule is mean-squared-error-optimal if the true score process is an AR(1) process with $\rho=0.7$ \citep[equations 6.2 and 6.4]{Andrews1991}.}

While the variance estimator above is computationally attractive, it is based on an asymptotic delta method linearization that can be inaccurate in finite samples whenever the SSJs are highly nonlinear in $\theta$. Our simulation study in \cref{sec:simul} finds that a bootstrap inference procedure reliably delivers confidence intervals with accurate coverage. This procedure resamples the periodogram using a Gaussian multiplier bootstrap and re-runs the nonlinear optimization of the objective function on each bootstrap data set. See \cref{app:bootstrap} for details on the algorithm.

\subsection{Over-identification test}
\label{sec:overid}
In the over-identified case $d_g>d_\theta$, a test of the specification of the model block can be carried out by computing the statistic
\[\widehat{\Upsilon} \equiv T\widehat{g}(\widehat{\theta})'\left\lbrace \widehat{\Omega}^{-1} - \widehat{\Omega}^{-1}\widehat{G}(\widehat{G}'\widehat{\Omega}^{-1}\widehat{G})^{-1}\widehat{G}'\widehat{\Omega}^{-1} \right\rbrace \widehat{g}(\widehat{\theta}),\]
see \citet[Section 9.5]{NeweyMcFadden1994}.\footnote{Note that the formula plugs in the HAC estimate $\widehat{\Omega}$ regardless of which weight matrix $\widehat{W}$ is used to compute $\widehat{\theta}$.} Intuitively, this statistic checks whether the single parameter vector $\widehat{\theta}$ is able to approximately satisfy all moment conditions simultaneously. Under the joint null hypothesis of correct specification of the model block and instrument validity, the statistic has an asymptotic chi-squared distribution with $d_g-d_\theta$ degrees of freedom. However, our simulation study in \cref{sec:simul} suggests that the use of analytical critical values can cause size distortions in finite samples, whereas a bootstrap critical value controls size well even in modest sample sizes. The bootstrap procedure is described in \cref{app:bootstrap}.

\subsection{Weight matrix}
\label{sec:weight_mat}
A natural, though \emph{ad hoc}, choice of weight matrix is $\widehat{W} = \widehat{V}_z^{-1} \otimes \widehat{V}_y^{-1}$, where $\widehat{V}_z$ is a diagonal matrix with the sample variances of the elements of $z_t$ on the diagonal, while $\widehat{V}_y$ similarly collects the sample variances of $y_t$ on its diagonal. This choice of weight matrix makes the entire estimation procedure invariant to the units of $y_t$ and $z_t$.

Asymptotically, the efficient choice of weight matrix can be consistently estimated by $\widehat{W}^\text{eff}=\widehat{\Omega}^{-1}$, where $\widehat{\Omega}$ is obtained using any initial positive definite weight matrix, such as the one suggested above \citep[Section 5.2]{NeweyMcFadden1994}. However, in some applications, the number $d_g$ of moments may be large relative to the sample size $T$, likely causing the purported ``efficient'' estimator to be poorly behaved in small samples \citep{AltonjiSegal1996}. We recommend that users run Monte Carlo studies calibrated to their application to compare the finite-sample performance of estimators with different choices of weight matrix.

\subsection{Richer agent choice data}
\label{sec:extensions}
Our procedure can exploit richer data on agents' choices than simple cross-sectional moments. So far we have restricted attention to aggregated outputs $y_t$ of the agents' decisions that can be written as integrals $y_t = \int_0^1 y_{i,t}\,di$, where $y_{i,t}$ is a vector of simple transformations of households' choice variables. However, as emphasized by \citetalias{AuclertBardoczyRognlieStraub2021} (Appendix A), the linearized representation \eqref{eqn:ss_rep} will also obtain for many other functionals $y_t$ of the cross-sectional distribution of agent choices, such as quantiles, conditional expectations, etc. Their Python package readily computes SSJs for such functionals.\footnote{Mechanically, in their {\tt sequence-jacobian} Python package, any functional of the joint cross-sectional distribution of agent states and choices can be attached to the model block using the {\tt add\_hetoutputs} function. The code will then automatically compute the SSJs of these functionals with respect to the macro sufficient statistics, just like it would for any other aggregate variables.} As a concrete example, in the simulation study in \cref{sec:simul}, we consider cross-sectional moments of the form $\int a_{i,t} \1\lbrace q_{p_0,t} \leq  b_{i,t}+a_{i,t} \leq q_{p_1,t} \rbrace\,di$, i.e., the total illiquid assets held by households who are between the $100p_0$ and $100p_1$ percentiles of the wealth distribution. By looking at various quantiles $p_0,p_1$, these moments give detailed information on the dynamics of wealth inequality, which should provide identifying power for many of the parameters of the household problem.

Our method can also exploit short micro panels, though in limited ways. The Python package of \citetalias{AuclertBardoczyRognlieStraub2021} easily allows users to compute cross-sectional moments of the entire joint distribution of agent states and choices. Some of these state variables may be lagged choice variables, such as asset holdings. For example, in the model from \cref{sec:model}, it would be straightforward to compute SSJs for any smooth moments of the form $y_t = \int_0^1 \mathrm{fct}(a_{i,t-1},b_{i,t-1},a_{i,t},b_{i,t},c_{i,t},x_t)\,di$. This allows researchers to at least exploit two-period rotating panels on liquid and illiquid asset holdings. It would be an interesting question for future research to develop an SSJ algorithm for aggregated outcomes that involve micro data for more than two consecutive time periods.

\section{Theoretical results}
\label{sec:theory}
This section presents our formal econometric assumptions and proves identification, consistency, asymptotic normality, and consistent estimation of the asymptotic variance. We encourage applied readers to skip ahead to the next section on their first reading.

The proofs in \cref{app:proofs} in fact prove consistency and asymptotic normality for a more general class of ``spectral GMM'' estimators; the results in the present section are special cases, as formally verified in \cref{sec:specialization-our-setup}. In addition to potentially opening the door to other kinds of applications, the general framework arguably makes the proofs more transparent.

\subsection{Main assumptions}
Let the observed data be denoted $\zeta_t\equiv(y_t',x_t',z_t')'\in\mathbb R^{d_\zeta}$ with $d_\zeta\equiv d_y+d_x+d_z$. All norms below are Euclidean for vectors or Frobenius for matrices, and the dimensions are fixed so norms are equivalent.

\begin{asn}[Data]
\label{asn:stationarity}
\begin{asnitem}
\item \label{asn:stationarity:i} $\zeta_t$ is strictly stationary with mean zero and finite fourth moments.
\item \label{asn:stationarity:ii} The joint cumulants of orders $\ell \in \lbrace 2,4 \rbrace$ are absolutely summable:\footnote{We interpret the norm as applying to the vectorization of the cumulant tensor.}
\[
\sum_{h_1\in\mathbb Z}\sum_{h_2\in\mathbb Z}\cdots\sum_{h_{\ell-1}\in\mathbb Z}
\left\|\cum\left(\zeta_0,\zeta_{h_1},\dots,\zeta_{h_{\ell-1}}\right)\right\|<\infty.
\]
\end{asnitem}
\end{asn}
Recall that \cref{asn:moment_cond} already imposed stationarity of $\lbrace \zeta_t \rbrace$. The mean zero assumption above is just to simplify notation; in practice, we work with de-meaned data.\footnote{The difference between population and sample demeaning affects only the zero Fourier ordinate. As a result, the change in the sample moments $\widehat{g}(\theta)$ and their derivative induced by sample de-meaning is $O_p(T^{-1})$ uniformly in $\theta$ under our assumptions, and is therefore asymptotically negligible.} The summability assumptions on the second- and fourth-order cumulants follow \citet{Brillinger1981} and \citet{Andrews1991}. These conditions are for example implied by mixing assumptions and hold for many stationary time series models; see \citet{Andrews1991}.

\begin{asn}[Parameter space]
\label{asn:parameter-space}
$\Theta\subset\mathbb R^{d_\theta}$ is compact, and the true parameter $\theta_0\in\Theta$.
\end{asn}

The compactness assumption is conventional, but could be relaxed under shape restrictions on the SSJs.

\begin{asn}[SSJ]
\label{asn:jacobian}
\begin{asnitem}
\item \label{asn:jacobian:i} For each $k\ge 0$, $J_k(\theta)$ is continuously differentiable on $\Theta$.
\item \label{asn:jacobian:ii} The SSJ and its derivative are uniformly summable on $\Theta$, i.e.,
\[
\sum_{k=0}^\infty \sup_{\theta\in\Theta}\|J_k(\theta)\| < \infty,\quad \sum_{k=0}^\infty \sup_{\theta\in\Theta}\|\partial \ve(J_k(\theta))/\partial \theta'\| < \infty.
\]
\end{asnitem}
\end{asn}

This assumption requires the SSJs $\lbrace J_k(\theta) \rbrace_k$ to be sufficiently smooth and asymptote to zero sufficiently fast at long horizons $k \to \infty$. In other words, agents' \emph{aggregated} optimal choices must depend smoothly on the structural parameters, and the dependence of current choices on expectations of macro sufficient statistics far into the future must be limited. Note that we do not impose that \emph{individual} agent choices are smooth, nor do we require a finite decision horizon.

% ============================================================
\subsection{Identification}\label{sec:identification}

We now give high-level assumptions for identification of the structural parameters and subsequently discuss their economic interpretation.

Recall that \cref{asn:moment_cond} assumed that $\cov(y_t,z_t) = \sum_{k=0}^\infty J_k(\theta) \cov(x_{t+k},z_t)$ at the true parameters $\theta=\theta_0$. The following assumption implies that the moment conditions are satisfied \emph{only} at the true parameters.

\begin{asn}[Identification]
\label{asn:identification}
The map $\theta \mapsto \sum_{k=0}^\infty J_k(\theta)\cov(x_{t+k},z_t)$ is injective on $\Theta$.\footnote{The infinite series is well-defined for all $\theta$ by \cref{asn:stationarity,asn:jacobian}.}
\end{asn}

Intuitively, this assumption requires the instrument vector $z_t$ to correlate with the macro sufficient statistics $x_{t+k}$ at a sufficient number of future horizons $k$ in order to distinguish the model-implied optimal agent choices $\sum_{k=0}^\infty J_k(\theta) \cov(x_{t+k},z_t)$ at different parameter values $\theta$. In particular, if the macro sufficient statistics were uncorrelated with all leads and lags of the instruments, identification would clearly fail. As a more interesting example, if a specific element in the parameter vector only governs the degree to which agents' behavior is forward-looking (that is, $J_k(\theta)$ is constant in $\theta$ at short horizons), then the instruments must predict the macro sufficient statistics at sufficiently long horizons; otherwise the exogenous variation induced by the instruments would not be helpful for identifying this specific parameter.

\begin{exm} \label{exm:linear}
Consider a stylized economic model where agents' aggregated consumption $y_t$ (a scalar) depends on expectations of current and next-period aggregate income $x_t$ (also a scalar), as well as a function $\xi_{t-1}$ of lagged shocks:
\[y_t = \beta x_t + \gamma E_t[x_{t+1}] + \xi_{t-1}.\]
The two unknown parameters are $\theta=(\beta,\gamma)'$. This corresponds to the SSJs $J_0(\theta) = \beta$, $J_1(\theta) = \gamma$, and $J_k(\theta) = 0$ for $k \geq 2$. Let $z_t$ be a vector of observed shocks. For this linear model, \cref{asn:identification} reduces to the requirement that the $2 \times d_z$ matrix
\[\cov\left(\begin{pmatrix}
x_t \\
x_{t+1}
\end{pmatrix}, z_t \right)\]
have full row rank $2$. In particular, there must be at least two instruments, and they must be able to ``perturb'' the future value $x_{t+1}$ of income differently from the current value $x_t$, in order to distinguish the contemporaneous coefficient $\beta$ from the forward-looking coefficient $\gamma$. That is, the impulse response functions of $x_t$ with respect to the various shocks in $z_t$ must not only be nonzero but also differ across shocks.

Now suppose instead that our structural model imposes the restriction $\beta=\gamma$, so there is only a single unknown parameter. This is a stylized version of a permanent income model. Then \cref{asn:identification} reduces to the weaker requirement that $\cov(x_t+x_{t+1},z_t) \neq 0$. So a single instrument suffices, and it just needs to induce variation in the \emph{cumulated} response of income. \qed
\end{exm}

For the asymptotic normality result below, we additionally require local identification.

\begin{asn}[Local identification]
\label{asn:invertibility}
The $d_g \times d_\theta$ matrix \[G_0
\equiv
-\sum_{k=0}^\infty \left(\cov(z_t,x_{t+k})\otimes I_{d_y}\right)
\frac{\partial \ve(J_k(\theta_0))}{\partial \theta'}\]
has full column rank.
\end{asn}

This assumption would imply \cref{asn:identification} if the SSJs were linear in the parameters $\theta$ (as in \cref{exm:linear}). In the nonlinear case, \cref{asn:invertibility} rules out pathologies where the map from parameters to outcomes is locally flat around the truth.

% ============================================================
\subsection{Consistency and asymptotic normality}\label{sec:consistency_AN}
We now establish the consistency and asymptotic normality of the minimum distance estimator $\widehat{\theta}$ defined in \cref{sec:estimator}. The key step in our proofs is to show that \cref{asn:stationarity,asn:jacobian} together imply pointwise convergence and stochastic equicontinuity of the sample moments $\widehat{g}(\theta)$. The pointwise convergence is a standard result \citep[e.g.,][Chapter 7.6]{Brillinger1981}. Stochastic equicontinuity follows from \citet{Dahlhaus1988} under strong restrictions on moments of the data of all orders. However, due to the \emph{bilinearity} of $\widehat{g}(\theta)$ in the periodogram $\widehat{S}(\cdot)$ and the SSJs $J(\cdot;\theta)$, we are able to establish stochastic equicontinuity under the same fourth-moment conditions in \cref{asn:stationarity} that are also required for pointwise convergence.

We impose the following conventional assumption on the weight matrix $\widehat{W}$.
\begin{asn}[Weight matrix]
\label{asn:weight-convergence}
$\widehat W\pto W$, where $W$ is symmetric positive definite.
\end{asn}

Consistency then follows from the standard results for extremum estimators \citep[Theorem 2.1]{NeweyMcFadden1994}.

\begin{prop}[Consistency]
\label{prop:consistency}
Under \cref{asn:moment_cond,asn:parameter-space,asn:stationarity,asn:jacobian,asn:weight-convergence,asn:identification}, \[\widehat\theta\pto \theta_0.\]
\end{prop}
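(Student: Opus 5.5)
The plan is to apply \citet[Theorem 2.1]{NeweyMcFadden1994} to the objective $\widehat Q(\theta)\equiv\widehat g(\theta)'\widehat W\widehat g(\theta)$ and its population counterpart $Q_0(\theta)\equiv g_0(\theta)'Wg_0(\theta)$, where
\[g_0(\theta)\equiv \ve\Bigl\{\cov(y_t,z_t)-\sum_{k=0}^\infty J_k(\theta)\cov(x_{t+k},z_t)\Bigr\}.\]
That theorem requires four conditions: (a) $Q_0$ is uniquely minimized at $\theta_0$; (b) $\Theta$ is compact; (c) $Q_0$ is continuous; and (d) $\sup_\theta|\widehat Q(\theta)-Q_0(\theta)|\pto 0$.

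Condition (b) is \cref{asn:parameter-space}. For (a), $g_0(\theta_0)=0$ by \cref{asn:moment_cond}. If $\theta\neq\theta_0$, then \cref{asn:identification} gives $\sum_k J_k(\theta)\cov(x_{t+k},z_t)\neq\sum_k J_k(\theta_0)\cov(x_{t+k},z_t)$, so $g_0(\theta)\neq 0$. Positive definiteness of $W$ (\cref{asn:weight-convergence}) then gives $Q_0(\theta)>0=Q_0(\theta_0)$. For (c), each $J_k$ is continuous and $\|\cov(x_{t+k},z_t)\|$ is bounded by finite second moments. The dominating sequence $\sup_\theta\|J_k(\theta)\|$ is summable by \cref{asn:jacobian:ii}, so the Weierstrass M-test makes the series converge uniformly, and $g_0$ and hence $Q_0$ are continuous.

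Condition (d) follows once we show $\sup_\theta\|\widehat g(\theta)-g_0(\theta)\|\pto 0$. On the compact set, $g_0$ is bounded, $\widehat W\pto W$, and the quadratic form is continuous, which then carries uniform convergence of $\widehat g$ over to $\widehat Q$. The key step is to exploit bilinearity. Writing $\widehat\Gamma_{ab}(h)$ for the sample cross-covariances (with the usual circular/boundary conventions), Fourier inversion of the periodogram at the Fourier frequencies gives $\frac{2\pi}{T}\sum_j\widehat S_{yz}(\omega_j)=\widehat\Gamma_{yz}(0)$. Since $e^{\iota k\omega_j}$ is $T$-periodic in $k$, it also gives
\[\frac{2\pi}{T}\sum_j J(\omega_j;\theta)\widehat S_{xz}(\omega_j)=\sum_{k=0}^\infty J_k(\theta)\,\widetilde\Gamma_{xz}(k),\]
where $\widetilde\Gamma_{xz}(k)$ is a circularized sample covariance between $x_{t+k}$ and $z_t$ with $\|\widetilde\Gamma_{xz}(k)\|\le C\,T^{-1}\sum_t\|x_t\|\,T^{-1}\sum_t\|z_t\|$-type bounds. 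Hence
\[\sup_\theta\|\widehat g(\theta)-g_0(\theta)\|\le \|\widehat\Gamma_{yz}(0)-\Gamma_{yz}(0)\|+\sum_{k=0}^\infty \sup_\theta\|J_k(\theta)\|\,\|\widetilde\Gamma_{xz}(k)-\Gamma_{xz}(k)\|.\]
The supremum over $\theta$ has disappeared; this is exactly where bilinearity buys stochastic equicontinuity for free.

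The main obstacle is showing that the last sum is $o_p(1)$. I would split it at a fixed horizon $K$. For $k\le K$, each term is $O_p(T^{-1/2})$ plus an $O(k/T)$ circular boundary error. The variance bound uses the summability of second and fourth cumulants in \cref{asn:stationarity:ii}, as in \citet[Ch.~7]{Brillinger1981}. For $k>K$, I would use $\sup_k E\|\widetilde\Gamma_{xz}(k)-\Gamma_{xz}(k)\|\le C$, which follows from finite second moments via Cauchy--Schwarz. The expected tail is then at most $C\sum_{k>K}\sup_\theta\|J_k(\theta)\|$, which is arbitrarily small for large $K$ by \cref{asn:jacobian:ii}. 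Taking $T\to\infty$ and then $K\to\infty$ yields the claim. Care is needed because the circularization wraps $x_{t+k}$ around to early observations. However, this affects only $k$ of the $T$ summands per lag, which is handled by the same bound. With (a)--(d) established, \citet[Theorem 2.1]{NeweyMcFadden1994} delivers $\widehat\theta\pto\theta_0$.
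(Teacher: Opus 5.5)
Your proof is correct, and for the key uniform-convergence step (d) it takes a genuinely different route from the paper.

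\textbf{The paper's route.} The paper stays in the frequency domain and splits step (d) into two pieces:
\begin{itemize}
\item pointwise convergence of the spectral mean (\cref{lem:pointwise-linear-mean}), using the Fej\'er-kernel bias bound and Brillinger's covariance bound for periodogram ordinates;
\item stochastic equicontinuity (\cref{lem:SE}), from $\|\widehat g(\widetilde\theta)-\widehat g(\theta)\|\le\sqrt{2\pi}\,\Delta_{\widetilde J}(\delta)\sqrt{V_T}$ with $V_T=\frac{2\pi}{T}\sum_j\|\widehat S(\omega_j)\|^2=O_p(1)$.
\end{itemize}
It then combines the two via \citet[Lemma 2.8]{NeweyMcFadden1994}. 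The SSJ assumptions enter only through uniform continuity of $J(\omega;\theta)$ in $\theta$, obtained by the M-test plus Heine--Cantor (\cref{lem:lowlevel-to-tildeJ}).

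\textbf{Your route.} You invert the Fourier transform exactly at the Fourier frequencies, so $\widehat g(\theta)$ becomes a linear combination of circular sample cross-covariances with coefficients $J_k(\theta)$. The supremum over $\theta$ is then absorbed directly into $\sum_k\sup_\theta\|J_k(\theta)\|$ from \cref{asn:jacobian:ii}. What remains is:
\begin{itemize}
\item a fixed-lag law of large numbers, with variance controlled by second- and fourth-cumulant summability;
\item an $O_p(k/T)$ wrap-around term;
\item a tail that is uniformly $L^1$-bounded, disposed of by the $K$-split or by dominated convergence over $k$.
\end{itemize}
This is more elementary and self-contained. It needs neither the periodogram covariance lemma nor a separate equicontinuity argument, and it uses continuity of $J_k$ only to get continuity of $g_0$.

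\textbf{What the paper's route buys.} It is designed for the general spectral GMM class. There the weight function $\widetilde J(\omega;\theta)$ is only assumed bounded, piecewise continuous in $\omega$, and uniformly continuous in $\theta$. Those conditions do not imply absolutely summable Fourier coefficients (\cref{rem:Fourier-summability-bullets}), so your time-domain reduction would not be available in that generality. In the paper's specific setting, \cref{asn:jacobian:ii} supplies exactly the summability your argument needs.

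\textbf{One small slip.} The pathwise bound $\|\widetilde\Gamma_{xz}(k)\|\le C\,T^{-1}\sum_t\|x_t\|\,T^{-1}\sum_t\|z_t\|$ fails in general. Take scalar $x$ and $z$ both equal to a single spike at $t=1$ and $k=0$: the left side is $1/T$ and the right side is $C/T^2$. The correct bound is $(T^{-1}\sum_t\|x_t\|^2)^{1/2}(T^{-1}\sum_t\|z_t\|^2)^{1/2}$, since circular shifting is a permutation. This does not affect your argument, because the bound you actually use, $\sup_{k,T}\E\|\widetilde\Gamma_{xz}(k)-\Gamma_{xz}(k)\|\le C$, is correct.
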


To prove asymptotic normality of the estimator $\widehat{\theta}$, we impose a high-level central limit theorem (CLT) for the \emph{spectral mean} $\widehat{g}(\theta_0) = \frac{2\pi}{T} \sum_{j=0}^{T-1} \ve\lbrace \widehat{S}_{yz}(\omega_j)-J(\omega_j;\theta_0)\widehat{S}_{xz}(\omega_j) \rbrace$.

\begin{asn}[Spectral mean CLT]
\label{asn:clt}
Assume that \[
\sqrt{T}\widehat g(\theta_0) \dto N(0,\Omega)
\]
for some positive semidefinite $\Omega$.
\end{asn}

Recall that the population analogue of $\widehat{g}(\theta_0)$ equals 0 by \cref{asn:moment_cond}. Various sets of sufficient conditions for \cref{asn:clt} are provided by \citet[Theorem 7.6.1]{Brillinger1981}, \citet{Dahlhaus1988}, and \citet{MeyerPaparoditis2023}. Moreover, the spectral mean CLT follows from conventional time-domain CLTs under some further assumptions; see \cref{rem:time-domain-CLT} in \cref{app:proofs} for details.

Asymptotic normality now follows from the standard result for extremum estimators \citep[Theorem 3.2]{NeweyMcFadden1994}.

\begin{prop}[Asymptotic normality]
\label{prop:AN}
Under \cref{asn:moment_cond,asn:parameter-space,asn:stationarity,asn:jacobian,asn:weight-convergence,asn:identification,asn:clt,asn:invertibility}, and if $\theta_0$ lies in the interior of $\Theta$,
\[
\sqrt{T}(\widehat\theta-\theta_0)\dto
N\left(0,\,(G_0'WG_0)^{-1}G_0'W\Omega WG_0(G_0'WG_0)^{-1}\right).
\]
\end{prop}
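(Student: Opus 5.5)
We verify the conditions of \citet[Theorem 3.2]{NeweyMcFadden1994} for minimum distance estimators. By \cref{prop:consistency}, $\widehat\theta\pto\theta_0$, and $\theta_0$ is interior. Moreover, $\widehat g(\theta)$ is continuously differentiable in $\theta$ by \cref{asn:jacobian:i,asn:jacobian:ii}, since differentiation passes through the uniformly summable series $J(\omega;\theta)=\sum_k e^{\iota k\omega}J_k(\theta)$. Hence, with probability approaching one, the first-order condition $\widehat G(\widehat\theta)'\widehat W\widehat g(\widehat\theta)=0$ holds, where $\widehat G(\theta)\equiv\partial\widehat g(\theta)/\partial\theta'$. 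A mean value expansion, row by row, then gives
\[
\sqrt T(\widehat\theta-\theta_0)=-\bigl(\widehat G(\widehat\theta)'\widehat W\widehat G(\bar\theta)\bigr)^{-1}\widehat G(\widehat\theta)'\widehat W\sqrt T\widehat g(\theta_0),
\]
with $\bar\theta$ between $\widehat\theta$ and $\theta_0$. The result then follows from \cref{asn:clt}, \cref{asn:weight-convergence}, full column rank of $G_0$ (\cref{asn:invertibility}), and Slutsky, provided we show $\sup_{\theta\in\Theta}\|\widehat G(\theta)-G(\theta)\|\pto0$ for a continuous limit $G(\theta)$ with $G(\theta_0)=G_0$.

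For this uniform convergence, we exploit bilinearity. Writing $\widehat g$ in the time domain, we have
\[
\frac{2\pi}{T}\sum_j e^{\iota k\omega_j}\widehat S_{xz}(\omega_j)=\widehat\Gamma^{\circ}_{xz}(k),
\]
the circular sample cross-covariance at lag $k$. Hence
\[
\widehat G(\theta)=-\sum_{k\ge0}\bigl(\widehat\Gamma^{\circ}_{xz}(k)'\otimes I_{d_y}\bigr)\,\partial\ve J_k(\theta)/\partial\theta',
\]
where for $k\ge T$ the circular covariance wraps around, i.e.\ $\widehat\Gamma^\circ(k)=\widehat\Gamma^\circ(k\bmod T)$. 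Set
\[
G(\theta)\equiv-\sum_k\bigl(\cov(z_t,x_{t+k})\otimes I_{d_y}\bigr)\partial\ve J_k(\theta)/\partial\theta'.
\]
Then
\[
\sup_\theta\|\widehat G(\theta)-G(\theta)\|\le\sum_k\sup_\theta\|\partial\ve J_k(\theta)/\partial\theta'\|\cdot\|\widehat\Gamma^\circ_{xz}(k)-\cov(x_{t+k},z_t)\|.
\]

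It remains to bound $\sup_k\|\widehat\Gamma^\circ_{xz}(k)-\cov(x_{t+k},z_t)\|$. The circular covariance at lag $k<T$ equals the ordinary sample covariance at lag $k$ plus a wrap-around term approximating the lag $k-T$ covariance. Under \cref{asn:stationarity}, summability of fourth cumulants gives the following bounds uniformly in $k$:
\[
E\|\widehat\Gamma^\circ(k)-E\widehat\Gamma^\circ(k)\|^2=O(T^{-1}),
\]
and $\|E\widehat\Gamma^\circ(k)-\cov(x_{t+k},z_t)\|\le \frac{k}{T}\|\cov(x_{t+k},z_t)\|+\|\cov(x_{t+k-T},z_t)\|$. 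Splitting the sum over $k$ at a fixed $K$, the tail is bounded by $\sum_{k>K}\sup_\theta\|\partial\ve J_k/\partial\theta'\|\cdot O_p(1)$. Here $O_p(1)$ holds because circular covariances are bounded by sample second moments. This tail is small for large $K$ by \cref{asn:jacobian:ii}. The finitely many $k\le K$ terms converge in probability. Since $K$ is arbitrary, uniform convergence follows, and continuity of $G(\cdot)$ follows from dominated convergence by the same summability. Combined with consistency, this gives $\widehat G(\widehat\theta)\pto G_0$ and $\widehat G(\bar\theta)\pto G_0$.

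The main obstacle is handling the circular wrap-around and the bias terms uniformly in $k$ without higher moments. The split at a fixed $K$, together with summability of the derivative SSJs and $\sup_k\|\widehat\Gamma^\circ(k)\|=O_p(1)$, is intended to avoid needing any rate on the bias. Everything else is the standard sandwich argument.
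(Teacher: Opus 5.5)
Your argument is correct. Its outer shell, which checks the conditions of \citet[Theorem 3.2]{NeweyMcFadden1994} from consistency, \cref{asn:clt}, interiority, and full rank of $G_0$, is the same as the paper's (via \cref{thm:AN}). The difference lies in how you establish $\sup_\theta\|\widehat G(\theta)-G(\theta)\|\pto 0$ and $G(\theta_0)=G_0$.

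The paper works in the frequency domain for a general spectral weight function $\widetilde J$. It gets pointwise convergence of the spectral mean $\widehat G(\theta)$ from the periodogram mean and covariance bounds (\cref{lem:periodogram-mean,lem:periodogram-cov,lem:pointwise-linear-mean}). It gets stochastic equicontinuity from a Cauchy--Schwarz bound proportional to $\Delta_{\partial\widetilde J}(\delta)\sqrt{V_T}$, with $V_T=\frac{2\pi}{T}\sum_j\|\widehat S(\omega_j)\|^2=O_p(1)$ (the analogue of \cref{lem:SE}). Uniformity then follows from \citet[Lemma 2.8]{NeweyMcFadden1994}. The SSJ structure enters only afterwards, through \cref{lem:lowlevel-to-tildeJ}, \cref{rem:lowlevel-to-tildeJ-deriv}, and an interchange of sum and integral that identifies $G(\theta_0)$ with $G_0$.

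You instead invert the discrete Fourier transform and write $\widehat G(\theta)$ as a series of circular sample cross-covariances $\widehat\Gamma^\circ_{xz}(k)$ with deterministic weights $\partial\ve J_k(\theta)/\partial\theta'$. This has several consequences:
\begin{itemize}
\item Uniformity in $\theta$ comes for free from the $\sup_\theta$ inside the summable weights of \cref{asn:jacobian:ii}, so no equicontinuity step is needed.
\item The tail is handled by the crude bound $\sup_k\|\widehat\Gamma^\circ_{xz}(k)\|\le(\frac1T\sum_t\|x_t\|^2)^{1/2}(\frac1T\sum_t\|z_t\|^2)^{1/2}$.
\item Fourth-order cumulant summability is used only for laws of large numbers at finitely many fixed lags. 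The uniform-in-$k$ variance and bias bounds you state are therefore more than the split at fixed $K$ actually requires.
\end{itemize}
This is shorter and more transparent for the SSJ application, and it produces $G_0$ directly.

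The paper's route buys generality. It only requires $\widetilde J$ and its $\theta$-derivative to be bounded, piecewise continuous in $\omega$, and uniformly continuous in $\theta$. As \cref{rem:Fourier-summability-bullets} notes, this does not imply absolutely summable Fourier coefficients, whereas your argument rests essentially on the summable time-domain representation. The paper also reuses the same frequency-domain lemmas for consistency.
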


The asymptotic null distribution of the over-identification statistic $\widehat{\Upsilon}$ in \cref{sec:overid} follows immediately from the calculations in \citet[Section 9.5]{NeweyMcFadden1994}.

\begin{cor}[Over-identification test]
\label{cor:overid}
Under the assumptions of \cref{prop:AN}, and if moreover $\widehat{\Omega} \pto \Omega$ and $\Omega$ is non-singular, then $\widehat{\Upsilon} \dto \chi_{d_g-d_\theta}^2$.
\end{cor}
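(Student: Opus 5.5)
The plan is to reproduce the standard argument of \citet[Section 9.5]{NeweyMcFadden1994}, using the ingredients already established for \cref{prop:AN}. The key object is the asymptotic linear representation of $\sqrt{T}\widehat g(\widehat\theta)$. From the proof of \cref{prop:AN} I would take three facts:
\begin{itemize}[label={}, leftmargin=1em]
\item the uniform convergence of $\widehat g(\theta)$ and of its derivative $\partial\widehat g(\theta)/\partial\theta'$ to their population limits $g(\theta)$ and $G(\theta)$, which follows from stochastic equicontinuity under \cref{asn:stationarity,asn:jacobian};
\item consistency of $\widehat\theta$ from \cref{prop:consistency};
\item the first-order condition $\widehat G'\widehat W\widehat g(\widehat\theta)=0$, which holds with probability approaching one because $\theta_0$ is interior.
\end{itemize}
Together these give $\widehat G\pto G_0$. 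A mean value expansion $\widehat g(\widehat\theta)=\widehat g(\theta_0)+\bar G(\widehat\theta-\theta_0)$, with $\bar G\pto G_0$, combined with the expansion $\sqrt T(\widehat\theta-\theta_0)=-(G_0'WG_0)^{-1}G_0'W\sqrt T\widehat g(\theta_0)+o_p(1)$, then yields
\[
\sqrt T\widehat g(\widehat\theta)=\left\{I-G_0(G_0'WG_0)^{-1}G_0'W\right\}\sqrt T\widehat g(\theta_0)+o_p(1).
\]

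Next, I would plug in the variance estimators. Since $\widehat\Omega\pto\Omega$ with $\Omega$ nonsingular, and $\widehat G\pto G_0$ with full column rank (\cref{asn:invertibility}), the middle matrix of $\widehat\Upsilon$ converges in probability to
\[
M\equiv\Omega^{-1}-\Omega^{-1}G_0(G_0'\Omega^{-1}G_0)^{-1}G_0'\Omega^{-1}.
\]
Write $P\equiv I-G_0(G_0'WG_0)^{-1}G_0'W$. The key algebraic observation is that $MG_0=0$, so $MP=M$; similarly $P'M=M$. Hence
\[
\widehat\Upsilon=\sqrt T\widehat g(\theta_0)'\,M\,\sqrt T\widehat g(\theta_0)+o_p(1),
\]
no matter which weight $W$ was used. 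This is exactly why the statistic plugs in $\widehat\Omega$ rather than $\widehat W$.

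Finally, by \cref{asn:clt} and the continuous mapping theorem, $\widehat\Upsilon\dto\mathcal Z'\Omega^{1/2}M\Omega^{1/2}\mathcal Z$ with $\mathcal Z\sim N(0,I_{d_g})$. Let $A\equiv\Omega^{-1/2}G_0$. Then $\Omega^{1/2}M\Omega^{1/2}=I-A(A'A)^{-1}A'$, which is a symmetric idempotent matrix of rank $d_g-d_\theta$. The quadratic form is therefore $\chi^2_{d_g-d_\theta}$.

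The only step needing care is establishing $\widehat G\pto G_0$, which requires uniform convergence of the derivative of the sample moments. I would take this from the stochastic equicontinuity argument behind \cref{prop:AN}, applied to $\partial\widehat g/\partial\theta'$ using the derivative summability in \cref{asn:jacobian:ii}. Everything else is routine algebra.
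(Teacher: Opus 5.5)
Your argument is correct and matches the paper's approach: the paper proves the corollary by invoking the calculations in \citet[Section 9.5]{NeweyMcFadden1994}, and you have written those calculations out. They consist of the representation $\sqrt T\widehat g(\widehat\theta)=P\sqrt T\widehat g(\theta_0)+o_p(1)$, the identities $MG_0=0$ and $MP=M$ that make the limit free of $W$, and the idempotent matrix $I-A(A'A)^{-1}A'$ of rank $d_g-d_\theta$. One minor point: $\widehat G\pto G_0$ needs only consistency, \cref{lem:uniform-G}, and continuity of $G$ at $\theta_0$; the first-order condition is used only for the expansion of $\widehat\theta$.
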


\cref{app:hac} establishes the consistency of the HAC estimator of the variance-covariance matrix $\widehat{\Omega}$, defined in \cref{sec:inference}.

\section{Simulations}
\label{sec:simul}

We illustrate the empirical utility of our econometric procedures by applying them to data simulated from the workhorse two-asset HANK model in \citetalias{AuclertBardoczyRognlieStraub2021}, which builds on \citet{KaplanMollViolante2018}.

\paragraph{Structural model.}
The household block of the model is the one defined in \cref{sec:model}, and the general equilibrium is completed by specifying assumptions for financial intermediaries, firms, labor unions, monetary and fiscal policy, and market clearing. We refer to \citetalias{AuclertBardoczyRognlieStraub2021} (Appendix B.3) for details.

The true structural parameters are selected exactly as in \citetalias{AuclertBardoczyRognlieStraub2021} (Table B.III). In particular, the households' idiosyncratic productivity process $e_{i,t}$ evolves as an AR(1) process in logs, and the adjustment cost function for illiquid assets has the functional form
\[\Psi\left(a_{i,t},(1+r_t^a)a_{i,t-1}\right) \equiv \frac{\chi_1}{2}\frac{\left(a_{i,t}-(1+r_t^a)a_{i,t-1}\right)^2}{(1+r_t^a)a_{i,t-1}+\chi_0},\]
where $\chi_0,\chi_1>0$ are parameters.\footnote{In other words, we maintain the same calibration $\chi_2=2$ as \citetalias{AuclertBardoczyRognlieStraub2021}, using their notation.}

We assume the macroeconomy is driven by a subset of the shocks in the empirically estimated two-asset HANK model of \citetalias{AuclertBardoczyRognlieStraub2021} (Section 5.4): a TFP shock, a monetary policy shock, and a government spending shock. We limit ourselves to this list of shocks for purely pedagogical reasons, since this allows us to use the two-asset model definition in the \texttt{sequence-jacobian} Python package entirely off the shelf; however, it would be easy to add additional shocks (e.g., to markups). The shocks are assumed to follow independent Gaussian AR(1) processes with parameters given by the posterior means reported by \citetalias{AuclertBardoczyRognlieStraub2021} (Table F.III, columns labeled ``Posterior (Shocks)'').

\paragraph{Simulation and estimation settings.}
We assume that the researcher observes $T=120$ quarterly observations simulated from the linearized equilibrium of the full structural model.\footnote{The choice of sample size is motivated by the available time spans of cross-sectional data on consumption and savings in data sources like the Consumer Expenditure Survey and Distributional Financial Accounts.} The econometrician exploits $d_y=9$ household output series:
\begin{itemize}
\item The cross-sectional mean, variance, and centered third moment of log consumption.
\item The total amount of liquid assets held by households (i) below the median of the wealth distribution, (ii) between the 50th--90th percentiles of the wealth distribution, and (iii) between the 90th--99th percentiles of the wealth distribution.\footnote{Here ``wealth distribution'' refers to liquid plus illiquid assets.}
\item Same statistics as in the previous bullet point, but for illiquid assets.
\end{itemize}
For added realism, and to avoid numerical issues arising from dynamically singular moments, we add temporally and cross-sectionally independent Gaussian measurement errors to each of the 9 observed output series. For each series, the standard deviation of the measurement error equals 20\% of the population standard deviation of the original un-contaminated series.

The econometrician additionally observes the $d_x=3$ macro sufficient statistics: returns on liquid and illiquid assets, and the aggregate post-tax real earnings. Finally, they observe $d_z=2$ instruments, namely the time-$t$ innovations to the government spending and monetary policy shocks.

We estimate four parameters: the household EIS $\gamma$ and discount factor $\beta$, and the illiquid asset adjustment cost parameters $\chi_0$ and $\chi_1$. The researcher is assumed to have correct knowledge of the true values of the remaining parameters in the household block.\footnote{These include the idiosyncratic productivity process parameters and the steady-state values of the macro sufficient statistics. In practice, the latter values could be estimated from long-run time series averages.} The estimation procedure only requires knowledge of the structure of the household block, and does not use any information about the rest of the general equilibrium (including the nature of any unobserved variables or shocks). We use the \emph{ad hoc} minimum distance weight matrix proposed in \cref{sec:weight_mat}, and compute the estimator using numerical optimization.\footnote{We use the gradient-based L-BFGS-B algorithm implemented in SciPy's \texttt{minimize} routine. In each simulation, we first draw 5 initial parameter values at random, perform 5 optimization steps for each, and record the resulting parameter vector that yields the lowest objective function value across the 5 attempts; then the final, full optimization is initialized at this single parameter vector. The 5 random draws of initial values are uniform on $[0.5 \times \text{truth},1.5 \times \text{truth}]$ (independently across the four parameters), except for $\beta$ which due to its natural parameter space is drawn from $[0.99 \times \text{truth},1.01 \times \text{truth}]$. The optimization routine enforces some very loose bounds on the parameters that never bind at the computed optimum in any simulation.}

In addition to analytical confidence intervals and over-identification tests, we consider the bootstrap procedures developed in \cref{app:bootstrap} (without the correction for potentially non-Gaussian shocks). Due to the high computational cost of repeatedly running nonlinear optimizations, we approximate the performance of bootstrap procedures using the warp-speed diagnostic of \citet{GiacominiPolitisWhite2013}.\footnote{That is, we only generate a single bootstrap draw in each simulated data set. The bootstrap quantiles for the confidence interval and over-identification test are computed from the distribution of bootstrap draws across simulated data sets, rather than across bootstrap iterations in a given simulated data set. This strategy is of course infeasible in reality, where only a single data set is available. The warp-speed diagnostic checks whether the bootstrap distribution of a statistic lines up with its sampling distribution, after integrating out over the distribution of the data. As such, the diagnostic can detect various types of bootstrap failure, but not all \citep[p.\ 575]{GiacominiPolitisWhite2013}.}

\paragraph{Results.}

\begin{table}[t]
\centering
\begin{tabular}{@{}lrrrrrr@{}} \toprule
\multicolumn{5}{c}{ } & Analyt.\ & Bootstr.\  \\
\cmidrule(lr){6-7}
Param.\ & Truth & Bias & Stdev. & RMSE & \multicolumn{2}{c}{CI coverage} \\
\midrule
$\gamma$ & 0.5000 & 0.0214 & 0.0844 & 0.0871 & 0.467 & 0.904 \\
$\beta$ & 0.9763 & 0.0000 & 0.0032 & 0.0032 & 0.479 & 0.912 \\
$\chi_0$ & 0.2500 & 0.0374 & 0.0893 & 0.0969 & 0.990 & 0.947 \\
$\chi_1$ & 6.4164 & -0.7883 & 1.7271 & 1.8985 & 0.621 & 0.857 \\
\midrule
\multicolumn{5}{c}{ } & \multicolumn{2}{c}{Rejection rate} \\
\cmidrule(lr){6-7}
\multicolumn{5}{@{}l}{Over-ID test} & 0.949 & 0.115 \\
\bottomrule
\end{tabular}
\caption{Simulation results across 512 Monte Carlo replications. Sample size: $T=120$. Nominal significance level: $\alpha=0.1$. Columns left to right: parameter name, true value, estimator bias, estimator standard deviation, estimator root mean squared error, coverage of analytical confidence interval, and coverage of bootstrap confidence interval. The last row reports the rejection rate of the over-identification test, either with analytical (left) or bootstrap (right) critical value. Bootstrap coverage and rejection rates are approximated with the warp-speed diagnostic of \citet{GiacominiPolitisWhite2013}.} \label{tab:simul}
\end{table}

\cref{tab:simul} shows that our estimator is able to deliver accurate inference on the four parameters in the household block despite the moderate sample size. The finite-sample biases and standard deviations of all parameter estimators are modest relative to the scale of the true parameter values. As expected due to the limited sample and the nonlinearity of the SSJs in the underlying parameters, the analytical confidence interval developed in \cref{sec:inference} under-covers substantially for 3 of the 4 parameters, though the extent of under-coverage may be tolerable for initial experimentation. However, the bootstrap confidence interval described in \cref{app:bootstrap} achieves near-nominal coverage for all parameters. As for the over-identification test, the bootstrap version has near-nominal rejection rate, while the analytical critical value leads to severe over-rejection.

We conclude that the parameter estimators, as well as the bootstrap confidence intervals and tests, perform well in an empirically calibrated DGP with modest sample size $T=120$. By contrast, analytical confidence intervals should be used only as a rough guide, and the analytical critical value for the over-identification test is not reliable in our simulation DGP.

%\section{Application}
%\label{sec:appl}
%
%[TO DO XX]

\section{Conclusion}
\label{sec:concl}

We proposed a general procedure for estimating and testing a single block of a heterogeneous agent model. Like the existing Generalized Method of Moments procedures used to estimate individual model blocks in representative agent settings, our limited-information method is fully consistent with general equilibrium but does not require imposing assumptions on the remaining blocks of the model. Our procedure is conceptually and computationally easy to implement on any heterogeneous agent model block that fits into the sequence-space framework of \citetalias{AuclertBardoczyRognlieStraub2021}.

The utility of estimating a single model block is threefold. First, the parameters of this block may be of direct scientific interest, or they can serve as robust calibration inputs into a subsequent fully-specified structural model. Second, one can estimate any policy counterfactuals that are functions only of the estimated model block. Third, we developed an over-identification test that can evaluate the empirical validity of the model block in isolation, without the risk of contamination from other potentially misspecified model blocks.

\clearpage
\appendix
\crefalias{section}{appsec}

% Appendix: number theorem-type environments per appendix section (A.1, B.1, C.1, ...).
% Counters stay globally monotonic (never reset), so cleveref orders main-text results
% before appendix ones; per-section restart of the printed number is done via a base offset.
\newcounter{asnbase}\newcounter{propbase}
\setcounter{asnbase}{\value{asn}}
\setcounter{propbase}{\value{prop}}
% restart the appendix count of assumptions/propositions at each appendix section
\AddToHook{cmd/section/before}{\setcounter{asnbase}{\value{asn}}\setcounter{propbase}{\value{prop}}}

\makeatletter
% asn and prop also occur in the main text, so subtract a base to start the appendix count at 1
\renewcommand{\theasn}{\thesection.\number\numexpr\c@asn-\c@asnbase\relax}
\renewcommand{\theprop}{\thesection.\number\numexpr\c@prop-\c@propbase\relax}
% keep hyperref anchors unique even though the printed form restarts each section
\@ifundefined{theHasn}{}{\renewcommand{\theHasn}{\number\c@asn}}
\@ifundefined{theHprop}{}{\renewcommand{\theHprop}{\number\c@prop}}
\makeatother

\renewcommand{\thelem}{\thesection.\arabic{lem}}
\renewcommand{\thethm}{\thesection.\arabic{thm}}
\renewcommand{\thecor}{\thesection.\arabic{cor}}
\renewcommand{\theconj}{\thesection.\arabic{conj}}
\renewcommand{\thedefn}{\thesection.\arabic{defn}}
\renewcommand{\theexm}{\thesection.\arabic{exm}}
\renewcommand{\theremark}{\thesection.\arabic{remark}}

\begin{appendices}
\section{Bootstrap inference}
\label{app:bootstrap}

Here we describe a bootstrap inference approach that is designed to capture the nonlinearity of the SSJs in the structural parameters.

\paragraph{Multiplier bootstrap.}
The bootstrap procedures build on the Gaussian multiplier bootstrap. This procedure assumes that the data is Gaussian \citep{MeyerPaparoditis2023}. We discuss how to correct for non-Gaussian data below.

The bootstrap requires an estimate $\widehat{f}(\omega)$ of the spectral density of the data $\zeta_t \equiv (y_t',x_t',z_t')'$. Letting $\widehat{S}(\omega)$ denote the periodogram of $\zeta_t$, we choose the periodogram smoothing estimator
\[\widehat{f}(\omega) \equiv \frac{\sum_{j=1}^{T-1} \widetilde{K}\left(\frac{\min\lbrace |\omega-\omega_j|,2\pi-|\omega-\omega_j|\rbrace}{B}\right)\widehat{S}(\omega_j)}{\sum_{j=1}^{T-1} \widetilde{K}\left(\frac{\min\lbrace |\omega-\omega_j|,2\pi-|\omega-\omega_j|\rbrace}{B}\right)}\]
with Epanechnikov kernel $\widetilde{K}(u)=\max\lbrace 1-u^2,0\rbrace$ and spectral bandwidth $B = T^{-0.2}$.\footnote{This bandwidth rule is approximately integrated-MSE-optimal for an AR(1) process with $\rho=0.7$, as can be verified by plugging into equations 6.2.108 and 6.2.119 in \citet{Priestley1981}. Details are available upon request.}

In each bootstrap iteration, the algorithm proceeds as follows:
\begin{enumerate}
	\item Draw pseudo-Fourier-transforms $F_j^\dagger \sim N_c(0, \widehat{f}(\omega_j), 0)$ independently across Fourier frequencies $\omega_j=2\pi j/T$, $j=1,\dots,\lfloor (T-1)/2 \rfloor$ (see \citealp[Appendix A.1]{MeyerPaparoditis2023}, for the definition of the complex normal distribution). Then let the pseudo-periodogram draw be given by $\widehat{S}^\dagger(\omega_j) \equiv F_j^\dagger (F_j^\dagger)^*$ for all $j=1,\dots,\lfloor (T-1)/2 \rfloor$ and $\widehat{S}^\dagger(\omega_j) \equiv \overline{\widehat{S}^\dagger(\omega_{T-j})}$ for the remaining $j\leq T-1$, where an asterisk denotes complex conjugate transpose and bar denotes elementwise complex conjugate. At the boundaries we set $\widehat{S}^\dagger(0)=0$ and (if $T$ is even) $\widehat{S}^\dagger(\pi)=0$.
	\item Compute the resampled estimator
	\[\widehat{\theta}^\dagger \equiv \argmin_{\theta \in \Theta}\; \widehat{g}^\dagger(\theta)'\widehat{W}\widehat{g}^\dagger(\theta),\]
	using the re-centered moment function\footnote{The re-centering ensures that the sample moment function has bootstrap expectation approximately equal to zero at the real-data estimate $\widehat{\theta}$.}
	\[\widehat{g}^\dagger(\theta) \equiv \frac{2\pi}{T}\sum_{j=0}^{T-1} \ve\left\lbrace \widehat{S}_{yz}^\dagger(\omega_j) - J(\omega_j;\theta) \widehat{S}_{xz}^\dagger(\omega_j) - \left[\widehat{f}_{yz}(\omega_j) - J(\omega_j;\widehat{\theta}) \widehat{f}_{xz}(\omega_j)\right] \right\rbrace,\]
	where the ``$yz$'' and ``$xz$'' subscripts indicate the $(y_t,z_t)$ and $(x_t,z_t)$ blocks of the cross-periodogram or cross-spectrum.\footnote{While $\widehat{g}^\dagger(\theta)$ should be real in large samples, in finite samples we drop any imaginary part.}
\end{enumerate}
Though we do not formally prove bootstrap validity, the analysis in \citet{MeyerPaparoditis2023} suggests that the resampling distribution of $\widehat{\theta}^\dagger-\widehat{\theta}$ (conditional on the data) approximates the (unconditional) sampling distribution of $\widehat{\theta}-\theta_0$, provided the data is Gaussian. The bootstrap standard errors are the standard deviations of the elements of $\widehat{\theta}^\dagger$ across resamples. An (asymmetric) $1-\alpha$ confidence interval for $\theta_0$ is given by $[2\widehat{\theta} - q_{1-\alpha/2}^\dagger, 2\widehat{\theta} - q_{\alpha/2}^\dagger]$, where $q_\alpha^\dagger$ is the $\alpha$-th quantile of $\widehat{\theta}^\dagger$ across bootstrap samples.

\paragraph{Correction for non-Gaussian data.}
To allow for non-Gaussian shocks, we can rescale the multiplier bootstrap draws $\widehat{\theta}^\dagger$ using the Multivariate Frequency-domain Hybrid Bootstrap (MFHB) of \citet{MeyerPaparoditis2023}. The calculations in the proof of \cref{thm:AN} show that $\widehat{\theta}-\theta_0$ is asymptotically equivalent with $-(\widehat{G}'\widehat{W}\widehat{G})^{-1}\widehat{G}'\widehat{W}\widehat{g}(\theta_0)$, where we recall that $\widehat{g}(\theta_0)=\frac{2\pi}{T}\sum_{j=0}^{T-1} \ve\left\lbrace \widehat{S}_{yz}(\omega_j) - J(\omega_j;\theta_0) \widehat{S}_{xz}(\omega_j) \right\rbrace$ is a spectral mean. We can approximate the Gaussian and non-Gaussian variance-covariance matrices of $\widehat{g}(\theta_0)$ by applying the MFHB algorithm with $\theta_0 \approx \widehat\theta$ (fixed across bootstrap iterations); denote these variance-covariance matrices by $\widehat{\Omega}_\text{G}$ and $\widehat{\Omega}_\text{NG}$, respectively.\footnote{In the notation of \citet[p.\ 2379, see also Remark 3.5]{MeyerPaparoditis2023}, these are $\mathbf{G}_n^*$ and $\mathbf{G}_n^\circ$, respectively. Note that the proofs in \citet{MeyerPaparoditis2023} do not technically allow for data-dependent spectral means (which ours is due to the dependence on $\widehat\theta$), but we conjecture that the result can be extended to cover our setting.} Starting with the multiplier bootstrap draws $\widehat{\theta}^\dagger$, the rescaled draws
\begin{align*}
\widehat{\theta}_\text{rescale}^\dagger &\equiv \widehat{\theta} + \left\lbrace (\widehat{G}'\widehat{W}\widehat{G})^{-1}\widehat{G}'\widehat{W}\widehat{\Omega}_\text{NG}\widehat{W}\widehat{G}(\widehat{G}'\widehat{W}\widehat{G})^{-1}\right\rbrace^{1/2} \\
&\qquad\qquad \times \left\lbrace (\widehat{G}'\widehat{W}\widehat{G})^{-1}\widehat{G}'\widehat{W}\widehat{\Omega}_\text{G}\widehat{W}\widehat{G}(\widehat{G}'\widehat{W}\widehat{G})^{-1}\right\rbrace^{-1/2}(\widehat{\theta}^\dagger-\widehat{\theta})
\end{align*}
then have the correct asymptotic variance-covariance matrix (as in \cref{prop:AN}) even under non-Gaussianity.\footnote{We use the symmetric positive semidefinite matrix square root.} The bootstrap standard errors and confidence intervals can now be computed from the rescaled draws. %The rescaling step is available as an option in our code suite on GitHub.

\paragraph{Over-identification test.}
The bootstrap critical value for the over-identification test is obtained by recomputing the estimator and over-identification statistic from \cref{sec:overid} in every bootstrap data set. As above, we use the re-centered moments $\widehat{g}^\dagger(\widehat{\theta}^\dagger)$ when computing the statistic, as this imposes the null hypothesis of correct specification on the bootstrap DGP. The only remaining challenge is that the HAC estimator $\widehat{\Omega}$ in the formula for the statistic is defined for data in the time domain, but we bootstrap the periodogram rather than the underlying data. Hence, for every draw of the pseudo-Fourier-transforms $F_j^\dagger$, we apply the (scaled) inverse Fourier transform to compute the corresponding data $\lbrace \zeta_t^\dagger \rbrace$ in the time domain. Then we apply the HAC procedure to this data (just as with the real data).

Being based on Gaussian draws, this procedure is designed to work well when the real data is Gaussian, but we conjecture it remains \emph{asymptotically} valid even if the real data is non-Gaussian. This is because as $T \to \infty$, the bootstrap distribution of the over-identification statistic converges to a $\chi^2$ distribution with $d_g-d_\theta$ degrees of freedom, since the statistic is asymptotically pivotal. This is the correct distribution for computing asymptotic critical values whether or not the data is Gaussian, cf.\ \cref{cor:overid}.

\section{Asymptotic variance estimation}
\label{app:hac}
Here we state the consistency of the HAC long-run variance estimator. Recall its definition from \cref{sec:inference}:
\[
\widehat\Omega
\equiv \sum_{|h|\le L_T} K\left(\frac{h}{L_T}\right)\widehat\Gamma_{\psi}(h),\quad \text{where} \quad \widehat\Gamma_{\psi}(h)\equiv \frac{1}{T}\sum_{t=1}^{T}\widehat\psi_t\widehat\psi_{t-h}',\quad \widehat{\psi}_t \equiv \psi_{t,T}(\widehat\theta),
\]
and we define the truncated triangular array score
\[
\psi_{t,T}(\theta)
\equiv
\ve\left(\left(y_t-\sum_{k=0}^{T-t}J_k(\theta)x_{t+k}\right)z_t'\right)\1\{1\le t\le T\}.
\]
For simplicity, we here impose in the definition of the HAC estimator that the data has mean zero (\cref{asn:stationarity}), and we do not de-mean the sample score $\widehat\psi_t$ itself. We conjecture that both types of sample de-meaning have negligible effects asymptotically under our assumptions (see \cref{sec:est-var}).

Define the population score process
\[
\psi_t(\theta) \equiv \ve\left(\left(y_t-\sum_{k=0}^{\infty}J_k(\theta)x_{t+k}\right)z_t'\right).
\]
Then $g(\theta) \equiv \int_0^{2\pi} \ve\lbrace S_{yz}(\omega)-J(\omega;\theta)S_{xz}(\omega)\rbrace\,d\omega = \E[\psi_t(\theta)]$.
Define the long-run variance
\[
\Omega \equiv \sum_{h\in\mathbb Z}\Gamma_\psi(h),\quad \text{where}\quad \Gamma_\psi(h) \equiv \E\left[\psi_h(\theta_0)\psi_0(\theta_0)'\right].
\]
This is the same $\Omega$ that appears in \cref{asn:clt} whenever that assumption (and our other assumptions) hold; see \cref{rem:time-domain-CLT} below.

\begin{asn}[HAC kernel and bandwidth]
\label{asn:HAC-kernel}
Assume:
\begin{asnitem}
\item \label{asn:HAC-kernel:i} The kernel $K(\cdot)$ is symmetric, bounded, supported on $[-1,1]$, and continuous at $0$ with $K(0)=1$.
\item \label{asn:HAC-kernel:ii} The bandwidth satisfies $L_T\to\infty$ and $L_T/\sqrt{T}\to 0$.
\end{asnitem}
\end{asn}
This assumption follows \citet{Andrews1991} and is satisfied by standard kernels, such as Newey-West (Bartlett).

\begin{asn}[Stronger cumulant summability]
\label{asn:stronger-stationarity}
Assume $\zeta_t$ has finite 8th moments, and strengthen the cumulant summability condition in \cref{asn:stationarity:ii} to hold for all $\ell=2,3,\dots,8$.
\end{asn}
Eighth-order cumulant control on $\zeta_t$ is used to control fourth-order cumulants of the score,
which is quadratic in $\zeta_t$.

\begin{asn}[Weighted summability of SSJ]
\label{asn:jacobian-tail-summability}
\(
\sum_{k=0}^\infty k\|J_k(\theta_0)\| < \infty.
\)
\end{asn}

The weighted summability condition ensures that the sample boundary truncation in the triangular array score $\psi_{t,T}(\theta_0)$ is asymptotically negligible.

The HAC estimation error decomposes into four terms that each vanish: error from plugging in $\widehat\theta$ for $\theta_0$, stochastic fluctuation of the covariance average, kernel bias, and truncation of the infinite lag sum. Therefore, the HAC estimator is consistent.

\begin{prop}[Consistent HAC estimator]
\label{prop:HAC-Omega}
Under the assumptions of \cref{prop:AN}, and if moreover \cref{asn:HAC-kernel,asn:stronger-stationarity,asn:jacobian-tail-summability} hold, then
\[
\widehat\Omega
\pto \Omega.
\]
\end{prop}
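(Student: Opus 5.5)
We follow the four-term decomposition announced just before the statement. Define the infeasible estimator $\widetilde\Omega \equiv \sum_{|h|\le L_T}K(h/L_T)\widetilde\Gamma(h)$ with $\widetilde\Gamma(h)\equiv T^{-1}\sum_{t}\psi_t(\theta_0)\psi_{t-h}(\theta_0)'$, using the untruncated population score. We then write
\[
\widehat\Omega-\Omega = (\widehat\Omega-\widetilde\Omega^{\mathrm{tr}}) + (\widetilde\Omega^{\mathrm{tr}}-\widetilde\Omega) + (\widetilde\Omega - \E\widetilde\Omega) + (\E\widetilde\Omega-\Omega),
\]
where $\widetilde\Omega^{\mathrm{tr}}$ uses the truncated scores $\psi_{t,T}(\theta_0)$. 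We will show each difference is $o_p(1)$.

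\textbf{Kernel bias and fluctuation.} The population terms are handled first, following \citet{Andrews1991}. Since $\psi_t(\theta_0)=\ve(y_tz_t')-\sum_k \ve(J_k(\theta_0)x_{t+k}z_t')$ is a quadratic form in $\zeta$ with absolutely summable coefficients (\cref{asn:jacobian}), its autocovariances $\Gamma_\psi(h)$ are expressed via second and fourth cumulants of $\zeta$. These are summable in $h$ by \cref{asn:stationarity:ii} together with $\sum_k\|J_k\|<\infty$. Dominated convergence with $K(h/L_T)\to1$ and $|K|$ bounded then gives $\E\widetilde\Omega\to\Omega$; the $(1-|h|/T)$ factors are harmless since $L_T/T\to0$. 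For the variance term, $\var(\widetilde\Gamma(h))$ involves fourth-order cumulants of $\psi$, hence eighth-order cumulants of $\zeta$ (\cref{asn:stronger-stationarity}). This yields $\sum_{|h|\le L_T}\|\widetilde\Gamma(h)-\E\widetilde\Gamma(h)\|=O_p(L_T/\sqrt T)=o_p(1)$ by \cref{asn:HAC-kernel:ii}. The bookkeeping is tedious, because each score sums infinitely many $x_{t+k}$; it goes through because the product-cumulant expansion of Brillinger reduces everything to sums of joint cumulants weighted by $\|J_k\|$'s.

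\textbf{Boundary truncation.} Here $\psi_t(\theta_0)-\psi_{t,T}(\theta_0)=-\sum_{k>T-t}\ve(J_k(\theta_0)x_{t+k}z_t')$, whose second moment is $O\big((\sum_{k>T-t}\|J_k\|)^2\big)$. Summing over $t$ gives $\sum_{t=1}^T\sum_{k>T-t}\|J_k\|\le\sum_k k\|J_k\|<\infty$ by \cref{asn:jacobian-tail-summability}. Hence $T^{-1}\sum_t\|\psi_t-\psi_{t,T}\|\,\|\psi_{t-h}\|=O_p(T^{-1})$ for each $h$, using Cauchy--Schwarz with bounded second moments of $\psi_t$. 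Multiplying by the at most $2L_T+1$ lags gives $O_p(L_T/T)=o_p(1)$.

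\textbf{Plugging in $\widehat\theta$.} This step is the main obstacle, since $\psi_{t,T}(\theta)$ is linear in the infinite family $\{J_k(\theta)\}$. By the mean value theorem and \cref{asn:jacobian:ii},
\[
\|\psi_{t,T}(\widehat\theta)-\psi_{t,T}(\theta_0)\|\le \|\widehat\theta-\theta_0\|\sum_k \sup_\theta\|\partial\ve J_k/\partial\theta'\|\,\|x_{t+k}\|\,\|z_t\| \equiv \|\widehat\theta-\theta_0\| R_t,
\]
where $R_t$ has bounded second moments uniformly in $t$. \cref{prop:AN} gives $\|\widehat\theta-\theta_0\|=O_p(T^{-1/2})$, so each $\widehat\Gamma(h)-\widetilde\Gamma^{\mathrm{tr}}(h)$ is $O_p(T^{-1/2})$ uniformly in $h$; the cross terms are bounded via Cauchy--Schwarz on $T^{-1}\sum_t R_t^2=O_p(1)$ and $T^{-1}\sum_t\|\psi_{t,T}\|^2=O_p(1)$. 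Summing over $|h|\le L_T$ with bounded kernel yields $O_p(L_T/\sqrt T)=o_p(1)$, which is exactly where the rate $L_T/\sqrt T\to0$ is needed. Combining the four terms proves the proposition.
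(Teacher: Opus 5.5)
Your proof is correct. It uses the same four ingredients as the paper's proof of \cref{thm:HAC-Omega}:
\begin{itemize}
\item the mean-value plug-in bound at rate $L_T/\sqrt{T}$;
\item kernel bias via dominated convergence and summable score autocovariances;
\item an eighth-order-cumulant variance bound;
\item edge truncation controlled by $\sum_k k\|J_k(\theta_0)\|<\infty$.
\end{itemize}
The one structural difference is where the truncation is handled.

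The paper keeps the truncated triangular-array scores $\psi_{t,T}(\theta_0)$ throughout. It proves the bias and variance statements for those scores and only in its last step shows $\Omega_T\to\Omega$. Because $\E[\psi_{t,T}(\theta_0)]=\E[\eta_{t,T}]\neq 0$, its bias step must split $\Gamma_{\psi,T}(h)$ into a centered part and a mean part built from $\E[\eta_{t,T}]$, invoking \cref{lem:truncation-error} a second time.

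You instead remove the truncation pathwise, at cost $O_p(L_T/T)$. You then run the bias and variance arguments on the stationary untruncated score, which has mean zero by \cref{asn:moment_cond}. This gives $\E\widetilde\Gamma(h)=(1-|h|/T)\Gamma_\psi(h)$ exactly, so your bias step is a single dominated-convergence argument.

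The price is that your variance step expands the infinite series defining $\psi_t(\theta_0)$ inside fourth-order products of scores. You should justify this by noting that the series converges in $L^4$, which follows from Minkowski's inequality, finite eighth moments, and $\sum_k\|J_k(\theta_0)\|<\infty$. The paper only ever expands finite sums in \cref{lem:Gamma-psiT-envelope,lem:score-4th-summable}, so it avoids this issue.

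Both routes need $L_T/\sqrt{T}\to 0$ for the same two reasons: the plug-in step, and the crude variance bound that sums $O(T^{-1/2})$ fluctuations over $2L_T+1$ lags.
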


\section{Proofs}\label{app:proofs}

In this appendix, we develop asymptotic theory for a general ``spectral GMM'' estimator, and then specialize it to the setup in \cref{sec:framework,sec:theory}. \cref{sec:general-spectral-gmm} develops consistency, asymptotic normality, and HAC variance estimation in the general setup, \cref{sec:aux-general-spectral-gmm} collects auxiliary results, and \cref{sec:specialization-our-setup} specializes to our applications.

Let the data $\zeta_t\in\mathbb R^{d_\zeta}$ be real-valued, with spectral density matrix $S(\omega)\in\mathbb C^{d_\zeta\times d_\zeta}$ for $\omega\in[0,2\pi]$. Define the discrete Fourier transform and the matrix periodogram
\[
F_\zeta(\omega)\equiv \frac{1}{\sqrt{T}}\sum_{t=1}^T \zeta_t e^{-\iota\omega t},
\quad
\widehat S(\omega)\equiv \frac{1}{2\pi}F_\zeta(\omega)F_\zeta(\omega)^*,
\]
where $A^*$ denotes the conjugate transpose of $A$ for a generic vector/matrix $A$. 
Let $\widetilde J(\omega;\theta)\in\mathbb C^{d_g\times d_\zeta^2}$ be a function that is measurable in $\omega$ and
indexed by $\theta\in\Theta\subset\mathbb R^{d_\theta}$. Define the population and sample spectral means
\[
g(\theta)\equiv \int_0^{2\pi}\widetilde J(\omega;\theta)\ve(S(\omega))\,d\omega,
\quad
\widehat g(\theta)\equiv \frac{2\pi}{T}\sum_{j=0}^{T-1}\widetilde J(\omega_j;\theta)\ve(\widehat S(\omega_j)).
\]
We refer to $\widetilde{J}$ as a spectral weight function, and restrict attention to the following class.
\begin{asn}[Conjugate symmetry of spectral weight function]
\label{asn:J-conjugate-symmetry} For any $\theta\in\Theta$, $\widetilde{J}(0;\theta)$ is real, and
\[
\widetilde J(2\pi-\omega;\theta) = \overline{\widetilde J(\omega;\theta)}\quad \text{for all } \omega\in[0,2\pi].
\]
Here $\overline{A}$ denotes the elementwise conjugate of $A$ for a generic matrix $A$.
\end{asn}
\cref{asn:J-conjugate-symmetry} and $\zeta_t \in \mathbb{R}^{d_\zeta}$ imply that $g(\theta),\widehat g(\theta)\in\mathbb R^{d_g}$. We consider a spectral GMM estimator $\widehat\theta$ defined by
\[\widehat\theta\in\argmin_{\theta\in\Theta}\widehat Q(\theta),\quad \text{where} \quad \widehat Q(\theta) \equiv \widehat g(\theta)'\widehat W\widehat g(\theta).\]
Define also the corresponding population objective function $Q(\theta)\equiv g(\theta)'Wg(\theta)$.

For clarity, we now briefly indicate how this general framework specializes to the application in \cref{sec:framework,sec:theory}. Let $M_z\in\mathbb R^{d_z\times d_\zeta}$ be the selection matrix formed by the last $d_z$ rows of the identity matrix $I_{d_\zeta}$, so that $z_t=M_z\zeta_t$. Define the spectral weight function
\[
\widetilde J_\text{special}(\omega;\theta) \equiv M_z\otimes \left(I_{d_y},\,-J(\omega;\theta),\,0_{d_y\times d_z}\right)\in\mathbb C^{d_g\times d_\zeta^2}.
\]
Then
\[
\widetilde J_\text{special}(\omega;\theta)\ve(S(\omega))
=\ve\lbrace S_{yz}(\omega)-J(\omega;\theta)S_{xz}(\omega)\rbrace.
\]

\subsection{Asymptotic results for general spectral GMM}\label{sec:general-spectral-gmm}
% ============================================================
\subsubsection{Consistency}

\begin{asn}[Regularity of spectral weight function]
\label{asn:J-regularity}
Assume:
\begin{asnitem}
\item \label{asn:J-regularity:i} (Riemann integrability in $\omega$.) For each $\theta\in\Theta$,
$\omega\mapsto \widetilde J(\omega;\theta)$ is bounded and piecewise continuous on $[0,2\pi]$.
\item \label{asn:J-regularity:ii} (Uniform continuity in $\theta$.) Define the modulus
\[
\Delta_{\widetilde J}(\delta)\equiv \sup_{\theta,\widetilde\theta \in \Theta \colon \|\widetilde\theta-\theta\|\le\delta}\sup_{\omega\in[0,2\pi]}
\|\widetilde J(\omega;\widetilde\theta)-\widetilde J(\omega;\theta)\|.
\]
Then $\Delta_{\widetilde J}(\delta)\to 0$ as $\delta\to 0$.
\end{asnitem}
\end{asn}

\begin{asn}[Identification: general case]
\label{asn:identification-general}
For any $\theta \in \Theta$, $g(\theta)=0$ if and only if $\theta=\theta_0$.
\end{asn}

\begin{thm}[Consistency: general case]
\label{thm:consistency}
Under \cref{asn:parameter-space,asn:stationarity,asn:weight-convergence,asn:identification-general,asn:J-conjugate-symmetry,asn:J-regularity}, \[\widehat\theta\pto \theta_0.\]
\end{thm}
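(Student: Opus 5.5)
The plan is to apply \citet[Theorem 2.1]{NeweyMcFadden1994}. That result requires four things: (a) $\Theta$ is compact, which is \cref{asn:parameter-space}; (b) $Q$ is uniquely minimized at $\theta_0$; (c) $Q$ is continuous; and (d) $\sup_{\theta\in\Theta}|\widehat Q(\theta)-Q(\theta)|\pto 0$.

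Condition (b) follows from \cref{asn:identification-general} together with positive definiteness of $W$ (\cref{asn:weight-convergence}). We have $Q(\theta)\ge 0$, and $Q(\theta)=0$ if and only if $g(\theta)=0$, which holds if and only if $\theta=\theta_0$.

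Condition (c) follows from the bound
\[
\|g(\tilde\theta)-g(\theta)\|\le \Delta_{\widetilde J}(\|\tilde\theta-\theta\|)\int_0^{2\pi}\|\ve(S(\omega))\|\,d\omega .
\]
The integral is finite because $S$ is bounded: summability of the second-order cumulants in \cref{asn:stationarity:ii} gives this. By \cref{asn:J-regularity:ii} the right-hand side goes to zero, so $g$ is uniformly continuous and hence $Q$ is continuous.

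For (d), I would first reduce uniform convergence of $\widehat Q$ to $\sup_\theta\|\widehat g(\theta)-g(\theta)\|\pto 0$. Since $g$ is continuous on a compact set, $\sup_\theta\|g(\theta)\|<\infty$, and $\widehat W\pto W$. Expanding the quadratic form then shows that uniform convergence of $\widehat g$ implies uniform convergence of $\widehat Q$.

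The uniform convergence of $\widehat g$ is the main step, and I would do it through pointwise convergence plus stochastic equicontinuity, exploiting that $\widehat g$ is linear in $\widetilde J$. Stochastic equicontinuity comes from
\[
\|\widehat g(\tilde\theta)-\widehat g(\theta)\|\le \Delta_{\widetilde J}(\|\tilde\theta-\theta\|)\cdot \frac{2\pi}{T}\sum_{j=0}^{T-1}\|\ve(\widehat S(\omega_j))\|.
\]
It then suffices to show that the last average is $O_p(1)$. Since $\widehat S(\omega_j)=F_\zeta F_\zeta^*/(2\pi)$ is rank one, $\|\widehat S(\omega_j)\|\le \|F_\zeta(\omega_j)\|^2/(2\pi)$. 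By Parseval,
\[
\sum_j\|F_\zeta(\omega_j)\|^2=\sum_t\|\zeta_t\|^2,
\]
so the average equals $T^{-1}\sum_t\|\zeta_t\|^2$, whose expectation is finite. Markov's inequality then gives the $O_p(1)$ bound. Only second moments are used here, which is exactly where bilinearity pays off.

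Pointwise convergence, $\widehat g(\theta)\pto g(\theta)$ for each fixed $\theta$, is the classical consistency of spectral means with bounded, piecewise continuous weights \citep[Theorem 7.6.1/5.10]{Brillinger1981}. I would prove it directly:
\begin{enumerate}
\item Under \cref{asn:stationarity}, the expected periodogram satisfies $E\widehat S(\omega)=S(\omega)+O(T^{-1})$ uniformly, up to the Fejér-kernel bias. Combined with Riemann integrability of $\widetilde J\,\ve(S)$ (\cref{asn:J-regularity:i} plus continuity of $S$), this makes the mean of $\widehat g(\theta)$ converge to $g(\theta)$.
\item The variance of $\widehat g(\theta)$ is $O(T^{-1})$. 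Periodogram ordinates at distinct Fourier frequencies are asymptotically uncorrelated, and the fourth-cumulant contribution is $O(T^{-1})$ by the summability of $\ell=4$ cumulants.
\end{enumerate}
Conjugate symmetry (\cref{asn:J-conjugate-symmetry}) ensures everything is real. Finally, I would combine pointwise convergence on a finite $\delta$-net of the compact $\Theta$ with the equicontinuity bound, letting $\delta\to 0$, to get uniform convergence.

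I expect the variance bound in the pointwise step to be the most technical part. If it becomes unwieldy, the fallback is to approximate $\widetilde J(\cdot;\theta)$ in sup norm by trigonometric polynomials on continuity pieces. That reduces $\widehat g$ to finite combinations of sample autocovariances, which converge by the ergodic-type $L^2$ arguments using cumulant summability. The approximation error is controlled by the same $O_p(1)$ bound as above.
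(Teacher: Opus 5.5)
Your proof is correct and follows the paper's overall structure: \citet[Theorem 2.1]{NeweyMcFadden1994}, continuity of $g$ via the modulus bound of \cref{rem:g-continuity}, reduction from $\widehat Q$ to $\widehat g$, and uniform convergence of $\widehat g$ from pointwise convergence plus stochastic equicontinuity. The pointwise step is as in \cref{lem:pointwise-linear-mean}: Fej\'er bias plus a periodogram-covariance variance bound.

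The genuine difference is in the equicontinuity step (\cref{lem:SE}). The paper applies Cauchy--Schwarz to get the bound $\sqrt{2\pi}\,\Delta_{\widetilde J}(\delta)\sqrt{V_T}$, where $V_T=\frac{2\pi}{T}\sum_j\|\widehat S(\omega_j)\|^2$. It controls $V_T$ through a uniform bound on $\E\|\widehat S(\omega_j)\|^2$, which rests on \cref{lem:periodogram-mean,lem:periodogram-cov} and hence on fourth-order cumulant summability. You instead pair the sup-norm modulus with the $\ell^1$ average of the periodogram. You then use the exact identity $\frac{2\pi}{T}\sum_{j=0}^{T-1}\|\ve(\widehat S(\omega_j))\|=\frac1T\sum_{t=1}^T\|\zeta_t\|^2$, which follows from the rank-one form of $\widehat S(\omega_j)$ and unitarity of the normalized DFT. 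So equicontinuity needs only stationarity and $\E\|\zeta_0\|^2<\infty$.

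This sharpens the paper's own remark that bilinearity gives equicontinuity under the fourth-moment conditions already needed for pointwise convergence. In your route the fourth cumulants enter only through the pointwise variance bound. The same identity would also streamline \cref{lem:uniform-G}. The paper's $\ell^2$--$\ell^2$ pairing could in principle tolerate a weaker, mean-square modulus of continuity of $\widetilde J$ in $\theta$. Since \cref{asn:J-regularity:ii} is stated in sup norm, nothing is lost with your route.

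One small imprecision: under \cref{asn:stationarity:ii} alone, the uniform periodogram bias is only $o(1)$, not $O(T^{-1})$. The $O(T^{-1})$ rate would need $\sum_h|h|\,\|\cov(\zeta_h,\zeta_0)\|<\infty$. But $o(1)$ is all your argument requires.
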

\begin{proof}
\cref{asn:parameter-space} gives compact $\Theta$. Under \cref{asn:parameter-space,asn:stationarity,asn:J-regularity}, \cref{rem:g-continuity} implies $g(\theta)$ is continuous, so $Q(\theta)$ is continuous. Under \cref{asn:weight-convergence,asn:identification-general}, $Q(\theta)$ has a unique minimizer at $\theta_0$ \citep[Lemma 2.3]{NeweyMcFadden1994}.

Under \cref{asn:parameter-space,asn:stationarity,asn:J-regularity}, \cref{lem:uniform-g} yields $\sup_{\theta\in\Theta}\|\widehat g(\theta)-g(\theta)\|\pto 0$. Since $g$ is continuous on compact $\Theta$, we have that $\sup_{\theta}\|g(\theta)\|<\infty$ and $\sup_{\theta}\|\widehat g(\theta)\|=O_p(1)$. Thus, under \cref{asn:weight-convergence},
\begin{align*}
\sup_{\theta\in\Theta}\left|\widehat Q(\theta)-Q(\theta)\right|
&\le \|\widehat W-W\|\sup_{\theta}\|\widehat g(\theta)\|^2
+\|W\|\sup_{\theta}\|\widehat g(\theta)-g(\theta)\|\sup_{\theta}\left(\|\widehat g(\theta)\|+\|g(\theta)\|\right)\\
&\pto 0.
\end{align*}
Consistency then follows from \citet[Theorem 2.1]{NeweyMcFadden1994}.
\end{proof}

% ============================================================
\subsubsection{Asymptotic normality}

\begin{asn}[Differentiability of spectral weight function]
\label{asn:J-deriv}
There exists a compact neighborhood $\Theta_0\subset\Theta$ of $\theta_0$ such that:
\begin{asnitem}
\item \label{asn:J-deriv:i} For each $\omega\in[0,2\pi]$, $\theta\mapsto \ve(\widetilde J(\omega;\theta))$ is differentiable on $\Theta_0$.

\item \label{asn:J-deriv:ii} (Riemann integrability of derivative in $\omega$.) For each $\theta\in\Theta_0$, $\omega\mapsto \partial \ve(\widetilde J(\omega;\theta))/\partial \theta'$ is bounded and piecewise continuous on $[0,2\pi]$.
\item \label{asn:J-deriv:iii} (Uniform continuity of derivative in $\theta$.)
\[
\Delta_{\partial\widetilde J}(\delta)
\equiv \sup_{\theta,\widetilde\theta\in\Theta_0\colon \|\widetilde\theta-\theta\|\le\delta}\sup_{\omega\in[0,2\pi]}
\left\|\frac{\partial \ve(\widetilde J(\omega;\widetilde\theta))}{\partial \theta'}-\frac{\partial \ve(\widetilde J(\omega;\theta))}{\partial \theta'}\right\|
\to 0
\quad\text{as }\delta\to 0.
\]
\end{asnitem}
\end{asn}

For $\theta\in\Theta_0$, we have
\begin{align*}
G(\theta)&\equiv \frac{\partial g(\theta)}{\partial \theta'} =\int_0^{2\pi}\lbrace \ve(S(\omega))' \otimes I_{d_g} \rbrace\frac{\partial \ve(\widetilde J(\omega;\theta))}{\partial \theta'}\,d\omega,\\
\widehat G(\theta)&\equiv \frac{\partial \widehat g(\theta)}{\partial \theta'}=\frac{2\pi}{T}\sum_{j=0}^{T-1}\lbrace \ve(\widehat S(\omega_j))' \otimes I_{d_g} \rbrace\frac{\partial \ve(\widetilde J(\omega_j;\theta))}{\partial \theta'}.
\end{align*}

\begin{asn}[Local identification: general case]
\label{asn:invertibility-general}
$G_0 \equiv G(\theta_0)$ has full column rank $d_\theta$.
\end{asn}

\begin{thm}[Asymptotic normality: general case]
\label{thm:AN}
Under \cref{asn:parameter-space,asn:stationarity,asn:weight-convergence,asn:identification-general,asn:clt,asn:invertibility-general,asn:J-conjugate-symmetry,asn:J-regularity,asn:J-deriv},
\[
\sqrt{T}(\widehat\theta-\theta_0)\dto
N\left(0,\,(G_0'WG_0)^{-1}G_0'W\Omega WG_0(G_0'WG_0)^{-1}\right).
\]
\end{thm}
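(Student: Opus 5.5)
The plan is to verify the conditions of \citet[Theorem 3.2]{NeweyMcFadden1994} for extremum estimators, with \cref{thm:consistency} supplying consistency. By \cref{thm:consistency}, $\widehat\theta\pto\theta_0$. Since $\theta_0$ is interior and $\Theta_0$ is a neighborhood of $\theta_0$, we have $\widehat\theta\in\operatorname{int}\Theta_0$ with probability approaching one. On that event the first-order condition $\widehat G(\widehat\theta)'\widehat W\widehat g(\widehat\theta)=0$ holds.

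Next I expand $\widehat g$ around $\theta_0$. The map $\theta\mapsto\widehat g(\theta)$ is differentiable on $\Theta_0$ by \cref{asn:J-deriv:i}. The sum is finite, so differentiation under the sum is trivial. The mean value theorem is applied row by row, so each row has its own intermediate point $\bar\theta_r$ on the segment between $\widehat\theta$ and $\theta_0$. This gives
\[
\widehat g(\widehat\theta)=\widehat g(\theta_0)+\bar G(\widehat\theta-\theta_0),
\]
where $\bar G$ stacks the rows of $\widehat G(\bar\theta_r)$. Substituting into the first-order condition yields
\[
\sqrt T(\widehat\theta-\theta_0)=-\left(\widehat G(\widehat\theta)'\widehat W\bar G\right)^{-1}\widehat G(\widehat\theta)'\widehat W\sqrt T\,\widehat g(\theta_0).
\]
\cref{asn:clt} gives $\sqrt T\widehat g(\theta_0)\dto N(0,\Omega)$, and $\widehat W\pto W$. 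If both $\widehat G(\widehat\theta)$ and $\bar G$ converge in probability to $G_0$, then $G_0'WG_0$ is nonsingular by \cref{asn:invertibility-general}, and Slutsky gives the stated sandwich limit.

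The main step, and the expected obstacle, is showing that $\sup_{\theta\in\Theta_0}\|\widehat G(\theta)-G(\theta)\|\pto0$ and that $G$ is continuous on $\Theta_0$. Together with consistency, these imply $\widehat G(\tilde\theta)\pto G_0$ for any $\tilde\theta\pto\theta_0$. I would reuse the argument of \cref{lem:uniform-g}. The derivative $\widehat G(\theta)$ is itself a spectral mean of the periodogram, with spectral weight $\{\ve(\widehat S)'\otimes I\}\,\partial\ve\widetilde J/\partial\theta'$. This weight is linear in $\ve(\widehat S)$ and has a bounded, piecewise continuous, uniformly equicontinuous kernel by \cref{asn:J-deriv:ii,asn:J-deriv:iii}.

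The proof then has two parts. First, pointwise convergence of $\widehat G(\theta)$ to $G(\theta)$ for each fixed $\theta$ follows from the same spectral-mean LLN \citep[Ch.~7.6]{Brillinger1981}; that LLN uses Riemann integrability and the cumulant summability in \cref{asn:stationarity}. Second, stochastic equicontinuity follows from bilinearity:
\[
\|\widehat G(\theta)-\widehat G(\tilde\theta)\|\le \Delta_{\partial\widetilde J}(\|\theta-\tilde\theta\|)\cdot C\,\frac{2\pi}{T}\sum_j\|\widehat S(\omega_j)\|.
\]
The last factor is $O_p(1)$, because its expectation is bounded by $\frac{1}{T}\sum_t E\|\zeta_t\|^2$ up to constants. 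A finite-cover argument on the compact set $\Theta_0$ then upgrades pointwise convergence to uniform convergence. Continuity of $G$ follows from the same modulus bound with $S$ in place of $\widehat S$, since $\int\|S\|<\infty$.

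Finally, a mean value expansion needs $\widehat g$ to be real for the row-wise argument to apply. This is where \cref{asn:J-conjugate-symmetry} enters: it guarantees $\widehat g$ and $\widehat G$ are real, as noted after that assumption.
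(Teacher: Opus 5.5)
Your proposal is correct and follows essentially the paper's route. The paper verifies the conditions of \citet[Theorem 3.2]{NeweyMcFadden1994}, whose mean-value argument you re-derive, using consistency, the spectral-mean CLT, and uniform convergence of $\widehat G$ on $\Theta_0$ plus continuity of $G$, both obtained exactly as you describe: a pointwise spectral-mean LLN combined with bilinearity-based stochastic equicontinuity. Your equicontinuity bound via the identity $\frac{2\pi}{T}\sum_j\|\widehat S(\omega_j)\| = \frac{1}{T}\sum_t\|\zeta_t\|^2$ (Parseval) is a slightly more elementary variant of the paper's Cauchy--Schwarz bound, since it needs only second moments for that step.
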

\begin{proof}
Under \cref{asn:parameter-space,asn:stationarity,asn:weight-convergence,asn:identification-general,asn:J-conjugate-symmetry,asn:J-regularity}, \cref{thm:consistency} gives consistency. \cref{asn:J-deriv} implies $\theta_0\in\mathrm{int}(\Theta)$, \cref{asn:clt} gives the CLT for $\widehat{g}(\theta_0)$, and \cref{asn:weight-convergence,asn:invertibility-general} give invertibility of $G_0'WG_0$. \cref{asn:J-deriv:i,asn:J-deriv:iii} yield that $\widehat g(\theta)$ is continuously differentiable on $\Theta_0$. Under \cref{asn:stationarity,asn:J-deriv}, \cref{lem:uniform-G} yields uniform convergence of $\widehat G(\theta)$ to $G(\theta)$ on $\Theta_0$, and \cref{rem:G-continuity} yields continuity of $G(\theta)$ at $\theta_0$.
The conclusion then follows from \citet[Theorem 3.2]{NeweyMcFadden1994}.
\end{proof}

% ============================================================
\subsubsection{HAC}
\label{sec:est-var}

Let
\[\widetilde J_k(\theta)\equiv \frac{1}{2\pi}\int_0^{2\pi} e^{-\iota\omega k}\widetilde J(\omega;\theta)\,d\omega,\quad k\in\mathbb Z,\]
denote the time-domain coefficient
matrices associated with the spectral weight function $\widetilde J(\cdot;\theta)$. By \cref{asn:J-conjugate-symmetry}, each $\widetilde J_k(\theta)$ is real-valued.

The following assumption restricts the smoothness and tail behavior of the time-domain coefficients $\lbrace \widetilde{J}_k(\theta) \rbrace_k$. \cref{rem:Fourier-series-summability-sufficient} below argues that the assumption can be verified purely by checking smoothness conditions on the spectral weight function $\widetilde{J}(\omega;\theta)$, without explicitly computing the time-domain coefficients.

\begin{asn}[Fourier series summability]
\label{asn:Fourier-series-summability}
There exists a compact neighborhood $\Theta_0\subset\Theta$ of $\theta_0$ such that:
\begin{asnitem}
\item \label{asn:Fourier-series-summability:i} For each $k\in\mathbb Z$, $\widetilde J_k(\theta)$ is continuous on $\Theta$. $\widetilde J(\cdot;\theta)$ has an absolutely convergent Fourier series uniformly in $\Theta$, i.e.,
\[
\sum_{k\in\mathbb Z}\sup_{\theta\in\Theta}\|\widetilde J_k(\theta)\|<\infty,
\]
and $\widetilde{J}(\cdot;\theta)$ equals its Fourier series for all $\theta \in \Theta$.
\item \label{asn:Fourier-series-summability:ii} For each $k\in\mathbb Z$, $\widetilde J_k(\theta)$ is continuously differentiable on $\Theta_0$. $\partial\ve(\widetilde J(\cdot;\theta))/\partial\theta'$ has an absolutely convergent Fourier series uniformly in $\Theta_0$, i.e.,
\[
\sum_{k\in\mathbb Z}\sup_{\theta\in\Theta_0}\|\partial\ve(\widetilde J_k(\theta))/\partial\theta'\|<\infty.
\]
\item \label{asn:Fourier-series-summability:iii} $\partial_\omega\widetilde J(\cdot;\theta_0)$ has an absolutely convergent Fourier series, i.e.,\footnote{Note that
$(\partial_\omega\widetilde J)_k(\theta_0)
\equiv\frac{1}{2\pi}\int_0^{2\pi}e^{-\iota\omega k}\partial_\omega\widetilde J(\omega;\theta_0)\,d\omega
=\frac{\iota k}{2\pi}\int_0^{2\pi}e^{-\iota\omega k}\widetilde J(\omega;\theta_0)\,d\omega
=\iota k\widetilde J_k(\theta_0)$.}
\[
\sum_{k\in\mathbb Z}|k|\|\widetilde J_k(\theta_0)\|<\infty.
\]
\end{asnitem}
\end{asn}

Define the score process and its truncated triangular array counterpart:
\begin{align*}
\psi_t(\theta)&\equiv \sum_{k\in\mathbb Z}\widetilde J_k(\theta)\ve(\zeta_{t+k}\zeta_t'), \\
\psi_{t,T}(\theta) &\equiv
\sum_{k=-(t-1)}^{T-t}\widetilde J_k(\theta)\ve(\zeta_{t+k}\zeta_t')\1\{1\le t\le T\}.
\end{align*}
The latter uses all available data $\zeta_1,\dots,\zeta_T$.

Define the population long-run variance
\[\Omega\equiv \sum_{h\in\mathbb Z}\Gamma_\psi(h) \quad \text{where} \quad \Gamma_\psi(h)\equiv \E\left[\psi_h(\theta_0)\psi_0(\theta_0)'\right].\]
We estimate this long-run variance with the kernel HAC estimator
\[\widehat\Omega
\equiv \sum_{|h|\le L_T} K(h/L_T)\widehat\Gamma_{\psi,T}(h),\quad \text{where}\quad \widehat\Gamma_{\psi,T}(h)\equiv \frac{1}{T}\sum_{t=1}^{T}\psi_{t,T}(\widehat\theta)\psi_{t-h,T}(\widehat\theta)',\]
where the kernel $K$ and bandwidth $L_T$ satisfy \cref{asn:HAC-kernel}. Note that we do not center the sample autocovariances $\widehat\Gamma_{\psi,T}(h)$. We conjecture---but do not prove---that the centering is asymptotically negligible, since the proofs of \cref{thm:AN,lem:truncation-error} suggest that $\bar\psi_T(\widehat\theta)\equiv T^{-1}\sum_{t=1}^T\psi_{t,T}(\widehat\theta)=\widehat g(\widehat\theta)+ O_p(T^{-1})=O_p(T^{-1/2})$, so the difference between the centered and uncentered versions of $\widehat\Gamma_{\psi,T}(h)$ changes $\widehat\Omega$ by at most $O_p\left(L_T\|\bar\psi_T(\widehat\theta)\|^2\right)=O_p(L_T/T)=o_p(1)$ under \cref{asn:HAC-kernel:ii}.

\begin{thm}[Consistent HAC estimator: general case]
\label{thm:HAC-Omega}
Under \cref{asn:stationarity,asn:parameter-space,asn:weight-convergence,asn:clt} and \cref{asn:HAC-kernel,asn:stronger-stationarity,asn:Fourier-series-summability,asn:identification-general,asn:invertibility-general,asn:J-conjugate-symmetry}, 
\[
\widehat\Omega
\pto \Omega.
\]
\end{thm}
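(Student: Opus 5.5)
We plan to decompose the estimation error into four pieces, following the roadmap in \cref{app:hac}. Write $\widetilde\Omega_T(\theta)\equiv\sum_{|h|\le L_T}K(h/L_T)\widehat\Gamma_{\psi,T}(h;\theta)$, where $\widehat\Gamma_{\psi,T}(h;\theta)$ is built from $\psi_{t,T}(\theta)$. We also let $\check\Omega_T$ denote the infeasible analogue built from the untruncated scores $\psi_t(\theta_0)$. Then
\[
\widehat\Omega-\Omega=\{\widetilde\Omega_T(\widehat\theta)-\widetilde\Omega_T(\theta_0)\}+\{\widetilde\Omega_T(\theta_0)-\check\Omega_T\}+\{\check\Omega_T-\E\check\Omega_T\}+\{\E\check\Omega_T-\Omega\},
\]
and we show each bracket is $o_p(1)$.

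\textbf{Kernel bias.} The last bracket is standard. Since $\E\check\Omega_T=\sum_{|h|\le L_T}K(h/L_T)(1-|h|/T)\Gamma_\psi(h)$, dominated convergence applies once $\sum_h\|\Gamma_\psi(h)\|<\infty$, using $K$ bounded, continuous at $0$ with $K(0)=1$, and $L_T/T\to0$. To get summability of $\Gamma_\psi$, we expand $\psi_t=\sum_k\widetilde J_k\ve(\zeta_{t+k}\zeta_t')$. The covariance of the products $\zeta_{t+k}\zeta_t'$ and $\zeta_{k'}\zeta_0'$ splits into a fourth cumulant plus products of second cumulants. Summability then follows from \cref{asn:stationarity:ii} together with $\sum_k\|\widetilde J_k(\theta_0)\|<\infty$ (\cref{asn:Fourier-series-summability:i}). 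The mean term $\E\psi_t=g(\theta_0)=0$ drops out.

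\textbf{Stochastic fluctuation.} For the third bracket, we bound $\var$ of each entry of $\check\Omega_T$. Each entry is $\sum_{|h|\le L_T}K(h/L_T)\frac1T\sum_t\psi_t\psi_{t-h}'$. Following \citet{Andrews1991}, its variance is $O(L_T/T)$ provided the fourth-order cumulants of $\psi_t$ are absolutely summable. Since $\psi_t$ is a $\widetilde J_k$-weighted quadratic form in $\zeta$, fourth cumulants of $\psi$ reduce, via the product (Leonov--Shiryaev) formula, to sums over indecomposable partitions of cumulants of $\zeta$ of order up to $8$. These are summable by \cref{asn:stronger-stationarity} and the absolute summability of $\widetilde J_k$. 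Hence this term is $O_p(\sqrt{L_T/T})=o_p(1)$. I expect this combinatorial cumulant bookkeeping to be the main obstacle. I would first prove a lemma that $\sum_{h_1,h_2,h_3}\|\cum(\psi_0,\psi_{h_1},\psi_{h_2},\psi_{h_3})\|<\infty$, bounding each indecomposable partition by pulling out the $\widetilde J$ weights and summing lags sequentially.

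\textbf{Truncation.} For the second bracket, $\psi_{t,T}(\theta_0)-\psi_t(\theta_0)$ consists of the tail terms $k>T-t$ or $k<-(t-1)$, plus the terms for $t\notin[1,T]$. Its second moment summed over $t$ is bounded by $C\sum_t\big(\sum_{k>T-t}\|\widetilde J_k\|+\sum_{k<-(t-1)}\|\widetilde J_k\|\big)\le C\sum_k|k|\|\widetilde J_k(\theta_0)\|<\infty$, using \cref{asn:Fourier-series-summability:iii}. So $\frac1T\sum_t\|\psi_{t,T}-\psi_t\|^2=O_p(T^{-1})$. Cauchy--Schwarz on each $\widehat\Gamma(h)$ difference, summed over $2L_T+1$ lags, then gives $O_p(L_T T^{-1/2})=o_p(1)$ by \cref{asn:HAC-kernel:ii}.

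\textbf{Parameter estimation.} For the first bracket, \cref{thm:AN} (whose hypotheses are implied by those listed) gives $\widehat\theta-\theta_0=O_p(T^{-1/2})$. A mean value expansion then uses \cref{asn:Fourier-series-summability:ii}: $\|\psi_{t,T}(\widehat\theta)-\psi_{t,T}(\theta_0)\|\le\|\widehat\theta-\theta_0\|\,D_t$, where $D_t=\sum_k\sup_{\Theta_0}\|\partial\widetilde J_k\|\,\|\zeta_{t+k}\|\|\zeta_t\|$ has bounded second moment. Therefore $\frac1T\sum_t\|\psi_{t,T}(\widehat\theta)-\psi_{t,T}(\theta_0)\|^2=O_p(T^{-1})$, and the same Cauchy--Schwarz argument yields $O_p(L_T/\sqrt T)=o_p(1)$. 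This uses that $\frac1T\sum\|\psi_{t,T}(\theta_0)\|^2=O_p(1)$.
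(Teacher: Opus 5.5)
Your proof is correct, but it is organized differently from the paper's.

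\textbf{How the paper proceeds.} The paper keeps the truncated triangular-array score $\psi_{t,T}(\theta_0)$ throughout. After the plug-in step, it bounds $\var(\widetilde\Omega_T)$ directly, compares $\E[\widetilde\Omega_T]$ with $\Omega_T=\sum_{|h|<T}\Gamma_{\psi,T}(h)$, and only at the end shows $\Omega_T\to\Omega$. Because $\psi_{t,T}(\theta_0)$ is neither stationary nor mean zero, this needs two extra devices. One is the uniform-in-$T$ envelope of \cref{lem:Gamma-psiT-envelope}. The other is a separate treatment of the mean $\E[\eta_{t,T}]$, which is why \cref{lem:truncation-error} is stated in the stronger form $\sum_t(\E\|\eta_{t,T}\|^2)^{1/2}=O(1)$.

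\textbf{What your route does differently.} You remove the edge truncation once, at the sample level, with the same Cauchy--Schwarz device used for the plug-in step. This needs only $\sum_{t=1}^T\E\|\eta_{t,T}\|^2=O(1)$, and its $L_T/\sqrt{T}$ rate is the same one the plug-in step already requires. From then on you work with the strictly stationary, exactly mean-zero process $\psi_t(\theta_0)$. The kernel-bias and variance steps therefore become textbook results for kernel spectral estimators.

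\textbf{The stochastic term.} The paper only bounds $\sum_s$ of a covariance of two products of four $\zeta$'s, where a single block linking the $t$- and $s$-groups suffices (\cref{lem:score-4th-summable}). It accepts the crude variance rate $O(L_T^2/T)$. Your route needs full three-lag summability of the fourth cumulants of $\psi_t$. That requires a somewhat more careful connectivity argument: indecomposability of the partition of the $4\times 2$ table makes the rows connected, so you can sum $h_1,h_2,h_3$ sequentially along a spanning tree of blocks. In return you get the sharper $O(L_T/T)$ variance. This sharper rate does not weaken the overall bandwidth condition, since the plug-in and truncation steps still need $L_T/\sqrt{T}\to0$.

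\textbf{A small clarification.} For your truncation bound to hold as written, $\check\Omega_T$ should be built from $\psi_t\1\{1\le t\le T\}$, as your factor $1-|h|/T$ indicates. Then the only discrepancies are the tail terms in $k$, and there are no ``terms for $t\notin[1,T]$.'' If out-of-sample scores were kept, they could not go through your $O_p(T^{-1})$ bound. They would instead contribute a separate $O_p(L_T^2/T)$ term, which is still negligible.
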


\begin{proof}
Define the following additional notation:
\begin{align*}
\Omega_T&\equiv \sum_{|h|<T}\Gamma_{\psi,T}(h) &\text{where}\quad \Gamma_{\psi,T}(h)\equiv\frac{1}{T}\sum_{t=1}^{T}\E\left[\psi_{t,T}(\theta_0)\psi_{t-h,T}(\theta_0)'\right], \\
\widetilde\Omega_T
&\equiv \sum_{|h|\le L_T} K(h/L_T)\widetilde\Gamma_{\psi,T}(h) &\text{where}\quad \widetilde\Gamma_{\psi,T}(h)\equiv \frac{1}{T}\sum_{t=1}^{T}\psi_{t,T}(\theta_0)\psi_{t-h,T}(\theta_0)'.
\end{align*}

By \cref{asn:stationarity:i,asn:J-conjugate-symmetry}, $\psi_t(\theta)$ and $\psi_{t,T}(\theta)$ are real-valued.
Note that \cref{asn:parameter-space,asn:Fourier-series-summability:i} imply \cref{asn:J-regularity}, and \cref{asn:Fourier-series-summability:i,asn:Fourier-series-summability:ii} imply \cref{asn:J-deriv},
so we have the full set of assumptions for consistency and asymptotic normality in \cref{thm:consistency,thm:AN}.

\medskip
\emph{(i) Plug-in: $\widehat\Omega-\widetilde\Omega_T\pto 0$.}
Let $\Delta_t \equiv \psi_{t,T}(\widehat\theta)-\psi_{t,T}(\theta_0)$. 
For each $h$, by Cauchy-Schwarz,
\begin{align*}
\|\widehat\Gamma_{\psi,T}(h)-\widetilde\Gamma_{\psi,T}(h)\|&=\left\|\frac{1}{T}\sum_{t=h+1}^T
\left(\Delta_t\psi_{t-h,T}(\theta_0)'+\psi_{t,T}(\theta_0)\Delta_{t-h}'+\Delta_t\Delta_{t-h}'\right)\right\|\\
&\le 2\left(\frac{1}{T}\sum_{t=1}^T\|\Delta_t\|^2\right)^{1/2}
\left(\frac{1}{T}\sum_{t=1}^T\|\psi_{t,T}(\theta_0)\|^2\right)^{1/2}
+ \frac{1}{T}\sum_{t=1}^T\|\Delta_t\|^2.
\end{align*}
Since the kernel $K(\cdot)$ is bounded and supported on $[-1,1]$ by \cref{asn:HAC-kernel:i},
only $|h|\le L_T$ contribute, so for some constant $C<\infty$,
\begin{align*}
\|\widehat\Omega-\widetilde\Omega_T\|
&\le C\sum_{|h|\le L_T}\|\widehat\Gamma_{\psi,T}(h)-\widetilde\Gamma_{\psi,T}(h)\|\\
&\le C L_T\left[
\left(\frac{1}{T}\sum_{t=1}^T\|\Delta_t\|^2\right)^{1/2}
\left(\frac{1}{T}\sum_{t=1}^T\|\psi_{t,T}(\theta_0)\|^2\right)^{1/2}
+\frac{1}{T}\sum_{t=1}^T\|\Delta_t\|^2
\right].
\end{align*}
Under \cref{asn:stationarity:i} and \cref{asn:Fourier-series-summability:i} at $\theta_0$, $\E\|\psi_{t,T}(\theta_0)\|^2$ is bounded uniformly in $t,T$. Hence,
\[\frac{1}{T}\sum_{t=1}^T\|\psi_{t,T}(\theta_0)\|^2=O_p(1).\]
Moreover, by the mean value theorem,
\[\frac{1}{T}\sum_{t=1}^T\|\Delta_t\|^2\le \frac{1}{T}\sum_{t=1}^T\sup_{\theta\in\Theta_0}\|\partial\psi_{t,T}(\theta)/\partial\theta'\|^2\|\widehat\theta-\theta_0\|^2.\]
\cref{thm:AN} implies that $\|\widehat\theta-\theta_0\|=O_p(T^{-1/2})$. Under \cref{asn:stationarity:i} and \cref{asn:Fourier-series-summability:ii}, we have that $\E[\sup_{\theta\in\Theta_0}\|\partial\psi_{t,T}(\theta)/\partial\theta'\|^2]$ is bounded uniformly in $t,T$. Therefore,
\[
\|\widehat\Omega-\widetilde\Omega_T\|
=O_p\left(\frac{L_T}{\sqrt{T}}\right)+O_p\left(\frac{L_T}{T}\right)
=o_p(1),
\]
because $L_T/\sqrt{T}\to 0$ by \cref{asn:HAC-kernel:ii}.

\medskip
\emph{(ii) Bias: $\E[\widetilde\Omega_T]-\Omega_T\to 0$.}
Define the scalar weights
\[
a_T(h)\equiv K\left(\frac{h}{L_T}\right)\1\{|h|\le L_T\}.
\]
Then
\[
\E[\widetilde\Omega_T]-\Omega_T=\sum_h \left(a_T(h)-1\right)\Gamma_{\psi,T}(h).
\]
For each fixed $h$, as $L_T\to\infty$ by \cref{asn:HAC-kernel:ii}, we have $\1\{|h|\le L_T\}\to 1$. Also, $K(h/L_T)\to K(0)=1$ by \cref{asn:HAC-kernel:i}, so $a_T(h)\to 1$ pointwise in $h$. Moreover, $|a_T(h)-1|$ is uniformly bounded since $K$ is.

Decompose $\Gamma_{\psi,T}(h)=\check\Gamma_{\psi,T}(h)+\frac1T\sum_{t=h+1}^T\E[\eta_{t,T}]\E[\eta_{t-h,T}]'$, where $\check\Gamma_{\psi,T}(h)$ is the centered part of \cref{lem:Gamma-psiT-envelope}, $\eta_{t,T}$ is the edge-truncation error of \cref{lem:truncation-error}, and we used that $\E[\psi_{t,T}(\theta_0)]=\E[\eta_{t,T}]$ by \cref{asn:identification-general} (which gives $\E[\psi_t(\theta_0)]=g(\theta_0)=0$).
For the centered part, \cref{lem:Gamma-psiT-envelope} (under \cref{asn:stationarity,asn:Fourier-series-summability:i} at $\theta_0$) gives $\|\check\Gamma_{\psi,T}(h)\|\le C(h)$ with $C(h)$ absolutely summable, so by dominated convergence $\sum_h |a_T(h)-1|\,\|\check\Gamma_{\psi,T}(h)\|\to 0$.
For the mean part, \cref{lem:truncation-error} gives $\sum_{t=1}^T\|\E[\eta_{t,T}]\|\le \sum_{t=1}^T(\E\|\eta_{t,T}\|^2)^{1/2}=O(1)$, so
\[
\sum_h\left|a_T(h)-1\right|\left\|\frac1T\sum_{t=h+1}^T\E[\eta_{t,T}]\E[\eta_{t-h,T}]'\right\|\le \frac{\sup_u |K(u)|+1}{T}\left(\sum_{t=1}^T\|\E[\eta_{t,T}]\|\right)^2=O(1/T).
\]
Combining the two parts, $\|\E[\widetilde\Omega_T]-\Omega_T\|\to 0$ as $T\to\infty$.

\medskip
\emph{(iii) Stochastic term: $\widetilde\Omega_T-\E[\widetilde\Omega_T]\pto 0$.}
We show $\var(\ve(\widetilde\Omega_T))\to0$, which gives the claim by Chebyshev's inequality. Since the kernel is bounded by \cref{asn:HAC-kernel:i}, there is a constant $C<\infty$ such that
\[
\left\|\var(\ve(\widetilde\Omega_T))\right\|
\le \frac{C}{T^2}\sum_{|h|\le L_T}\sum_{|\tau|\le L_T}\sum_{t=1}^T\sum_{s=1}^T
\left\|\cov\!\left(\ve(\psi_{t,T}(\theta_0)\psi_{t-h,T}(\theta_0)'),\,\ve(\psi_{s,T}(\theta_0)\psi_{s-\tau,T}(\theta_0)')\right)\right\|.
\]
Under \cref{asn:stationarity:i,asn:stronger-stationarity,asn:Fourier-series-summability:i}, \cref{lem:score-4th-summable} implies that the inner sum over $s$ is bounded uniformly in $t,h,\tau,T$, so
\[
\left\|\var(\ve(\widetilde\Omega_T))\right\|
\le \frac{C}{T^2}(2L_T+1)^2\,T\,\Lambda=O(L_T^2/T)\to0
\]
by \cref{asn:HAC-kernel:ii}. Hence $\widetilde\Omega_T-\E[\widetilde\Omega_T]\pto0$.

\medskip
\emph{(iv) Edge truncation: $\Omega_T\to\Omega$.} Let $\eta_{t,T}=\psi_{t,T}(\theta_0)-\psi_t(\theta_0)$ be the edge-truncation error of \cref{lem:truncation-error}. By that lemma, $\sum_{t=1}^T(\E\|\eta_{t,T}\|^2)^{1/2}=O(1)$, so by Cauchy-Schwarz,
\[
\E\left\|\frac{1}{\sqrt{T}}\sum_{t=1}^T\eta_{t,T}\right\|^2\le \frac{1}{T}\left(\sum_{t=1}^T\left(\E\|\eta_{t,T}\|^2\right)^{1/2}\right)^2=O(1/T)\to 0.
\]
Since
\[\frac{1}{\sqrt{T}}\sum_t\psi_{t,T}(\theta_0)=\frac{1}{\sqrt{T}}\sum_t\psi_t(\theta_0)+\frac{1}{\sqrt{T}}\sum_t\eta_{t,T},\] 
the second moments of $\frac{1}{\sqrt{T}}\sum_t\psi_{t,T}(\theta_0)$ and $\frac{1}{\sqrt{T}}\sum_t\psi_t(\theta_0)$ have the same limit.
Moreover, by \cref{asn:stationarity,asn:identification-general,asn:Fourier-series-summability:i} at $\theta_0$, \cref{rem:summability-Gamma-psi-general} gives that $\sum_{h\in\mathbb Z}\|\Gamma_\psi(h)\|<\infty$, so by dominated convergence,
\[
\frac1T \E\left[\left(\sum_{t=1}^T \psi_t(\theta_0)\right)\left(\sum_{t=1}^T \psi_t(\theta_0)\right)'\right]
=\sum_{|h|<T}\left(1-\frac{|h|}{T}\right)\Gamma_\psi(h)
\to \sum_{h\in\mathbb Z}\Gamma_\psi(h)=\Omega.
\]
Therefore, $\Omega_T\to\Omega$.

\medskip
\emph{(v) Conclusion.} Write
\[
\widehat\Omega-\Omega=\left(\widehat\Omega-\widetilde\Omega_T\right)+\left(\widetilde\Omega_T-\E[\widetilde\Omega_T]\right)+\left(\E[\widetilde\Omega_T]-\Omega_T\right)+\left(\Omega_T-\Omega\right).
\]
By (i)--(iv), each term is $o_p(1)$ or $o(1)$, so $\widehat\Omega\pto\Omega$.
\end{proof}

% ============================================================
\subsection{Auxiliary results for general spectral GMM}\label{sec:aux-general-spectral-gmm}

\subsubsection{Spectral regularity}
\begin{remark}[Continuity and boundedness of spectrum]
\label{rem:S-regularity}
Under \cref{asn:stationarity}, we have $\sum_{h\in\mathbb Z}\|\Gamma_\zeta(h)\|<\infty$, where $\Gamma_\zeta(h)\equiv \cov(\zeta_h,\zeta_0)$. Hence
$S(\omega)\equiv 1/(2\pi)\sum_{h\in\mathbb Z}\Gamma_\zeta(h)e^{-\iota h\omega}$ converges absolutely and uniformly in $\omega\in[0,2\pi]$, so $S(\omega)$ is continuous, $\sup_{\omega\in[0,2\pi]}\|S(\omega)\|<\infty$, and $\int_0^{2\pi}\|S(\omega)\|\,d\omega<\infty$.
\end{remark}

\begin{remark}[Boundedness of spectral weight function]
\label{rem:J-boundedness}
\cref{asn:parameter-space,asn:J-regularity} yield that \[\sup_{\theta\in\Theta}\sup_{\omega\in[0,2\pi]}\|\widetilde J(\omega;\theta)\|<\infty.\]
To see this, define $B(\theta) \equiv \sup_{\omega\in[0,2\pi]}\|\widetilde J(\omega;\theta)\|$, which is finite by \cref{asn:J-regularity:i}.
For any $\widetilde\theta,\theta\in\Theta$,
\[
|B(\widetilde\theta)-B(\theta)|
\le \sup_{\omega\in[0,2\pi]}\|\widetilde J(\omega;\widetilde\theta)-\widetilde J(\omega;\theta)\|
\le \Delta_{\widetilde J}(\|\widetilde\theta-\theta\|),
\]
so \cref{asn:J-regularity:ii} implies $B$ is continuous on compact $\Theta$, hence $\sup_{\theta\in\Theta}B(\theta)<\infty$.
\end{remark}

\begin{remark}[Well-definedness and continuity of population moments]
\label{rem:g-continuity}
Under \cref{asn:parameter-space,asn:stationarity,asn:J-regularity},
$g(\theta)$ is well-defined and continuous on $\Theta$.

To see this, note that under \cref{asn:stationarity:ii}, \cref{rem:S-regularity} yields that $\ve(S(\omega))$ is integrable on $[0,2\pi]$.
Moreover, under \cref{asn:parameter-space,asn:J-regularity}, \cref{rem:J-boundedness} implies that
\[\int_0^{2\pi}\|\widetilde J(\omega;\theta)\ve(S(\omega))\|\,d\omega<\infty\] uniformly in $\theta$,
so $g(\theta)$ is well-defined.

Fix $\theta\in\Theta$ and let $\widetilde\theta\to\theta$.
Then, using \cref{asn:J-regularity:ii},
\begin{align*}
\|g(\widetilde\theta)-g(\theta)\|
&\le \int_0^{2\pi}\|\widetilde J(\omega;\widetilde\theta)-\widetilde J(\omega;\theta)\|\|\ve(S(\omega))\|\,d\omega\\
&\le \Delta_{\widetilde J}(\|\widetilde\theta-\theta\|)\int_0^{2\pi}\|\ve(S(\omega))\|\,d\omega \to 0.
\end{align*}
Hence, $g$ is continuous on $\Theta$.
\end{remark}

% ============================================================
\subsubsection{Tools for consistency}
\begin{lem}[Mean of periodogram]
\label{lem:periodogram-mean}
Under \cref{asn:stationarity}, 
\[
\sup_{\omega\in[0,2\pi]}\|\E\widehat S(\omega)-S(\omega)\|\to 0.
\]
\end{lem}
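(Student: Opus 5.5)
The plan is to compute $\E\widehat S(\omega)$ exactly in terms of the autocovariances and compare it with the Fourier series defining $S(\omega)$. Since $\zeta_t$ has mean zero by \cref{asn:stationarity:i}, expanding the product in the definition of the periodogram gives
\[
\E\widehat S(\omega)=\frac{1}{2\pi T}\sum_{t=1}^T\sum_{s=1}^T \Gamma_\zeta(t-s)e^{-\iota\omega(t-s)}
=\frac{1}{2\pi}\sum_{|h|<T}\left(1-\frac{|h|}{T}\right)\Gamma_\zeta(h)e^{-\iota\omega h},
\]
where $\Gamma_\zeta(h)\equiv\cov(\zeta_h,\zeta_0)$ as in \cref{rem:S-regularity}. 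This is the familiar Fej\'er-type expression.

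Next I would subtract the series $S(\omega)=\frac{1}{2\pi}\sum_{h\in\mathbb Z}\Gamma_\zeta(h)e^{-\iota\omega h}$, which converges absolutely by \cref{rem:S-regularity}. Using $|e^{-\iota\omega h}|=1$, the difference is bounded uniformly in $\omega$:
\[
\sup_{\omega\in[0,2\pi]}\|\E\widehat S(\omega)-S(\omega)\|\le\frac{1}{2\pi}\sum_{|h|<T}\frac{|h|}{T}\|\Gamma_\zeta(h)\|+\frac{1}{2\pi}\sum_{|h|\ge T}\|\Gamma_\zeta(h)\|.
\]

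The second term is the tail of the convergent series $\sum_h\|\Gamma_\zeta(h)\|<\infty$, which is guaranteed by the $\ell=2$ case of \cref{asn:stationarity:ii}, so it vanishes as $T\to\infty$. For the first term, write it as $\sum_h a_T(h)\|\Gamma_\zeta(h)\|$ with $a_T(h)\equiv\min\{|h|/T,1\}\1\{|h|<T\}$. Each $a_T(h)$ lies in $[0,1]$ and tends to $0$ for every fixed $h$, so dominated convergence with dominating sequence $\|\Gamma_\zeta(h)\|$ sends this term to zero as well. Equivalently, one can split the sum at a fixed cutoff $M$ and then let $M\to\infty$.

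There is no real obstacle here. The only point needing care is that the bound is uniform in $\omega$, and this holds because the phase factors have modulus one. No fourth-moment conditions are needed; the argument uses only second-order summability.
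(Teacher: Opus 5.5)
Your proof is correct, but it takes a different (dual) route from the paper's. The paper writes $\E\widehat S(\omega)$ as the convolution of $S$ with the Fej\'er kernel. It then uses the approximate-identity property of that kernel, together with continuity and periodicity of $S$ obtained from \cref{rem:S-regularity}; in effect this is Fej\'er's theorem. You stay in the time domain, where that convolution is simply multiplication of the autocovariances by the triangular weights $(1-|h|/T)$, and you finish with dominated convergence against $\sum_h\|\Gamma_\zeta(h)\|$.

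Your version is more elementary and self-contained. It needs no kernel estimates, and it gives an explicit bound: for example, an $O(T^{-1})$ rate whenever $\sum_h|h|\,\|\Gamma_\zeta(h)\|<\infty$. The paper's route, in principle, needs only continuity of $S$ rather than absolute summability of the autocovariances. Under \cref{asn:stationarity} both hypotheses hold, so either argument works here.

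You are also right that only the $\ell=2$ part of the cumulant summability condition is used.
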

This is the matrix analogue of the scalar uniform convergence result under mean zero in \citet[Proposition 10.3.1]{Brockwell1991}.
\begin{proof}

By \cref{asn:stationarity:ii}, \cref{rem:S-regularity} gives that $S(\omega)$ is continuous and bounded.
For the periodogram, the Fej\'er representation yields
\[
\E\widehat S(\omega)=\int_0^{2\pi}\widetilde{F}_T(\omega-\lambda)S(\lambda)\,d\lambda,
\]
where $\widetilde F_T$ is the Fej\'er kernel; see for example \citet[Section 2.4]{stoica2005spectral}. Since $\widetilde F_T$ is an approximate identity and $S$ is continuous and periodic on the compact set $[0,2\pi]$, the claim follows.
\end{proof}

\begin{lem}[Covariance bound for periodogram: \citealp{Brillinger1981}, Theorem 7.2.2]
\label{lem:periodogram-cov}
Under \cref{asn:stationarity}, there exist a constant $C<\infty$ and a deterministic sequence $c_T \to 0$ such that, uniformly in $j,j'\in\{0,\dots,T-1\}$,
\[
\left\|\cov\left(\ve(\widehat S(\omega_j)),\ve(\widehat S(\omega_{j'}))\right)\right\|
\le
\begin{cases}
C, & j=j' \ \text{or}\ j'=T-j,\\
c_T, & \text{otherwise}.
\end{cases}
\]
\end{lem}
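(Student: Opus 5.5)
I plan to write the covariance of periodogram entries in terms of cumulants of the discrete Fourier transforms. Fix indices $(a,b)$ and $(c,d)$ of $\zeta$. Since $\widehat S_{ab}(\omega)=\frac{1}{2\pi}F_a(\omega)\overline{F_b(\omega)}$ and the data have mean zero (\cref{asn:stationarity:i}), the standard product theorem for cumulants gives
\[
(2\pi)^2\cov\left(\widehat S_{ab}(\omega_j),\widehat S_{cd}(\omega_{j'})\right)
=\cum(F_a(\omega_j),\overline{F_b(\omega_j)},\overline{F_c(\omega_{j'})},F_d(\omega_{j'}))
+\E[F_a(\omega_j)\overline{F_c(\omega_{j'})}]\,\E[\overline{F_b(\omega_j)}F_d(\omega_{j'})]
+\E[F_a(\omega_j)F_d(\omega_{j'})]\,\E[\overline{F_b(\omega_j)}\,\overline{F_c(\omega_{j'})}].
\]
Each of the three terms will be bounded separately. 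Because the dimension $d_\zeta$ is fixed, the bounds for individual entries carry over to the Frobenius norm.

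For the fourth-order cumulant term, I will expand the DFTs. This turns the term into $T^{-2}\sum_{t_1,\dots,t_4}\cum(\zeta_{a,t_1},\zeta_{b,t_2},\zeta_{c,t_3},\zeta_{d,t_4})$ times unimodular exponentials. By stationarity, the cumulant depends only on the lags relative to $t_1$. Taking absolute values, the sum is at most $T^{-2}\cdot T\cdot\sum_{h_1,h_2,h_3}\|\cum(\zeta_0,\zeta_{h_1},\zeta_{h_2},\zeta_{h_3})\|=O(T^{-1})$, using \cref{asn:stationarity:ii} with $\ell=4$. This bound is uniform in $j,j'$.

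For the second-order terms, I will use that $\E[F_a(\lambda)\overline{F_c(\mu)}]=T^{-1}\sum_{t,s}\Gamma_{ac}(t-s)e^{-\iota\lambda t+\iota\mu s}$. Substituting $h=t-s$, this equals $\sum_{|h|<T}\Gamma_{ac}(h)e^{-\iota\lambda h}\,T^{-1}\sum_{s}e^{-\iota(\lambda-\mu)s}$, where the inner sum over $s$ runs over an index range of length $T-|h|$. At Fourier frequencies, the full sum $\sum_{s=1}^T e^{-\iota(\omega_j-\omega_{j'})s}$ equals $T\1\{j=j'\}$. Removing $|h|$ terms from it changes it by at most $|h|$. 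Hence
\[
\E[F_a(\omega_j)\overline{F_c(\omega_{j'})}]=2\pi S_{ac}(\omega_j)\1\{j=j'\}+r_T,\qquad |r_T|\le \frac{1}{T}\sum_{h}\min(|h|,T)\|\Gamma_\zeta(h)\|+\sum_{|h|\ge T}\|\Gamma_\zeta(h)\|.
\]
The remainder $r_T$ is deterministic and independent of $j,j'$. It tends to zero by dominated convergence, since $\sum_h\|\Gamma_\zeta(h)\|<\infty$ (\cref{rem:S-regularity}). The same argument handles $\E[F_a(\omega_j)F_d(\omega_{j'})]$: because $\overline{F_d(\omega)}=F_d(-\omega)=F_d(\omega_{T-j'})$, its leading term is nonzero only when $j+j'\equiv 0 \pmod T$, i.e.\ $j'=T-j$ or $j=j'=0$. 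In every case, each expectation is bounded by $2\pi\sup_\omega\|S(\omega)\|+\sup_T|r_T|<\infty$. So the product terms are uniformly bounded, which gives the constant $C$.

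Finally, I will combine the pieces. When $j\ne j'$ and $j'\ne T-j$ (which also excludes $j=j'=0$), both leading terms vanish. Each product term is then at most $O(1)\cdot r_T$, and the cumulant term is $O(T^{-1})$. So I will set $c_T\equiv C'(r_T+T^{-1})\to 0$. The step I expect to be the main obstacle is keeping the remainder control uniform in the frequencies, namely the edge-effect counting in the geometric sums. This is the reason for bounding by $\min(|h|,T)$ rather than $|h|$, which avoids needing $\sum_h |h|\|\Gamma_\zeta(h)\|<\infty$. The rest follows \citet[Theorem 7.2.2]{Brillinger1981}.
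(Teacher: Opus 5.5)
Your proposal is correct and follows essentially the same route as the paper. The paper invokes the proof of Brillinger's Theorem 7.2.2: the periodogram covariance splits into a fourth-order DFT cumulant plus two products of second-order DFT moments, and the remainders are bounded with his Theorem 4.3.1 instead of 4.3.2. Your $\min(|h|,T)/T$ dominated-convergence step is a direct derivation of the $o(T)$ remainder in Theorem 4.3.1, which avoids needing $\sum_h |h|\|\Gamma_\zeta(h)\|<\infty$. One small point of wording: $r_T$ itself depends on $j,j'$, and it is only its bound that is uniform.
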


\begin{proof}
Apply the proof of Theorem 7.2.2 in \citet[p.\ 433]{Brillinger1981}, but appeal to his Theorem 4.3.1 instead of Theorem 4.3.2 (pp.\ 92--93) to bound remainders.
\end{proof}

\begin{lem}[Pointwise convergence for spectral mean]
\label{lem:pointwise-linear-mean}
Let $m:[0,2\pi]\to\mathbb C^{d\times d_\zeta^2}$ be bounded and piecewise continuous.
Under \cref{asn:stationarity}, the spectral mean
\[
M_T(m)\equiv\frac{2\pi}{T}\sum_{j=0}^{T-1} m(\omega_j)\ve(\widehat S(\omega_j))
\pto
\int_0^{2\pi} m(\omega)\ve(S(\omega))\,d\omega.
\]
\end{lem}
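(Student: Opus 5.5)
The plan is to split the error into a bias term and a variance term and show that each vanishes; convergence in probability then follows from Chebyshev's inequality. Write
\[
M_T(m)-\int_0^{2\pi} m(\omega)\ve(S(\omega))\,d\omega = \bigl(M_T(m)-\E M_T(m)\bigr) + \Bigl(\E M_T(m)-\frac{2\pi}{T}\sum_{j=0}^{T-1} m(\omega_j)\ve(S(\omega_j))\Bigr) + \Bigl(\frac{2\pi}{T}\sum_{j=0}^{T-1} m(\omega_j)\ve(S(\omega_j))-\int_0^{2\pi} m(\omega)\ve(S(\omega))\,d\omega\Bigr).
\]

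\textbf{Deterministic terms.} The second term has norm at most $2\pi\sup_\omega\|m(\omega)\|\cdot\sup_\omega\|\E\widehat S(\omega)-S(\omega)\|$. This tends to zero by \cref{lem:periodogram-mean}, because $m$ is bounded.

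The third term is a Riemann-sum error. By \cref{rem:S-regularity}, $\omega\mapsto m(\omega)\ve(S(\omega))$ is the product of a bounded piecewise continuous function and a continuous function. It is therefore bounded and piecewise continuous, hence Riemann integrable on $[0,2\pi]$. Riemann sums over the equispaced grid $\{\omega_j\}$, whose mesh $2\pi/T$ goes to zero, then converge to the integral.

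\textbf{Stochastic term.} I will show that the variance of $M_T(m)$ vanishes, working entrywise or with the vectorization. The bound is
\[
\E\|M_T(m)-\E M_T(m)\|^2 \le \frac{4\pi^2}{T^2}\sum_{j,j'=0}^{T-1}\|m(\omega_j)\|\,\|m(\omega_{j'})\|\,\Bigl\|\cov\bigl(\ve(\widehat S(\omega_j)),\ve(\widehat S(\omega_{j'}))\bigr)\Bigr\|.
\]
Split the double sum using \cref{lem:periodogram-cov}.

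The pairs with $j'=j$ or $j'=T-j$ number at most $2T$. Each covariance there is bounded by $C$, so these pairs contribute $O(T\cdot C/T^2)=O(1/T)$.

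The remaining pairs number at most $T^2$, and each covariance there is at most $c_T$. They contribute $O(c_T)\to0$.

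The factors $\|m\|$ are uniformly bounded, so the whole sum is $O(1/T)+O(c_T)\to 0$. Chebyshev's inequality then gives $M_T(m)-\E M_T(m)\pto0$.

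\textbf{Main obstacle.} The only delicate point is the off-diagonal covariance bound, i.e.\ that periodogram ordinates at distinct Fourier frequencies are asymptotically uncorrelated uniformly in $j,j'$. This is exactly what \cref{lem:periodogram-cov} supplies under the fourth-order cumulant summability of \cref{asn:stationarity}. The remaining technical care is that complex-valued covariances should be handled with conjugates, $\E[(X-\E X)(Y-\E Y)^*]$. Since norms are unaffected by this choice, the bound carries over verbatim.
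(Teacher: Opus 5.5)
Your proof is correct and takes essentially the same route as the paper: a bias/stochastic decomposition, with the bias handled by \cref{lem:periodogram-mean} plus Riemann-sum convergence, and the stochastic term handled by the variance bound from \cref{lem:periodogram-cov} followed by Chebyshev's inequality. Your explicit count of the at most $2T$ pairs bounded by $C$ against the remaining pairs bounded by $c_T$ simply spells out the paper's one-line ``$=o(1)$'' step.
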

See \citet[Chapter 7.6]{Brillinger1981} for related results with slightly different regularity conditions on $m$.
\begin{proof}
 Write
\[
M_T(m)-\int m\ve(S)
=\underbrace{\left(M_T(m)-\E M_T(m)\right)}_{\text{stochastic term}}
+\underbrace{\left(\E M_T(m)-\int m\ve(S)\right)}_{\text{bias term}}.
\]

\medskip
\emph{(i) Bias term.}
By \cref{lem:periodogram-mean},
\[
\left\|\E M_T(m)-\frac{2\pi}{T}\sum_{j=0}^{T-1} m(\omega_j)\ve(S(\omega_j))\right\|
\le \sup_\omega\|m(\omega)\|\times \frac{2\pi}{T}\sum_{j=0}^{T-1}\|\E\widehat S(\omega_j)-S(\omega_j)\|\to 0.
\]
Since $\omega\mapsto m(\omega)\ve(S(\omega))$ is piecewise continuous and bounded, the Riemann sum converges:
\[
\frac{2\pi}{T}\sum_{j=0}^{T-1} m(\omega_j)\ve(S(\omega_j))\to \int_0^{2\pi} m(\omega)\ve(S(\omega))\,d\omega.
\]
Therefore, $\E M_T(m)-\int m\ve(S)\to 0$.

\medskip
\emph{(ii) Stochastic term.} Under \cref{asn:stationarity}, \cref{lem:periodogram-cov} bounds $\|\cov(\ve(\widehat S(\omega_j)),\ve(\widehat S(\omega_{j'})))\|$ uniformly in $j,j'$.
Using boundedness of $m$, we have
\begin{align*}
\E\left\|\ve\left(M_T(m)-\E M_T(m)\right)\right\|^2
&\le \left(\frac{2\pi}{T}\right)^2\sum_{j,j'}
\|m(\omega_j)\|\|m(\omega_{j'})\|
\left\|\cov\left(\ve(\widehat S(\omega_j)),\ve(\widehat S(\omega_{j'}))\right)\right\|\\
&=o(1).
\end{align*}
Thus, $M_T(m)-\E M_T(m)\pto 0$ by Chebyshev's inequality. 

\medskip
Finally, combining (i) and (ii) yields the claim.
\end{proof}

\begin{lem}[Stochastic equicontinuity of sample moments]
\label{lem:SE}
Under \cref{asn:stationarity,asn:J-regularity},
the process $\{\widehat g(\theta):\theta\in\Theta\}$ is stochastically equicontinuous, i.e.,
for every $\varepsilon>0$,
\[
\lim_{\delta\to 0}\limsup_{T\to\infty}\Pr\left(
\sup_{\|\widetilde\theta-\theta\|<\delta}\|\widehat g(\widetilde\theta)-\widehat g(\theta)\|>\varepsilon
\right)=0.
\]
\end{lem}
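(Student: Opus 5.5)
The plan is to exploit the bilinearity of $\widehat g(\theta)$ in the spectral weight and the periodogram. For any $\theta,\widetilde\theta\in\Theta$,
\[
\widehat g(\widetilde\theta)-\widehat g(\theta)=\frac{2\pi}{T}\sum_{j=0}^{T-1}\left\{\widetilde J(\omega_j;\widetilde\theta)-\widetilde J(\omega_j;\theta)\right\}\ve(\widehat S(\omega_j)).
\]
Taking norms and bounding the weight difference uniformly in $\omega$ by the modulus from \cref{asn:J-regularity:ii} gives the deterministic-times-random bound
\[
\sup_{\|\widetilde\theta-\theta\|<\delta}\|\widehat g(\widetilde\theta)-\widehat g(\theta)\|\le \Delta_{\widetilde J}(\delta)\cdot \frac{2\pi}{T}\sum_{j=0}^{T-1}\|\ve(\widehat S(\omega_j))\|.
\]
The random factor no longer depends on $\theta$ or $\delta$. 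It therefore suffices to show that $A_T\equiv \frac{2\pi}{T}\sum_{j}\|\widehat S(\omega_j)\|$ is $O_p(1)$.

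To bound $A_T$, I would use the rank-one structure $\widehat S(\omega)=\frac{1}{2\pi}F_\zeta(\omega)F_\zeta(\omega)^*$. This gives $\|\widehat S(\omega_j)\|=\frac{1}{2\pi}\|F_\zeta(\omega_j)\|^2=\operatorname{tr}\widehat S(\omega_j)$, so that
\[
A_T=\frac{2\pi}{T}\sum_{j}\operatorname{tr}\widehat S(\omega_j).
\]
By Parseval's identity this equals $\frac{1}{T}\sum_{t=1}^T\|\zeta_t\|^2$, the sample second moment. That quantity has bounded expectation $\E\|\zeta_0\|^2<\infty$ under \cref{asn:stationarity:i}. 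Markov's inequality then gives $A_T=O_p(1)$, uniformly in $T$.

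Alternatively, $\E A_T\le 2\pi\sup_\omega\|\E\widehat S(\omega)\|$, which is bounded by \cref{lem:periodogram-mean} together with \cref{rem:S-regularity}.

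To conclude, fix $\varepsilon,\eta>0$ and choose $M$ so that $\Pr(A_T>M)<\eta$ for all $T$. Then choose $\delta$ small enough that $\Delta_{\widetilde J}(\delta)M<\varepsilon$, which is possible because $\Delta_{\widetilde J}(\delta)\to0$. The probability in the statement is then at most $\eta$ for every $T$, and letting $\eta\to0$ yields the result.

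I expect no real obstacle. The only point to check is the identity $\|xx^*\|_F=\|x\|^2$ and the Parseval step, or equivalently the uniform boundedness of $\E\widehat S$. This route needs only second moments, which is why bilinearity avoids the higher-order cumulant conditions used in Dahlhaus-type arguments.
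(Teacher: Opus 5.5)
Your proof is correct, but it controls the random factor differently from the paper. The paper first applies Cauchy--Schwarz across frequencies, which bounds the difference by $\sqrt{2\pi}\,\Delta_{\widetilde J}(\delta)\sqrt{V_T}$ with $V_T=\frac{2\pi}{T}\sum_j\|\widehat S(\omega_j)\|^2$. It then shows $V_T=O_p(1)$ from a uniform bound on $\E\|\widehat S(\omega_j)\|^2$, obtained from \cref{lem:periodogram-mean,lem:periodogram-cov}. That step draws on the fourth moments and fourth-order cumulant summability in \cref{asn:stationarity}.

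You instead keep the $L^1$ bound and use that $\widehat S(\omega_j)$ is rank-one positive semidefinite, so $\|\widehat S(\omega_j)\|=\operatorname{tr}\widehat S(\omega_j)$. Parseval then gives the exact identity $A_T=\frac1T\sum_t\|\zeta_t\|^2$. This is more elementary, and it shows that stochastic equicontinuity needs only stationarity and finite second moments. With the sample demeaning used in the main text, Parseval yields the sample variance instead, which is bounded by the same quantity. The fourth-order conditions are then needed only for pointwise convergence in \cref{lem:pointwise-linear-mean}.

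The paper's route reuses lemmas it needs anyway. It would also survive if the weights were controlled only in a mean-square sense across frequencies, where your $L^1$ bound would force you back to an $L^2$ bound on the periodogram. Given the sup-norm modulus in \cref{asn:J-regularity:ii}, however, the Cauchy--Schwarz step is not needed.

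One small slip: your ``alternative'' bound $\E A_T\le 2\pi\sup_\omega\|\E\widehat S(\omega)\|$ does not follow from a general norm argument, since $\E\|X\|\ge\|\E X\|$ runs the wrong way. It holds only through your trace identity, $\E\|\widehat S(\omega)\|=\operatorname{tr}\E\widehat S(\omega)\le\sqrt{d_\zeta}\,\|\E\widehat S(\omega)\|$, so a dimension constant is missing. Your main Parseval argument does not rely on this.
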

\begin{proof}
For any $\widetilde\theta,\theta\in\Theta$, by Cauchy-Schwarz,
\begin{align*}
\|\widehat g(\widetilde\theta)-\widehat g(\theta)\|
&\le \frac{2\pi}{T}\sum_{j=0}^{T-1}\|\widetilde J(\omega_j;\widetilde\theta)-\widetilde J(\omega_j;\theta)\|\|\ve(\widehat S(\omega_j))\|\\
&\le \left(\frac{2\pi}{T}\sum_{j=0}^{T-1}\|\widetilde J(\omega_j;\widetilde\theta)-\widetilde J(\omega_j;\theta)\|^2\right)^{1/2}\left(\frac{2\pi}{T}\sum_{j=0}^{T-1}\|\widehat S(\omega_j)\|^2\right)^{1/2}.
\end{align*}

It follows from \cref{lem:periodogram-mean,lem:periodogram-cov} that $\E\|\widehat S(\omega_j)\|^2$ is uniformly bounded in $j$. Hence, by Markov's inequality,
\[V_T \equiv \frac{2\pi}{T}\sum_{j=0}^{T-1}\|\widehat S(\omega_j)\|^2=O_p(1).\]

Under \cref{asn:J-regularity:ii}, $\|\widetilde J(\omega_j;\widetilde\theta)-\widetilde J(\omega_j;\theta)\|\le \Delta_{\widetilde J}(\delta)$ for $\|\widetilde\theta-\theta\|<\delta$ and all $j$, hence
\[
\sup_{\|\widetilde\theta-\theta\|<\delta}\|\widehat g(\widetilde\theta)-\widehat g(\theta)\|
\le \sqrt{2\pi}\Delta_{\widetilde J}(\delta)\sqrt{V_T}.
\]
Fix $\varepsilon>0$ and $M>0$. Then
\[
\Pr\left(\sup_{\|\widetilde\theta-\theta\|<\delta}\|\widehat g(\widetilde\theta)-\widehat g(\theta)\|>\varepsilon\right)
\le \Pr\left(\sqrt{V_T}>M\right) + \1\left(\sqrt{2\pi}\Delta_{\widetilde J}(\delta)M>\varepsilon\right).
\]
As $V_T=O_p(1)$, choose $M$ large to make the first term small uniformly in $T$. Then choose $\delta$ small so that $\sqrt{2\pi}\Delta_{\widetilde J}(\delta)M\le \varepsilon$ using \cref{asn:J-regularity:ii}.
This proves stochastic equicontinuity.
\end{proof}

The bound $V_T=O_p(1)$ is sufficient here because we only need to control a fixed-weight, finite-dimensional parametric spectral mean. In contrast, results that control an empirical spectral process uniformly over a rich index class, as in \citet{Dahlhaus1988}, typically require stronger smoothness conditions on the spectrum and an entropy restriction on the index class.

\begin{lem}[Uniform convergence of sample moments]
\label{lem:uniform-g}
Under \cref{asn:parameter-space,asn:stationarity,asn:J-regularity},
\[
\sup_{\theta\in\Theta}\|\widehat g(\theta)-g(\theta)\|\pto 0.
\]
\end{lem}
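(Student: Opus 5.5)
The plan is the standard route: pointwise convergence plus stochastic equicontinuity on a compact set implies uniform convergence. The deterministic limit $g$ also needs to be equicontinuous, which we already have.

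First, fix $\theta\in\Theta$. By \cref{asn:J-regularity:i}, $m(\omega)\equiv\widetilde J(\omega;\theta)$ is bounded and piecewise continuous on $[0,2\pi]$. So \cref{lem:pointwise-linear-mean} applies directly under \cref{asn:stationarity} and gives $\widehat g(\theta)\pto g(\theta)$ for each fixed $\theta$.

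Second, we control the oscillation of both processes. \cref{lem:SE} gives stochastic equicontinuity of $\widehat g$. In fact its proof yields the sharper bound
\[
\sup_{\|\widetilde\theta-\theta\|<\delta}\|\widehat g(\widetilde\theta)-\widehat g(\theta)\|\le \sqrt{2\pi}\,\Delta_{\widetilde J}(\delta)\sqrt{V_T},
\qquad V_T=O_p(1),
\]
which I would reuse directly. For the limit, \cref{rem:g-continuity} shows
\[
\|g(\widetilde\theta)-g(\theta)\|\le \Delta_{\widetilde J}(\|\widetilde\theta-\theta\|)\int_0^{2\pi}\|\ve(S(\omega))\|\,d\omega ,
\]
so $g$ is uniformly continuous on $\Theta$.

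Third, the finite-net argument. Fix $\varepsilon,\eta>0$.
\begin{enumerate}
\item Choose $M$ such that $\Pr(\sqrt{V_T}>M)<\eta$ for all large $T$.
\item Choose $\delta>0$ such that both $\sqrt{2\pi}\Delta_{\widetilde J}(\delta)M<\varepsilon/3$ and $\Delta_{\widetilde J}(\delta)\int_0^{2\pi}\|\ve S\|<\varepsilon/3$, using \cref{asn:J-regularity:ii}.
\item By compactness (\cref{asn:parameter-space}), cover $\Theta$ by finitely many balls of radius $\delta$ with centers $\theta_1,\dots,\theta_N$.
\end{enumerate}
For any $\theta\in\Theta$, pick a center $\theta_i$ within $\delta$ of it and use the triangle inequality:
\[
\|\widehat g(\theta)-g(\theta)\|\le\|\widehat g(\theta)-\widehat g(\theta_i)\|+\|\widehat g(\theta_i)-g(\theta_i)\|+\|g(\theta_i)-g(\theta)\|.
\]
On the event $\{\sqrt{V_T}\le M\}$, the first and third terms are each below $\varepsilon/3$ uniformly in $\theta$. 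The middle term is a maximum over finitely many points, so it is below $\varepsilon/3$ with probability at least $1-\eta$ for large $T$ by the pointwise step. Hence $\limsup_T\Pr(\sup_\theta\|\widehat g-g\|>\varepsilon)\le 2\eta$, and since $\eta$ is arbitrary the claim follows. Alternatively, one could cite Newey and McFadden's Lemma 2.8 directly.

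There is no real obstacle, because \cref{lem:pointwise-linear-mean,lem:SE} already do the heavy lifting. The only point needing care is that the equicontinuity of $\widehat g$ holds on a single event with probability near one, uniformly over all pairs of parameters, rather than pair by pair. The explicit bound via $V_T$ handles this cleanly.
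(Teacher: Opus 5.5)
Your proposal is correct and follows the paper's proof: pointwise convergence from \cref{lem:pointwise-linear-mean}, stochastic equicontinuity from \cref{lem:SE}, and continuity of $g$ on compact $\Theta$ from \cref{rem:g-continuity}, combined via \citet[Lemma 2.8]{NeweyMcFadden1994}. Your finite-net argument using the explicit bound $\sqrt{2\pi}\,\Delta_{\widetilde J}(\delta)\sqrt{V_T}$ is just that lemma's proof written out for this setting.
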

\begin{proof}
\cref{asn:parameter-space} gives compact $\Theta$.
Under \cref{asn:parameter-space,asn:stationarity,asn:J-regularity}, \cref{rem:g-continuity} gives continuity of $g$ on $\Theta$.
Under \cref{asn:parameter-space,asn:stationarity,asn:J-regularity}, \cref{lem:SE} gives stochastic equicontinuity.

Under \cref{asn:stationarity}, \cref{lem:pointwise-linear-mean} applies with $m(\omega)=\widetilde J(\omega;\theta)$ for each $\theta$, which is bounded and piecewise continuous by \cref{asn:J-regularity:i}, so $\widehat g(\theta)\pto g(\theta)$ pointwise.

Together, the uniform convergence follows from pointwise convergence plus stochastic equicontinuity on compact 
$\Theta$ and continuity of $g$ \citep[Lemma 2.8]{NeweyMcFadden1994}.
\end{proof}

% ============================================================
\subsubsection{Tools for asymptotic normality}
\begin{remark}[Continuity of $G$]
\label{rem:G-continuity}
Under \cref{asn:stationarity,asn:J-deriv}, $G(\theta)$ is continuous on $\Theta_0$. The argument is similar to \cref{rem:g-continuity}.
\end{remark}

\begin{lem}[Uniform convergence of $\widehat G$]
\label{lem:uniform-G}
Under \cref{asn:stationarity,asn:J-deriv},
\[
\sup_{\theta\in\Theta_0}\|\widehat G(\theta)-G(\theta)\|\pto 0.
\]
\end{lem}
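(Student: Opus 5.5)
The plan is to copy the proof of \cref{lem:uniform-g}: establish pointwise convergence, then stochastic equicontinuity, then check that the limit is continuous, and finish with \citet[Lemma 2.8]{NeweyMcFadden1994} on the compact set $\Theta_0$. The key observation is that $\widehat G(\theta)$ is again a bilinear spectral mean. Setting
\[
m_\theta(\omega)\equiv \lbrace \ve(\cdot)'\otimes I_{d_g}\rbrace\text{-rearrangement of } \partial \ve(\widetilde J(\omega;\theta))/\partial\theta',
\]
we have $\ve(\widehat G(\theta)) = \frac{2\pi}{T}\sum_j m_\theta(\omega_j)\ve(\widehat S(\omega_j))$. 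Here $m_\theta$ is a deterministic, $\theta$-indexed matrix function obtained from the derivative by a fixed linear rearrangement; this rearrangement is exactly the identity $\lbrace a'\otimes I\rbrace b = (\text{matrix}(b))\,a$. Consequently $m_\theta$ inherits boundedness and piecewise continuity in $\omega$ from \cref{asn:J-deriv:ii}, and uniform continuity in $\theta$ with modulus proportional to $\Delta_{\partial\widetilde J}$ from \cref{asn:J-deriv:iii}.

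The first step is pointwise convergence. For each fixed $\theta\in\Theta_0$, apply \cref{lem:pointwise-linear-mean} with $m=m_\theta$ to get $\widehat G(\theta)\pto G(\theta)$. The second step is stochastic equicontinuity, and I would repeat the Cauchy--Schwarz argument of \cref{lem:SE} verbatim. Writing $\|\lbrace a'\otimes I\rbrace b\|\le \|a\|\|b\|$, we get
\[
\sup_{\|\widetilde\theta-\theta\|<\delta}\|\widehat G(\widetilde\theta)-\widehat G(\theta)\|\le \sqrt{2\pi}\,\Delta_{\partial\widetilde J}(\delta)\sqrt{V_T},
\]
where $V_T=\frac{2\pi}{T}\sum_j\|\widehat S(\omega_j)\|^2=O_p(1)$ as already shown in the proof of \cref{lem:SE}. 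Since $\Delta_{\partial\widetilde J}(\delta)\to0$, the same choice of $M$ and then $\delta$ gives equicontinuity. The third step is continuity of $G$ on $\Theta_0$, which is \cref{rem:G-continuity}. Following \cref{rem:g-continuity}, one bounds $\|G(\widetilde\theta)-G(\theta)\|\le\Delta_{\partial\widetilde J}(\|\widetilde\theta-\theta\|)\int\|\ve(S)\|\,d\omega$, using integrability of $S$ from \cref{rem:S-regularity}. Combining the three steps on the compact set $\Theta_0$ via \citet[Lemma 2.8]{NeweyMcFadden1994} gives the uniform convergence claimed in the lemma.

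I expect the main obstacle to be bookkeeping rather than substance. We must verify that the derivative matrix is really of the form ``fixed bounded weight times $\ve(\widehat S)$'', so that \cref{lem:pointwise-linear-mean} applies, and that differentiation can be interchanged with the finite sum defining $\widehat g$. The interchange is trivial, since the sum is finite. The measurability and boundedness of $m_\theta$ follow from \cref{asn:J-deriv:ii}. The crucial point is that the equicontinuity bound uses only second moments of the periodogram, so no assumptions beyond \cref{asn:stationarity} are needed. Once that is clear, the argument is a direct transcription of \cref{lem:SE,lem:uniform-g}.
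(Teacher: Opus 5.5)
Your proposal is correct and takes the same route as the paper, whose proof of this lemma is just the remark that it mirrors \cref{lem:uniform-g}. That means pointwise convergence via \cref{lem:pointwise-linear-mean}, stochastic equicontinuity via the Cauchy--Schwarz bound of \cref{lem:SE} with $V_T=O_p(1)$ and modulus $\Delta_{\partial\widetilde J}$, continuity of $G$ from \cref{rem:G-continuity}, and \citet[Lemma 2.8]{NeweyMcFadden1994} on the compact set $\Theta_0$, which is exactly what you spell out (including the valid bound $\|(a'\otimes I_{d_g})B\|\le\|a\|\|B\|$).
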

The proof is similar to that of \cref{lem:uniform-g}.

% ============================================================

\subsubsection{Tools for HAC and time-domain CLT}
\begin{remark}[\cref{asn:Fourier-series-summability} vs.\ \cref{asn:J-regularity,asn:J-deriv}]
\label{rem:Fourier-summability-bullets}
Note that:
\begin{itemize}[leftmargin=2em]
\item \cref{asn:Fourier-series-summability:i} $\Rightarrow$ \cref{asn:J-regularity:i}: by the Weierstrass M-test, $\widetilde J(\cdot;\theta)$ is bounded and continuous in $\omega$ on $[0,2\pi]$, hence piecewise continuous and Riemann integrable.
\item \cref{asn:parameter-space,asn:Fourier-series-summability:i} $\Rightarrow$ \cref{asn:J-regularity:ii}: uniform absolute summability plus coefficient continuity on compact set implies the Fourier series is uniformly continuous by the Weierstrass M-test.
\item \cref{asn:Fourier-series-summability:i,asn:Fourier-series-summability:ii} $\Rightarrow$ \cref{asn:J-deriv:i,asn:J-deriv:ii,asn:J-deriv:iii}: similar argument on $\partial\ve(\widetilde J(\cdot;\theta))/\partial\theta'$ for $\theta\in\Theta_0$.
\item \cref{asn:J-regularity,asn:J-deriv} do not generally imply \cref{asn:Fourier-series-summability}:
even continuous differentiability in $\omega$ does not guarantee absolute summability of Fourier coefficients. See \cref{rem:Fourier-series-summability-sufficient} for a set of sufficient conditions.
\end{itemize}
\end{remark}

\begin{remark}[Sufficient conditions for \cref{asn:Fourier-series-summability}]
\label{rem:Fourier-series-summability-sufficient}
Assume that the map $\omega \mapsto \widetilde J(\omega;\theta)$ is twice differentiable with second derivative continuous in $(\omega,\theta)$, and that the following boundary conditions hold: $\widetilde J(0;\theta)=\widetilde J(2\pi;\theta)$ and $\partial \widetilde J(0;\theta)/\partial \omega =\partial \widetilde J(2\pi;\theta)/\partial \omega$. Then integration by parts gives
\[\widetilde{J}_k(\theta) = -\frac{1}{k^2} \times \frac{1}{2\pi}\int_0^{2\pi} e^{-\iota\omega k} \frac{\partial^2 \widetilde J(\omega;\theta)}{\partial \omega^2}\,d\omega,\]
which is easily seen to imply \cref{asn:Fourier-series-summability:i}. Similar conditions placed on the map $\omega \mapsto \partial \ve(\widetilde{J}(\omega;\theta))/\partial \theta'$ yield \cref{asn:Fourier-series-summability:ii}. Finally, under the conditions above, the display above implies (for any indices $j_1,j_2$)
\begin{align*}
\sum_{k=1}^\infty k |(\widetilde{J}_k(\theta_0))_{j_1,j_2}| &= \sum_{k=1}^\infty \frac{1}{k} \times \left|\frac{1}{2\pi}\int_0^{2\pi} e^{-\iota\omega k} \frac{\partial^2 (\widetilde J(\omega;\theta_0))_{j_1,j_2}}{\partial \omega^2}\,d\omega\right| \\
&\leq \left(\sum_{k=1}^\infty k^{-2}\right)^{1/2} \times \left(\sum_{k=1}^\infty \left|\frac{1}{2\pi}\int_0^{2\pi} e^{-\iota\omega k} \frac{\partial^2 (\widetilde J(\omega;\theta_0))_{j_1,j_2}}{\partial \omega^2}\,d\omega\right|^2\right)^{1/2} \\
&\leq \left(\sum_{k=1}^\infty k^{-2}\right)^{1/2} \times \left(\sum_{k \in \mathbb{Z}} \left|\frac{1}{2\pi}\int_0^{2\pi} e^{-\iota\omega k} \frac{\partial^2 (\widetilde J(\omega;\theta_0))_{j_1,j_2}}{\partial \omega^2}\,d\omega\right|^2\right)^{1/2} \\
&= \left(\sum_{k=1}^\infty k^{-2}\right)^{1/2} \times \left(\frac{1}{2\pi}\int_0^{2\pi} \left|\frac{\partial^2 (\widetilde J(\omega;\theta_0))_{j_1,j_2}}{\partial \omega^2}\right|^2\,d\omega\right)^{1/2},
\end{align*}
where the first inequality uses Cauchy-Schwarz and the final equality uses Parseval. Since $\partial^2 \widetilde{J}(\omega;\theta_0)/\partial \omega^2$ was already assumed to be bounded in $\omega$, \cref{asn:Fourier-series-summability:iii} follows.
\end{remark}

\begin{remark}[Summability of $\Gamma_\psi(h)$]
\label{rem:summability-Gamma-psi-general}
Under \cref{asn:stationarity,asn:identification-general,asn:Fourier-series-summability:i} at $\theta_0$, we have that $\sum_{h\in\mathbb Z}\|\Gamma_\psi(h)\|<\infty$, so $\Omega\equiv \sum_{h\in\mathbb Z}\Gamma_\psi(h)$ is well-defined. Indeed, by \cref{asn:identification-general}, $\E[\psi_t(\theta_0)]=g(\theta_0)=0$, so $\Gamma_\psi(h)=\E[\psi_h(\theta_0)\psi_0(\theta_0)']=\cov(\psi_h(\theta_0),\psi_0(\theta_0))$, and its absolute summability follows from the argument in \cref{lem:Gamma-psiT-envelope} below applied to the untruncated score $\psi_t(\theta_0)$.
\end{remark}

\begin{lem}[Truncation error]
\label{lem:truncation-error}
Let $\mathcal{K}_{t,T}^c\equiv\{k\in\mathbb Z:k<-(t-1)\ \text{or}\ k>T-t\}$ and define the edge-truncation error
\[
\eta_{t,T}\equiv \psi_{t,T}(\theta_0)-\psi_t(\theta_0)=-\sum_{k\in \mathcal{K}_{t,T}^c}\widetilde J_k(\theta_0)\ve(\zeta_{t+k}\zeta_t').
\]
Under \cref{asn:stationarity:i} and \cref{asn:Fourier-series-summability:iii} at $\theta_0$,
\[
\sum_{t=1}^T\left(\E\|\eta_{t,T}\|^2\right)^{1/2}=O(1).
\]
\end{lem}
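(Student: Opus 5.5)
The plan is to bound each $\E\|\eta_{t,T}\|^2$ by Minkowski's inequality, and then sum the square roots over $t$ using the weighted summability in \cref{asn:Fourier-series-summability:iii}. Write $\eta_{t,T}=-\sum_{k\in\mathcal K_{t,T}^c}\widetilde J_k(\theta_0)\ve(\zeta_{t+k}\zeta_t')$. Minkowski's inequality in $L^2$ gives
\[
\left(\E\|\eta_{t,T}\|^2\right)^{1/2}\le \sum_{k\in\mathcal K_{t,T}^c}\|\widetilde J_k(\theta_0)\|\left(\E\|\zeta_{t+k}\zeta_t'\|^2\right)^{1/2}.
\]
The second factor is bounded by stationarity and finite fourth moments (\cref{asn:stationarity:i}). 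By Cauchy--Schwarz, $\E\|\zeta_{t+k}\zeta_t'\|^2=\E[\|\zeta_{t+k}\|^2\|\zeta_t\|^2]\le \E\|\zeta_0\|^4\equiv C_4<\infty$, uniformly in $t,k$. Hence
\[
\left(\E\|\eta_{t,T}\|^2\right)^{1/2}\le C_4^{1/2}\sum_{k\in\mathcal K_{t,T}^c}\|\widetilde J_k(\theta_0)\|.
\]
For $t\notin\{1,\dots,T\}$ the indicator makes $\psi_{t,T}=0$, but the sum in the statement runs only over $1\le t\le T$, so only those $t$ matter.

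Summing over $t$ and splitting $\mathcal K_{t,T}^c$ into its two tails, $k>T-t$ and $k<-(t-1)$, I would swap the order of summation. For the upper tail, a fixed $k\ge1$ satisfies $k>T-t$ exactly when $t\ge T-k+1$. That holds for at most $\min(k,T)\le k$ values of $t\in\{1,\dots,T\}$. For the lower tail, a fixed $k\le -1$ satisfies $k<-(t-1)$ exactly when $t\le -k$, again for at most $|k|$ values of $t$. Therefore
\[
\sum_{t=1}^T\sum_{k\in\mathcal K_{t,T}^c}\|\widetilde J_k(\theta_0)\|\le \sum_{k\in\mathbb Z}|k|\,\|\widetilde J_k(\theta_0)\|<\infty
\]
by \cref{asn:Fourier-series-summability:iii}. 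The bound does not depend on $T$, which gives the $O(1)$ claim.

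No step is a real obstacle. The only point needing care is the counting in the interchange of sums: each coefficient $\widetilde J_k$ should be charged at most $|k|$ times across all $t$, including at both edges at once when $|k|$ is comparable to $T$. The bound $\min(k,T)\le |k|$ handles that, since the two tails are treated separately and each one alone contributes at most $|k|$. Note that no cumulant summability is needed here. Only fourth moments and Minkowski's inequality are used, which is consistent with the lemma invoking only \cref{asn:stationarity:i}.
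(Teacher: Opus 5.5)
Your proposal is correct and matches the paper's proof step for step. The paper likewise bounds $(\E\|\eta_{t,T}\|^2)^{1/2}$ by a constant times $\sum_{k\in\mathcal K_{t,T}^c}\|\widetilde J_k(\theta_0)\|$ using finite fourth moments. It then interchanges the sums, so that each $\widetilde J_k(\theta_0)$ is charged $\min(|k|,T)\le|k|$ times, yielding $\sum_{|k|\le T-1}|k|\|\widetilde J_k(\theta_0)\|+T\sum_{|k|\ge T}\|\widetilde J_k(\theta_0)\|=O(1)$. Your Minkowski and Cauchy--Schwarz argument spells out the first bound, which the paper states in a single line.
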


\begin{proof}
Under \cref{asn:stationarity:i}, $\E\|\eta_{t,T}\|^2\le C\left(\sum_{k\in \mathcal{K}_{t,T}^c}\|\widetilde J_k(\theta_0)\|\right)^2$ for some $C<\infty$, so
\[
\sum_{t=1}^T\left(\E\|\eta_{t,T}\|^2\right)^{1/2}\le C^{1/2}\sum_{t=1}^T\sum_{k\in \mathcal{K}_{t,T}^c}\|\widetilde J_k(\theta_0)\|.
\]
Under \cref{asn:Fourier-series-summability:iii},
\begin{align*}
\sum_{t=1}^T\sum_{k\in \mathcal{K}_{t,T}^c}\|\widetilde J_k(\theta_0)\|&=\sum_{|k|\le T-1}|k|\|\widetilde J_k(\theta_0)\|+T\sum_{|k|\ge T}\|\widetilde J_k(\theta_0)\|\\
&\le \sum_{|k|\le T-1}|k|\|\widetilde J_k(\theta_0)\|+\sum_{|k|\ge T}|k|\|\widetilde J_k(\theta_0)\|=O(1),
\end{align*}
which gives the result.
\end{proof}

\begin{lem}[Summability of the centered score autocovariance]
\label{lem:Gamma-psiT-envelope}
Define
\[\check\Gamma_{\psi,T}(h)\equiv \frac1T\sum_{t=1}^T\cov\left(\psi_{t,T}(\theta_0),\psi_{t-h,T}(\theta_0)\right).\]
Under \cref{asn:stationarity} and \cref{asn:Fourier-series-summability:i} at $\theta_0$,
there exists a positive, deterministic sequence $C(h)$, independent of $T$, such that
$\|\check\Gamma_{\psi,T}(h)\|\le C(h)$ for all $T,h$, and $\sum_{h \in \mathbb{Z}} C(h)<\infty$.
\end{lem}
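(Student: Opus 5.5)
We expand the covariance of the two truncated scores into a double sum of second-order terms of $\zeta$. Write $\psi_{t,T}(\theta_0)=\sum_{k}\widetilde J_k(\theta_0)\ve(\zeta_{t+k}\zeta_t')\1\{k\in\mathcal K_{t,T}\}\1\{1\le t\le T\}$, where $\mathcal K_{t,T}\equiv\{-(t-1),\dots,T-t\}$. By bilinearity of covariance,
\[
\cov\left(\psi_{t,T}(\theta_0),\psi_{t-h,T}(\theta_0)\right)=\sum_{k\in\mathcal K_{t,T}}\sum_{k'\in\mathcal K_{t-h,T}}\widetilde J_k(\theta_0)\,\cov\left(\ve(\zeta_{t+k}\zeta_t'),\ve(\zeta_{t-h+k'}\zeta_{t-h}')\right)\widetilde J_{k'}(\theta_0)'.
\]
Taking norms and dropping the indicator restrictions only enlarges the bound. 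Hence
\[
\|\check\Gamma_{\psi,T}(h)\|\le \sum_{k,k'\in\mathbb Z}\|\widetilde J_k\|\,\|\widetilde J_{k'}\|\,\kappa(k,k',h),
\]
where $\kappa(k,k',h)\equiv\|\cov(\ve(\zeta_{k}\zeta_0'),\ve(\zeta_{-h+k'}\zeta_{-h}'))\|$. Stationarity makes $\kappa$ independent of $t$ and $T$, and the average $T^{-1}\sum_{t}$ of at most $T$ such terms is bounded by the same quantity. We therefore define $C(h)$ as the right-hand side. It is deterministic and independent of $T$, and we may add a small positive summable term to make it strictly positive.

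The covariance of products of mean-zero variables decomposes by the product theorem for cumulants. Elementwise,
\[
\cov(\zeta_{a}\zeta_{b},\zeta_{c}\zeta_{d})=\cum(\zeta_a,\zeta_b,\zeta_c,\zeta_d)+\Gamma(a-c)\Gamma(b-d)+\Gamma(a-d)\Gamma(b-c),
\]
with $\Gamma(\cdot)$ the autocovariance. Here $a=k$, $b=0$, $c=k'-h$, $d=-h$. Summing over $h\in\mathbb Z$ for fixed $k,k'$, we bound each term.

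\emph{Fourth-cumulant term.} By stationarity it equals $\cum(\zeta_0,\zeta_{k},\zeta_{k'-h},\zeta_{-h})$. Summed over $h$ this is at most the triple sum $\sum_{h_1,h_2,h_3}\|\cum(\zeta_0,\zeta_{h_1},\zeta_{h_2},\zeta_{h_3})\|<\infty$ from \cref{asn:stationarity:ii}, uniformly in $k,k'$.

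\emph{First Gaussian term.} It is bounded by $\|\Gamma(k-k'+h)\|\,\|\Gamma(h)\|\le \sup\|\Gamma\|\cdot\|\Gamma(h)\|$. Summed over $h$ this is at most $\sup_u\|\Gamma(u)\|\sum_u\|\Gamma(u)\|<\infty$ by \cref{asn:stationarity} (the $\ell=2$ case).

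\emph{Second Gaussian term.} Similarly, $\|\Gamma(k+h)\|\,\|\Gamma(h-k')\|\le \sup\|\Gamma\|\cdot\|\Gamma(k+h)\|$, which is summable over $h$ uniformly in $k$.

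Thus $\sum_h\kappa(k,k',h)\le \bar C$ uniformly in $k,k'$. By Tonelli,
\[
\sum_h C(h)\le \bar C\left(\sum_k\|\widetilde J_k(\theta_0)\|\right)^2<\infty
\]
by \cref{asn:Fourier-series-summability:i}.

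The only delicate step is bookkeeping. The indicator truncations and $t$-dependence must be removed before summing over $h$, which the domination by nonnegative terms handles. The cumulant decomposition must be applied to vectorized outer products, which works entrywise with finite dimension absorbing constants. No part appears to be a genuine obstacle beyond verifying that the fourth-cumulant sum is uniform in the fixed lags $k,k'$. This holds because fixing two of the three free lags via the one summed index $h$ still leaves a sub-sum of the absolutely convergent triple sum.
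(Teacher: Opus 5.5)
Your proposal is correct and follows the paper's proof essentially step for step. You expand by bilinearity, drop the edge restrictions $\mathcal K_{t,T}$, use stationarity to define $C(h)$ as the resulting double sum, bound $\sum_h$ of the fourth-order covariance uniformly in $(k,k')$ via the product-cumulant decomposition, and finish with $\bigl(\sum_k\|\widetilde J_k(\theta_0)\|\bigr)^2$. Your term-by-term check of that uniform bound (the fourth-cumulant term via the triple sum, the two Gaussian terms via $\sup_u\|\Gamma(u)\|\sum_u\|\Gamma(u)\|$) just spells out what the paper asserts in one line.
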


\begin{proof}
With $\mathcal{K}_{t,T}\equiv\{-(t-1),\dots,T-t\}$, $\psi_{t,T}(\theta_0)=\sum_{k\in \mathcal{K}_{t,T}}\widetilde J_k(\theta_0)\ve(\zeta_{t+k}\zeta_t')$.
Then
\[\check\Gamma_{\psi,T}(h) =\frac1T\sum_{t=1}^T\sum_{k\in \mathcal{K}_{t,T}}\sum_{\ell\in \mathcal{K}_{t-h,T}}
\widetilde J_k(\theta_0)\cov\left(\ve(\zeta_{t+k}\zeta_t'),\ve(\zeta_{t-h+\ell}\zeta_{t-h}')\right)\widetilde J_\ell(\theta_0)'.\]
Hence, by the triangle inequality, stationarity in \cref{asn:stationarity:i}, and dropping the finite index
restrictions,
\[
\|\check\Gamma_{\psi,T}(h)\|
\le
\sum_{k\in\mathbb Z}\sum_{\ell\in\mathbb Z}\|\widetilde J_k(\theta_0)\|\|\widetilde J_\ell(\theta_0)\|\left\|\cov\left(\ve(\zeta_k\zeta_0'),\ve(\zeta_{-h+\ell}\zeta_{-h}')\right)\right\|
\equiv C(h),
\]
which is independent of $T$.

To show summability of $C(h)$, note that each entry of $\cov(\ve(\zeta_k\zeta_0'),\ve(\zeta_{-h+\ell}\zeta_{-h}'))$ is a fourth-order covariance $\cov(AB,CD)$ with $A,B,C,D$ mean-zero components of $\zeta$, and admits the product cumulant decomposition \citep[Theorem 2.3.2]{Brillinger1981}
\[\cov(AB,CD)=\cum(A,B,C,D)+\cov(A,C)\cov(B,D)+\cov(A,D)\cov(B,C).\]
Thus, by \cref{asn:stationarity:ii},
\[
\sum_{h\in\mathbb Z}\left\|\cov\left(\ve(\zeta_k\zeta_0'),\ve(\zeta_{-h+\ell}\zeta_{-h}')\right)\right\|
\le M<\infty,
\]
uniformly in $k,\ell$.
Therefore, by \cref{asn:Fourier-series-summability:i},
\[
\sum_{h\in\mathbb Z} C(h)
\le
M\sum_{k,\ell\in\mathbb Z}\|\widetilde J_k(\theta_0)\|\|\widetilde J_\ell(\theta_0)\|
= M\left(\sum_{k\in\mathbb Z}\|\widetilde J_k(\theta_0)\|\right)^2
<\infty. \qedhere
\]
\end{proof}

\begin{lem}[Fourth-order dependence of the score]
\label{lem:score-4th-summable}
Under \cref{asn:stationarity:i,asn:stronger-stationarity} and \cref{asn:Fourier-series-summability:i} at $\theta_0$, there is a constant $\Lambda<\infty$ such that, for all $h,\tau\in\mathbb Z$ and all $t,T$,
\[
\sum_{s\in\mathbb Z}
\left\|\cov\!\left(\ve(\psi_{t,T}(\theta_0)\psi_{t-h,T}(\theta_0)'),\,\ve(\psi_{s,T}(\theta_0)\psi_{s-\tau,T}(\theta_0)')\right)\right\|
\le \Lambda.
\]
\end{lem}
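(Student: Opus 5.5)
The plan is to expand both products of scores into sums over Fourier-coefficient lags, pull the absolutely summable coefficients outside, and reduce the claim to a uniform bound on summed eighth-order moment covariances of $\zeta$. Write $\psi_{t,T}(\theta_0)=\sum_{k\in\mathcal K_{t,T}}\widetilde J_k\ve(\zeta_{t+k}\zeta_t')$, with $\widetilde J_k\equiv\widetilde J_k(\theta_0)$ and the convention that $\psi_{t,T}=0$ outside $1\le t\le T$. Then
\[
\ve(\psi_{t,T}\psi_{t-h,T}')=\sum_{k\in\mathcal K_{t,T}}\sum_{\ell\in\mathcal K_{t-h,T}}(\widetilde J_\ell\otimes\widetilde J_k)\ve\bigl(\ve(\zeta_{t+k}\zeta_t')\ve(\zeta_{t-h+\ell}\zeta_{t-h}')'\bigr),
\]
and the analogous expansion holds for the $s$ term with indices $m,n$. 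By bilinearity and the triangle inequality, the summand for each $s$ is bounded by
\[
\sum_{k,\ell,m,n}\|\widetilde J_k\|\|\widetilde J_\ell\|\|\widetilde J_m\|\|\widetilde J_n\|\,\bigl\|\cov(\Pi_{t;k,\ell,h},\Pi_{s;m,n,\tau})\bigr\|,
\]
where $\Pi_{t;k,\ell,h}$ collects the degree-four products of components of $\zeta_{t+k},\zeta_t,\zeta_{t-h+\ell},\zeta_{t-h}$. Dropping the finite index restrictions only enlarges the sum. Summing over $s$ and using \cref{asn:Fourier-series-summability:i}, $\sum_k\|\widetilde J_k\|<\infty$, it therefore suffices to show
\[
\sup_{t,h,\tau,k,\ell,m,n}\sum_{s\in\mathbb Z}\bigl\|\cov(\Pi_{t;k,\ell,h},\Pi_{s;m,n,\tau})\bigr\|\le M<\infty .
\]
Then $\Lambda=M(\sum_k\|\widetilde J_k\|)^4$.

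The main step is this uniform bound. I would expand $\cov(A_1A_2A_3A_4,B_1B_2B_3B_4)$ by the product-cumulant (Leonov--Shiryaev) formula \citep[Theorem 2.3.2]{Brillinger1981}. It equals the sum, over indecomposable partitions $\nu$ of the $2\times4$ table with rows $\{A_i\}$ and $\{B_i\}$, of products of joint cumulants over the blocks of $\nu$. Since $\zeta$ has mean zero, blocks of size one vanish. Moments up to order eight are finite by \cref{asn:stronger-stationarity}, so every cumulant of order at most eight is bounded uniformly, by stationarity.

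Fix a partition $\nu$. Indecomposability means at least one block contains an element from each row. I would bound every block not containing such a link by its supremum. For the one linking block, I would sum over $s$: by stationarity, its cumulant depends only on time differences, and summability of cumulants of order $2,\dots,8$ in \cref{asn:stronger-stationarity} yields
\[
\sum_s|\cum(\cdot)|\le\sum_{u_1,\dots}|\cum(\zeta_0,\zeta_{u_1},\dots)|<\infty
\]
after summing over the free lag that contains $s$. More precisely, I fix all other time points in the block, which are determined by $t,k,\ell,h$ and by $s$ plus fixed offsets $m,n,\tau$. As $s$ varies, some argument sits at $s+c$. Bounding the sum over $s$ of $|\cum|$ by the sum over one lag index of the full absolutely summable cumulant gives a bound independent of $t,h,\tau,k,\ell,m,n$. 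Since there are finitely many partitions, $M<\infty$.

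The obstacle I expect is bookkeeping the case where the linking block is a second-order covariance. A single $s$-dependent argument paired with a fixed argument still gives $\sum_s|\cov(\zeta_a,\zeta_{s+c})|\le\sum_u\|\Gamma_\zeta(u)\|$, so this is fine. The remaining concern is checking that no partition needs simultaneous summability over two lag indices. It does not, because only the single index $s$ is summed. Uniformity over $t,T$ follows because truncation just sets terms to zero.
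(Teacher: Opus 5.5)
Your proposal is correct and follows essentially the same argument as the paper: both expand into lag sums, take absolute values and drop the edge restrictions, and reduce to a uniform bound on $\sum_s|\cov(P,Q)|$. Both then apply Brillinger's product theorem for cumulants, bound every block except one linking block by a constant, and sum that block's cumulant over $s$ using cumulant summability up to order eight. Your Kronecker-product bookkeeping simply makes explicit the vector case that the paper handles by treating all quantities as scalars.
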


\begin{proof}
To economize on notation, we drop the $\theta_0$ argument and pretend that $\widetilde J$ and $\zeta_t$ are scalars (this is without essential loss of generality, since all variables are finite-dimensional anyway). As in the proof of \cref{lem:Gamma-psiT-envelope}, we extend all lag sums to $k\in\mathbb Z$ after taking absolute values, since dropping the edge restrictions $\mathcal K_{t,T}$ only enlarges the bound, and write $\psi_{t,T}=\sum_{k\in\mathbb Z}\widetilde J_k\zeta_{t+k}\zeta_t$. Then
\[
|\cov\left(\psi_{t,T}\psi_{t-h,T},\psi_{s,T}\psi_{s-\tau,T}\right)|
\leq \sum_{k_1,k_2,\ell_1,\ell_2\in\mathbb Z}
|\widetilde J_{k_1}\widetilde J_{k_2}\widetilde J_{\ell_1}\widetilde J_{\ell_2}|\,
|\cov(P_{t,h,k_1,k_2},Q_{s,\tau,\ell_1,\ell_2})|,
\]
where $P_{t,h,k_1,k_2}\equiv \zeta_{t+k_1}\zeta_t\zeta_{t-h+k_2}\zeta_{t-h}$ and
$Q_{s,\tau,\ell_1,\ell_2}\equiv\zeta_{s+\ell_1}\zeta_s\zeta_{s-\tau+\ell_2}\zeta_{s-\tau}$. The lemma follows from \cref{asn:Fourier-series-summability:i} at $\theta_0$ if we can show that
\begin{equation} \label{eqn:summable-4th-order}
\sum_{s \in \mathbb{Z}} |\cov(P_{t,h,k_1,k_2},Q_{s,\tau,\ell_1,\ell_2})| \leq \Lambda
\end{equation}
uniformly in $h,\tau,k_1,k_2,\ell_1,\ell_2$.

By the product theorem for cumulants \citep[Theorem 2.3.2]{Brillinger1981},
\[\cov(P_{t,h,k_1,k_2},Q_{s,\tau,\ell_1,\ell_2}) = \sum_{\nu} \cum(\zeta_j; j \in \nu_1) \times \cdots \times \cum(\zeta_j; j \in \nu_p),\]
where the sum is over all those (finitely many) partitions $\nu=\nu_1 \cup \dots \cup \nu_p$ of the index list $\lbrace t+k_1,t,t-h+k_2,t-h,s+\ell_1,s,s-\tau+\ell_2,s-\tau\rbrace$ that are ``indecomposable'' in the language of \citeauthor{Brillinger1981}: at least one of the component sets---let it be $\nu_1$ upon reordering---must contain at least one index from the $t$ group and at least one from the $s$ group of indices. By \cref{asn:stronger-stationarity}, there exists $C\in [1,\infty)$ (depending only on $\E[\zeta_0^8]$) such that $|\cum(\zeta_j; j \in \nu_b)| \leq C$ for all component sets $\nu_b$. Thus,
\begin{equation} \label{eqn:cumulant_decomposition_bound}
|\cov(P_{t,h,k_1,k_2},Q_{s,\tau,\ell_1,\ell_2})| \leq C^6 \sum_{\nu} |\cum(\zeta_j; j \in \nu_1)|.
\end{equation}
By stationarity (\cref{asn:stationarity:i}) and the above-mentioned structure of $\nu_1$, each $\cum(\zeta_j; j \in \nu_1)$ can be written as
\[\cum(\zeta_0,\zeta_{\delta_1},\dots,\zeta_{\delta_{q_1}},\zeta_{s-t+\delta_{q_1+1}},\dots,\zeta_{s-t+\delta_{q_2}})\]
for some integers $\delta_1,\dots,\delta_{q_1},\delta_{q_1+1},\dots,\delta_{q_2}$, where $0 \leq q_1 < q_2 \leq 7$. The above cumulant is absolutely summable in $s$:
\[\sum_{s \in \mathbb{Z}}|\cum(\zeta_0,\zeta_{\delta_1},\dots,\zeta_{\delta_{q_1}},\zeta_{s-t+\delta_{q_1+1}},\dots,\zeta_{s-t+\delta_{q_2}})| \leq \sum_{\widetilde{\ell}_1,\dots,\widetilde{\ell}_{q_2} \in \mathbb{Z}}|\cum(\zeta_0,\zeta_{\widetilde\ell_1},\dots,\zeta_{\widetilde\ell_{q_2}})|,\]
and the latter sum is finite by \cref{asn:stronger-stationarity}; it is also independent of $h,\tau,k_1,k_2,\ell_1,\ell_2,t,T$. It follows from this result and from \eqref{eqn:cumulant_decomposition_bound} that \eqref{eqn:summable-4th-order} holds. This concludes the proof.
\end{proof}

\begin{remark}[Time-domain CLT]
\label{rem:time-domain-CLT}
Standard Fourier transform calculations yield
\[\int_0^{2\pi}\widetilde{J}(\omega;\theta_0)\ve(\widehat{S}(\omega))\,d\omega =\frac{1}{T}\sum_{t=1}^T \psi_{t,T}(\theta_0).\]
Since $\widehat g(\theta_0)= \frac{2\pi}{T}\sum_{j=0}^{T-1}\widetilde J(\omega_j;\theta_0)\ve(\widehat S(\omega_j))$ is a Riemann approximation to the integral on the left-hand side above, it can be shown that they differ by $O_p(T^{-1})$ under \cref{asn:stationarity,asn:Fourier-series-summability}. Moreover, by \cref{asn:stationarity,asn:Fourier-series-summability:iii}, step (iv) in the proof of \cref{thm:HAC-Omega} gives that
$\frac{1}{T}\sum_{t=1}^T \left(\psi_t(\theta_0)-\psi_{t,T}(\theta_0)\right)=o_p(T^{-1/2})$. Putting these together,
\[\widehat g(\theta_0)=\frac{1}{T}\sum_{t=1}^T \psi_t(\theta_0)+o_p(T^{-1/2}).\]
Consequently, the spectral mean CLT in \cref{asn:clt} follows from any conventional time-domain CLT for the stationary process $\{\psi_t(\theta_0)\}$.
\end{remark}

% ============================================================
\subsection{Specialization to main application}\label{sec:specialization-our-setup}
Now we specialize the general spectral GMM results to the application in \cref{sec:framework,sec:theory}. Recall that this framework corresponds to the spectral weight function $\widetilde J_\text{special}(\omega;\theta) \equiv M_z\otimes \left(I_{d_y},\,-J(\omega;\theta),\,0_{d_y\times d_z}\right)$, as discussed at the beginning of \cref{app:proofs}.

\subsubsection{Consistency}
\begin{remark}[Conjugate symmetry]
\label{rem:jacobian-real-implies-J-conjugate-symmetry}
Since $J_k(\theta)$ is real, \cref{asn:J-conjugate-symmetry} is satisfied for $\widetilde J_\text{special}(\omega;\theta)$.
\end{remark}

\begin{lem}[SSJ assumptions imply $\widetilde J$ regularity]
\label{lem:lowlevel-to-tildeJ}
Under \cref{asn:parameter-space,asn:jacobian}, $\widetilde J_\text{special}(\omega;\theta)$ satisfies \cref{asn:J-regularity}.
\end{lem}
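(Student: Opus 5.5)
The plan is to verify the two parts of \cref{asn:J-regularity} directly from the time-domain summability in \cref{asn:jacobian:ii}, using the Weierstrass M-test. The key observation is that $\widetilde J_\text{special}(\omega;\theta)=M_z\otimes(I_{d_y},-J(\omega;\theta),0)$ depends on $(\omega,\theta)$ only through the block $J(\omega;\theta)=\sum_{k\ge0}e^{\iota k\omega}J_k(\theta)$. Since $\|A\otimes B\|=\|A\|\|B\|$ for the Frobenius norm and $\|M_z\|=\sqrt{d_z}$, every bound on $J$ transfers to $\widetilde J_\text{special}$ up to a constant. For the same reason, differences in $\theta$ satisfy $\|\widetilde J_\text{special}(\omega;\widetilde\theta)-\widetilde J_\text{special}(\omega;\theta)\|=\sqrt{d_z}\,\|J(\omega;\widetilde\theta)-J(\omega;\theta)\|$.

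\emph{Part (i).} Fix $\theta$. The terms satisfy $\|e^{\iota k\omega}J_k(\theta)\|\le \sup_{\theta\in\Theta}\|J_k(\theta)\|$, and these bounds are summable by \cref{asn:jacobian:ii}. The M-test therefore gives uniform convergence in $\omega$ of a series of continuous functions. So $\omega\mapsto J(\omega;\theta)$ is continuous on $[0,2\pi]$, hence piecewise continuous, and it is bounded by $\sum_k\sup_\theta\|J_k(\theta)\|<\infty$. The identity block and the zero block are constant, so $\widetilde J_\text{special}(\cdot;\theta)$ is bounded and continuous as well.

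\emph{Part (ii).} For any $\theta,\widetilde\theta$ and any $K$, I split the series at $K$. The tail is bounded uniformly in $\omega$ by
\[\|J(\omega;\widetilde\theta)-J(\omega;\theta)\|\le \sum_{k\le K}\|J_k(\widetilde\theta)-J_k(\theta)\| + 2\sum_{k>K}\sup_{\vartheta\in\Theta}\|J_k(\vartheta)\|.\]
Given $\varepsilon>0$, I first choose $K$ so that the tail term is below $\varepsilon/2$. Each $J_k$ is continuous on the compact set $\Theta$ by \cref{asn:jacobian:i,asn:parameter-space}, and hence uniformly continuous there. The finitely many terms $k\le K$ can therefore be made small by taking $\|\widetilde\theta-\theta\|\le\delta$ for $\delta$ small enough. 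The bound holds uniformly in $\omega$, so $\Delta_{\widetilde J}(\delta)\le\sqrt{d_z}\,\varepsilon$ for such $\delta$, which proves $\Delta_{\widetilde J}(\delta)\to0$.

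A shorter route for part (ii) would use the mean value theorem with the summable derivative bound in \cref{asn:jacobian:ii}. However, $\Theta$ need not be convex, so I would instead use the truncation argument above, which only needs continuity and compactness. That choice of route is the only delicate point; the rest is routine bookkeeping.
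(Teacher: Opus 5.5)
Your proof is correct and is essentially the paper's argument. Both rest on the same ingredients: the Weierstrass M-test via the uniform summability in \cref{asn:jacobian:ii}, continuity of each $J_k$ from \cref{asn:jacobian:i}, compactness of $\Theta$, and the fact that $\widetilde J_\text{special}$ depends on $(\omega,\theta)$ only through $J$. The only difference is packaging: the paper obtains uniform convergence jointly on $[0,2\pi]\times\Theta$ and applies Heine--Cantor to the limit $J$ to get part (ii), whereas you unpack this into an explicit head/tail $\varepsilon/2$ argument with Heine--Cantor applied to finitely many $J_k$ on $\Theta$, which is equivalent (and, like the paper's route, needs no convexity of $\Theta$).
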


\begin{proof}
Let $\Xi\equiv [0,2\pi]\times\Theta$. By \cref{asn:parameter-space}, $\Xi$ is compact.
Let $m_k(\omega;\theta)=e^{\iota\omega k}J_k(\theta)$ for each $k\ge 0$. By \cref{asn:jacobian:i}, $m_k$ is continuous on $\Xi$, and by \cref{asn:jacobian:ii},
\[
\sum_{k=0}^\infty \sup_{(\omega,\theta)\in\Xi}\left\|m_k(\omega;\theta)\right\|
=\sum_{k=0}^\infty \sup_{\theta\in\Theta}\left\|J_k(\theta)\right\|
<\infty.
\]
Hence, by the Weierstrass M-test, $J(\omega;\theta)\equiv \sum_{k=0}^\infty m_k(\omega;\theta)=\sum_{k=0}^\infty e^{\iota\omega k}J_k(\theta)$ converges uniformly on $\Xi$. Since each $m_k$ is continuous, $J$ is continuous on $\Xi$, and by Heine-Cantor it is uniformly continuous on $\Xi$.

For each $\theta\in\Theta$, continuity of $\omega\mapsto J(\omega;\theta)$ on $[0,2\pi]$ implies that $J(\cdot;\theta)$ is bounded and Riemann integrable in $\omega$.
Moreover, the uniform continuity of $J$ on $\Xi$ implies
\[
\sup_{\omega\in[0,2\pi]}\left\|J(\omega;\widetilde\theta)-J(\omega;\theta)\right\|\to 0,
\quad\text{as }\left\|\widetilde\theta-\theta\right\|\to 0,
\]
so $J(\cdot;\theta)$ is uniformly continuous in $\theta$ uniformly over $\omega$.

Since $\widetilde{J}_\text{special}$ is an affine transformation of $J$, the claim follows.
\end{proof}

\begin{proof}[Proof of \cref{prop:consistency}]
Note first that
\[g(\theta) = \int_0^{2\pi} \widetilde{J}_\text{special}(\omega;\theta)\ve(S(\omega))\,d\omega = \ve\left(\cov(y_t,z_t)-\sum_{k=0}^\infty J_k(\theta)\cov(x_{t+k},z_t)\right),\]
after interchanging the $k$-sum with the $\omega$-integral (justified by \cref{asn:stationarity,asn:jacobian}). Hence, \cref{asn:moment_cond,asn:identification} imply \cref{asn:identification-general} with $\widetilde{J}=\widetilde{J}_\text{special}$. \cref{rem:jacobian-real-implies-J-conjugate-symmetry} and \cref{lem:lowlevel-to-tildeJ} therefore imply that all the assumptions of \cref{thm:consistency} hold.
\end{proof}

\subsubsection{Asymptotic normality}
\begin{remark}[SSJ assumptions imply $\partial \widetilde J$ regularity]
\label{rem:lowlevel-to-tildeJ-deriv}
Under \cref{asn:parameter-space,asn:jacobian} and interiority of $\theta_0$ in $\Theta$, $\widetilde J_\text{special}$ satisfies \cref{asn:J-deriv} with $\Theta_0$ given by any compact neighborhood of $\theta_0$ contained in $\Theta$. The argument uses the same M-test and Heine-Cantor steps as in \cref{lem:lowlevel-to-tildeJ}.
\end{remark}

\begin{proof}[Proof of \cref{prop:AN}]
As in the proof of \cref{prop:consistency}, \cref{asn:moment_cond,asn:identification,asn:stationarity,asn:jacobian} imply \cref{asn:identification-general}, and a similar argument shows that \cref{asn:invertibility,asn:stationarity,asn:jacobian} imply \cref{asn:invertibility-general}, both with $\widetilde{J}=\widetilde{J}_\text{special}$. Hence, \cref{rem:jacobian-real-implies-J-conjugate-symmetry,rem:lowlevel-to-tildeJ-deriv} and \cref{lem:lowlevel-to-tildeJ} imply that all the assumptions of \cref{thm:AN} hold.
\end{proof}

\subsubsection{HAC}
\begin{proof}[Proof of \cref{prop:HAC-Omega}]
Note first that the general HAC estimator $\widehat\Omega$ from \cref{sec:est-var} reduces to the special HAC estimator in \cref{app:hac} when $\widetilde{J}=\widetilde{J}_\text{special}$, since
\[\widetilde{J}_{k,\text{special}}(\theta) \equiv \frac{1}{2\pi}\int_0^{2\pi} e^{-\iota\omega k} \widetilde{J}_\text{special}(\omega;\theta)\,d\omega = M_z \otimes \Big(\1\{k=0\} \times I,-\1\{k\geq 0\} \times J_k(\theta),0\Big).\]
Next, \cref{rem:jacobian-real-implies-J-conjugate-symmetry} gives \cref{asn:J-conjugate-symmetry}.
Moreover, with $\widetilde{J}=\widetilde{J}_\text{special}$, \cref{asn:moment_cond,asn:identification,asn:stationarity,asn:jacobian} imply \cref{asn:identification-general}, \cref{asn:invertibility,asn:stationarity,asn:jacobian} imply \cref{asn:invertibility-general}, \cref{asn:jacobian} implies \cref{asn:Fourier-series-summability:i,asn:Fourier-series-summability:ii}, and \cref{asn:jacobian-tail-summability} implies \cref{asn:Fourier-series-summability:iii}. Thus, the conclusion follows from \cref{thm:HAC-Omega}.
\end{proof}

\clearpage
\phantomsection
\addcontentsline{toc}{section}{References}
\bibliography{ref}

\end{appendices}

\end{document}